\newcommand{\removed}[1]{}
\newcommand{\rem}{\mathbf}
\newcommand{\bbbn}{\mathbb{N}}
\newcommand{\ra}{\rightarrow}
\newcommand{\rsa}{\rightsquigarrow}
\definecolor{DarkRed}{RGB}{182,11,1}
\newtheorem{observation}{Observation}
\theoremstyle{theorem}
\newtheorem{proposition}{Proposition}
\title{On the Transformation Capability of Feasible Mechanisms for Programmable Matter}
\titlerunning{On the Transformation Capability of Feasible Mechanisms for Programmable Matter} %optional, in case that the title is too long; the running title should fit into the top page column
\author[1]{Othon Michail}
\author[2]{George Skretas}
\author[3]{Paul G. Spirakis}
\affil[1]{Department of Computer Science, University of Liverpool, UK\\
  \texttt{Othon.Michail@liverpool.ac.uk}}
\affil[2]{Computer Engineering and Informatics Department (CEID), University of Patras, Greece\\
  \texttt{skretas@ceid.upatras.gr}}
\affil[3]{Department of Computer Science, University of Liverpool, UK\\
  Computer Technology Institute and Press ``Diophantus'' (CTI), Patras, Greece\\
  \texttt{P.Spirakis@liverpool.ac.uk}}
\authorrunning{O. Michail, G. Skretas, and P.\,G. Spirakis} %mandatory. First: Use abbreviated first/middle names. Second (only in severe cases): Use first author plus 'et. al.'
\keywords{programmable matter, transformation, reconfigurable robotics, shape formation, complexity, distributed algorithms}% mandatory: Please provide 1-5 keywords
\begin{document}

\maketitle

\begin{abstract}
In this work, we study theoretical models of \emph{programmable matter} systems. The systems under consideration consist of spherical modules, kept together by magnetic forces and able to perform two minimal mechanical operations (or movements): \emph{rotate} around a neighbor and \emph{slide} over a line. In terms of modeling, there are $n$ nodes arranged in a 2-dimensional grid and forming some initial \emph{shape}. The goal is for the initial shape $A$ to \emph{transform} to some target shape $B$ by a sequence of movements. Most of the paper focuses on \emph{transformability} questions, meaning whether it is in principle feasible to transform a given shape to another. We first consider the case in which only rotation is available to the nodes. Our main result is that deciding whether two given shapes $A$ and $B$ can be transformed to each other, is in $\rem{P}$. We then insist on rotation only and impose the restriction that the nodes must maintain global connectivity throughout the transformation. We prove that the corresponding transformability question is in $\rem{PSPACE}$ and study the problem of determining the minimum \emph{seeds} that can make feasible, otherwise infeasible transformations. Next we allow both rotations and slidings and prove universality: any two connected shapes $A,B$ of the same order, can be transformed to each other without breaking connectivity. The worst-case number of movements of the generic strategy is $\Omega(n^2)$. We improve this to $O(n)$ parallel time, by a pipelining strategy, and prove optimality of both by matching lower bounds. In the last part of the paper, we turn our attention to distributed transformations. The nodes are now distributed processes able to perform communicate-compute-move rounds. We provide distributed algorithms for a general type of transformations.
\end{abstract}

\section{Introduction}
\label{sec:intro}

\emph{Programmable matter} refers to any type of matter that can \emph{algorithmically} change its physical properties. ``Algorithmically'' means that the change (or \emph{transformation}) is the result of executing an \emph{underlying program}. Depending on the implementation, the program could either be a \emph{centralized algorithm} capable of controlling the whole programmable matter system (\emph{external} control) or a \emph{decentralized protocol} stored in the material itself and executed by various sub-components of the system (\emph{internal} control). For a concrete example, imagine a material formed by a collection of spherical nanomodules kept together by magnetic forces. Each module is capable of storing (in some internal representation) and executing a simple program that handles communication with nearby modules and that controls the module's electromagnets, in a way that allows the module to \emph{rotate} or \emph{slide} over neighboring modules. Such a material would be able to adjust its \emph{shape} in a programmable way. Other examples of physical properties of interest for real applications would be connectivity, color \cite{LYS01,CLS11}, and strength of the material.

Computer scientists, nanoscientists, and engineers are more and more joining their forces towards the development of such programmable materials and have already produced some first impressive outcomes (even though it is evident that there is much more work to be done in the direction of real systems), such as programmed DNA molecules that self-assemble into desired structures \cite{Ro06,Do12} and large collectives of tiny identical robots that orchestrate resembling a single multi-robot organism (Kilobot system) \cite{RCN14}. Other systems for programmable matter include the Robot Pebbles \cite{GKR10}, consisting of 1cm cubic programmable matter modules able to form 2-dimensional (usually abbreviated ``2D'') shapes through self-disassembly, and the Millimotein \cite{KCL12}, a chain of programmable matter which can fold itself into digitized approximations of arbitrary 3-dimensional (usually abbreviated ``3D'') shapes. Ambitious long-term applications of programmable materials include molecular computers, collectives of nanorobots injected into the human circulatory system for monitoring and treating diseases, or even self-reproducing and self-healing machines.

Apart from the fact that systems work is still in its infancy, there is also an apparent lack of unifying formalism and theoretical treatment. The following are some of the very few exceptions aiming at understanding the fundamental possibilities and limitations of this prospective. The area of \emph{algorithmic self-assembly} tries to understand how to program molecules (mainly DNA strands) to manipulate themselves, grow into machines and at the same time control their own growth \cite{Do12}. The theoretical model guiding the study in algorithmic self-assembly is the Abstract Tile Assembly Model (aTAM) \cite{Wi98,RW00} and variations. Recently, a model, called the \emph{nubot} model, was proposed for studying the complexity of self-assembled structures with active molecular components \cite{WCG13}. This model ``is inspired by biology's fantastic ability to assemble biomolecules that form systems with complicated structure and dynamics, from molecular motors that walk on rigid tracks and proteins that dynamically alter the structure of the cell during mitosis, to embryonic development where large-scale complicated organisms efficiently grow from a single cell'' \cite{WCG13}. Another very recent model, called the \emph{Network Constructors} model, studied what stable networks can be constructed by a population of finite-automata that interact randomly like molecules in a well-mixed solution and can establish bonds with each other according to the rules of a common small protocol \cite{MS16a}. The development of Network Constructors was based on the \emph{Population Protocol} model of Angluin \emph{et al.} \cite{AADFP06}, that does not include the capability of creating bonds and focuses more on the computation of functions on inputs. A very interesting fact about population protocols is that they are formally equivalent to \emph{chemical reaction networks} (CRNs), ``which model chemistry in a \emph{well-mixed solution} and are widely used to describe information processing occurring in natural cellular regulatory networks'' \cite{Do14}. Also the recently proposed \emph{Amoebot} model, ``offers a versatile framework to model self-organizing particles and facilitates rigorous algorithmic research in the area of programmable matter'' \cite{DDGR14,DGR15,DGPR16}. An indication of the potential that the research community sees in this effort, is the 1st Dagstuhl Seminar on ``Algorithmic Foundations of Programmable Matter'', which took place in June 2016 and attracted leading scientist (both theoreticians and practitioners) from Algorithms, Distributed Computing, Robotics, and DNA Self-Assembly, with the aim at joining their forces to push forward this emerging subject.

Each theoretical approach, and to be more precise, each individual model, has its own beauty and has lead to different insights and developments regarding potential programmable matter systems of the future and in some cases to very intriguing technical problems and open questions. Still, it seems that the right way for theory to boost the development of more refined real systems is to reveal the \emph{transformation capabilities of mechanisms and technologies that are available now}, rather than by exploring the unlimited variety of theoretical models that are not expected to correspond to a real implementation in the near future.

In this paper, we follow such an approach, by studying the transformation capabilities of models for programmable matter, which are based on minimal mechanical capabilities, easily implementable by existing technology.

\subsection{Our Approach}

We study a minimal programmable matter system consisting of $n$ cycle-shaped modules, with each module (or \emph{node}) occupying at any given time a cell of the 2D grid (no two nodes can occupy the same cell at the same time). Therefore, the composition of the programmable matter systems under consideration is discrete. Our main question throughout is whether an initial arrangement of the material can transform (either in principle, e.g., by an external authority, or by itself) to some other target arrangement. In more technical terms, we are provided with an \emph{initial shape} $A$ and a \emph{target shape} $B$ and we are asked whether $A$ \emph{can be transformed to} $B$ via a sequence of \emph{valid} transformation steps. Usually, a step consists either of a \emph{valid movement} of a single node (in the \emph{sequential case}) or of more than one nodes at the same time (in the \emph{parallel case}). We consider two quite primitive types of movement. The first one, called \emph{rotation}, allows a node to rotate 90\degree\ around one of its neighbors either clockwise or counterclockwise (see, e.g., Figure \ref{fig:line-folding} in Section \ref{sec:rotation}) and the second one, called \emph{sliding}, allows a node to slide by one position ``over'' two neighboring nodes (see, e.g., Figure \ref{fig:sliding-definition} in Section \ref{sec:rotation-sliding}). Both movements succeed only if the whole direction of movement is free of obstacles (i.e., other nodes blocking the way). More formal definitions are provided in Section \ref{sec:prel}. One part of the paper focuses on the case in which only rotation is available to the nodes and the other part studies the case in which both rotation and sliding are available. The latter case has been studied to some extent in the past in the, so called, \emph{metamorphic systems} \cite{DSY04a,DSY04b,DP04}, which makes those studies the closest to our approach.

For rotation only, we introduce the notion of \emph{color-consistence} and prove that if two shapes are not color-consistent then they cannot be transformed to each other. On the other hand color-consistence does not guarantee transformability as there is an infinite set of pairs $(A,B)$ such that $A$ and $B$ are color consistent but still they cannot be transformed to each other. At this point, observe that if $A$ can be transformed to $B$ then the inverse is also true, as all movements considered in this paper are \emph{reversible}. We distinguish two main types of transformations: those that are allowed to break the connectivity of the shape during the transformation and those that are not and call the corresponding problems {\sc Rot-Transformability} and {\sc RotC-Transformability}. We prove that {\sc RotC-Transformability} is a proper subset of {\sc Rot-Transformability} by showing that a line-folding problem is in {\sc Rot-Transformability}$\setminus${\sc RotC-Transformability}. Our main result regarding {\sc Rot-Transformability} is that {\sc Rot-Transformability} $\in \rem{P}$. To prove polynomial-time decidability, we prove that two shapes $A$ and $B$ are transformable to each other iff both $A$ and $B$ have at least one movement available (without any movement available, a shape is \emph{blocked} and can only trivially transform to itself). Therefore, transformability reduces to checking the availability of a movement in the initial and target shapes. The idea is that if a movement is available in a shape $A$, then there is always a way to extract from $A$ a \emph{2-line} (i.e., two neighboring nodes). Such a 2-line can move freely in any direction and can also extract further nodes to form a \emph{4-line}. A 4-line in turn can also move freely to any direction and is also capable of extracting nodes from the shape and transferring them, one at a time, to any desired target position. In this manner, the 4-line can transform $A$ to a line with leaves around it that is color-consistent to $A$ (based on a proposition that we prove, stating that any shape has a corresponding color-consistent line-with-leaves). Similarly, $B$, given that it is color-consistent with $A$, can be transformed by the same approach to exactly the same line-with-leaves, and then, by reversibility, it follows that $A$ and $B$ can be transformed to each other by using the line-with-leaves as an intermediate. This set of transformations do not guarantee the preservation of connectivity during the transformation. That is, even though the initial and target shapes considered are connected shapes, the shapes formed at intermediate steps of the transformation may very well be disconnected shapes.

We next study {\sc RotC-Transformability}, in which again the only available movement is rotation, but now connectivity of the material has to be preserved throughout the transformation. The property of preserving the connectivity is expected to be a crucial property for programmable matter systems, as it allows the material to maintain coherence and strength, to eliminate the need for wireless communication, and, finally, enables the development of more effective power supply schemes, in which the modules can share resources or in which the modules have no batteries but are instead constantly supplied with energy by a centralized source (or by a supernode that is part of the material itself). Such benefits can lead to simplified designs and potentially to reduced size of individual modules. We first prove that {\sc RotC-Transformability} $\in \rem{PSPACE}$. The rest of our results here are strongly based on the notion of a \emph{seed}. This stems from the observation that a large set of infeasible transformations become feasible by introducing to the initial shape an additional, and usually quite small, seed; i.e., a small shape that is being attached to some point of the initial shape. In particular, we prove that a \emph{3-line seed}, if placed appropriately, is sufficient to achieve folding of a line (otherwise impossible). We then investigate seeds that could serve as components capable of traveling the perimeter of an arbitrary connected shape $A$. Such shapes are very convenient as they are capable of ``simulating'' the \emph{universal transformation} techniques that are possible if we have both rotation and sliding movements available (discussed in the sequel). To this end, we prove that all seeds of size $\leq 4$ cannot serve for this purpose, by proving that they cannot even walk the perimeter of a simple line shape. Then we focus on a \emph{6-seed} and prove that such a seed is capable of walking the perimeter of a large family of shapes, called \emph{discrete-convex} shapes. This is a first indication, that there might be a large family of shapes that can be transformed to each other with rotation only and without breaking connectivity, by extracting a 6-seed and then exploiting to transfer nodes to the desired positions. To further support this, we prove that the 6-seed is capable of performing such transfers, by detaching pairs of nodes from the shape, attaching them to itself, thus forming an \emph{8-seed} and then being still capable to walk the perimeter of the shape.

Next, we consider the case in which both rotation and sliding are available and insist on connectivity preservation. We first provide a proof that this combination of simple movements is universal w.r.t. transformations, as any pair of connected shapes $A$ and $B$ of the same order, can be transformed to each other without ever breaking the connectivity throughout the transformation (a first proof of this fact had already appeared in \cite{DP04}). This generic transformation requires $\Theta(n^2)$ sequential movements in the worst case. By a potential-function argument we show that no transformation can improve on this worst-case complexity for some specific pairs of shapes and this lower bound is independent of connectivity preservation; it only depends on the inherent \emph{transformation-distance} between the shapes. To improve on this, either some sort of parallelism must be employed or more powerful movement mechanisms, e.g., movements of whole sub-shapes in one step. We investigate the former approach, and prove that there is a \emph{pipelining} general transformation strategy that improves the time to $O(n)$ (parallel time). We also give a matching $\Omega(n)$ lower bound. On the way, we also show that this parallel complexity is feasible even if the nodes are labeled, meaning that individual nodes must end up in specific positions of the target-shape.

Afterwards, we propose a distributed algorithm that transforms any compact shapes into a line using the rotation-sliding movement without breaking the connectivity of the shape. We note that a unique leader is required, each node has 4 ports and we aim to minimise the memory as much as possible. The communication is synchronous with each node broadcasting messages to its neighbours each turn. Following this, we propose an algorithm that transforms any shape into a line. We have the same requirements and communication and our goal again to minimize the amount of memory required in the system.

In Section \ref{subsec:further-related} we discuss further related literature. Section \ref{sec:prel} brings together all definitions and basic facts that are used throughout the paper. In Section \ref{sec:rotation}, we study programmable matter systems equipped only with rotation movement. In Section \ref{sec:rotation-connectivity}, we insist on rotation only, but additionally require from the material to maintain connectivity throughout the transformation. In Section \ref{sec:rotation-sliding}, we investigate the combined effect of rotation and sliding movements. Connectivity can always be preserved in this case. Section \ref{sec:distributed} focuses on distributed transformations having access to both rotation and sliding. Finally, in Section \ref{sec:conclusions} we conclude and give further research directions that are opened by our work.

\subsection{Further Related Work}
\label{subsec:further-related}

\noindent\textbf{Mobile and Reconfigurable Robotics.} There is a very rich literature on  mobile and reconfigurable robotics. In mobile (swarm) robotics systems and models, as are, for example, the models for robot gathering \cite{CFPS03,KKM10} and deployment \cite{SMO16} (cf., also \cite{FPS12}), geometric pattern formation \cite{SY99,DFSY15}, and connectivity preservation \cite{CKLL09}, the modules are usually robots equipped with some mobility mechanism making them free to move in any direction of the plane (and in some cases even continuously). In contrast, we only allow discrete movements relative to neighboring nodes. Modular self-reconfigurable robotic systems form an area on their own, focusing on aspects like the design, motion planning, and control of autonomous robotic modules \cite{BKR04,YSS07,ABD13,YUY16}. The model considered in this paper bears similarities to some of the models that have appeared in this area. The main difference is that we follow a more computation-theoretic approach, while the studies in this area usually follow a more applied perspective.\\

\noindent\textbf{Puzzles.} Puzzles are combinatorial one-player games, usually played on some sort of board. Typical questions of interest are whether a given puzzle is solvable and finding the solution with the fewest number of moves. Answers to these questions range from being in $\rem{P}$ up to $\rem{PSPACE}$-hard or even undecidable when some puzzles are generalized to the entire plane with unboundedly many pieces \cite{De01,HD05}. Famous examples of puzzles are the Fifteen Puzzle, Sliding Blocks, Rush Hour, Pushing Blocks, and Solitaire. Even though none of these is equivalent to the model considered here, the techniques that have been developed for solving and characterizing puzzles may turn very useful in the context of programmable matter systems. Actually, in some cases, such puzzles show up as special cases of the transformation problems considered here (e.g., the Fifteen Puzzle may be obtained if we restrict a transformation of node-labelled shapes to take place in a 4x4 square region).\\

\noindent\textbf{Passive Systems.} Most of the models discussed so far including the model under consideration in this paper, are \emph{active} models, meaning that the movements are in the complete control of the algorithm. In contrast, in \emph{passive} models the underlying algorithm cannot control the movements but in most cases it can decide in some way which movements to accept and which not. The typical assumption is that the movements are controlled by a \emph{scheduler} (possibly adversarial), which represents some dynamicity of the system or the environment. Population Protocols \cite{AADFP06, AAER07} and variants are a typical such example. For example, in Network Constructors \cite{MS16a} nodes move around randomly due to the dynamicity of the environment and when two of them interact the protocol can decide whether to establish a connection between them; that is, the protocol has some \emph{implicit} control of the system's dynamics. Another passive model, inspired from biological multicellular processes, was recently proposed by Emek and Uitto \cite{EU16}. Most models from the theory of algorithmic self-assembly, like the Abstract Tile Assembly Model (aTAM) \cite{Wi98,RW00}, fall also in this category. In this paper, we are only concerned with active systems. Hybrid models combining active capabilities and passive dynamics, remain an interesting open research direction.

\section{Preliminaries}
\label{sec:prel}

The programmable matter systems considered in this paper operate on a 2-dimensional square grid. As usual, each position (or \emph{cell}) of the grid is uniquely referred to by its $x$ and $y$ coordinates, where $x\geq 0$ corresponds to the row and $y\geq 0$ to the column. Such a system consists of a set $V$ of $n$ \emph{modules}, called \emph{nodes} throughout. Each node may be viewed as a spherical module fitting inside a cell of the grid. At any given time, each node $u\in V$ occupies a cell $o(u)=(o_x(u),o_y(u))=(i,j)$ (omitting the time index for simplicity here and also whenever clear from context) and no two nodes may occupy the same cell. In some cases, when a cell is occupied by a node we may refer to that cell by a color, e.g., \emph{black}, and when a cell is not occupied (i.e., it is empty) we usually refer to it as \emph{white}. At any given time $t$, the positioning of nodes on the grid defines an undirected \emph{neighboring relation} $E(t)\subset V\times V$, where $\{u,v\}\in E$ iff $o_x(u)=o_x(v)$ and $|o_y(u)-o_y(v)|=1$ or $o_y(u)=o_y(v)$ and $|o_x(u)-o_x(v)|=1$, that is, if $u$ and $v$ are either \emph{horizontal} or \emph{vertical} neighbors on the grid, respectively. It is immediate to observe that every node can have at most 4 neighbors at any given time. A more informative way to define the system at a given time $t$, and thus often more convenient, is as a mapping $P_t\colon\bbbn_{\geq 0}\times \bbbn_{\geq 0}\to \{0,1\}$ where $P_t(i,j)=1$ iff cell $(i,j)$ is occupied by a node.

At any given time $t$, $P_t^{-1}(1)$ defines a \emph{shape}. Such a shape is called \emph{connected} if $E(t)$ defines a connected graph. A connected shape is called \emph{convex} if for any two occupied cells, the line that connects their centers does not pass through an empty cell. We call a shape \emph{discrete-convex} if for any two occupied cells, belonging either to the same row or the same column, the line that connects their centers does not pass through an empty cell; i.e., in the latter we exclude diagonal lines.

In general, shapes can \emph{transform} to other shapes via a sequence of one or more \emph{movements} of individual nodes. Time consists of discrete \emph{steps} (or \emph{rounds}) and in every step, zero or more movements may occur, possibly following a computation sub-step either centralized or distributed, depending on the application. In the \emph{sequential} case, at most one movement may occur per step, and in the \emph{parallel} case any number of ``valid'' movements may occur in parallel. \footnote{By ``valid'', we mean here subject to the constraint that their whole movement paths correspond to pairwise disjoint sub-areas of the grid.} We consider two types of movements: (i) \emph{rotation} and (ii) \emph{sliding}. In both movements, a single node moves relative to one or more neighboring nodes as we explain now.

A single \emph{rotation} movement of a node $u$ is a 90\degree\ \emph{rotation} of $u$ around one of its neighbors. Let $(i,j)$ be the current position of $u$ and let its neighbor be $v$ occupying the cell $(i-1,j)$ (i.e., lying below $u$). Then $u$ can \emph{rotate} 90\degree\ clockwise (counterclockwise) around $v$ iff the cells $(i,j+1)$ and $(i-1,j+1)$ ($(i,j-1)$ and $(i-1,j-1)$, respectively) are both empty. By rotating the whole system 90\degree, 180\degree, and 270\degree , all possible rotation movements are defined analogously. See Figure \ref{fig:rotation-definition}.

\begin{figure}[!hbtp]
\centering{
\includegraphics[width=0.5\textwidth]{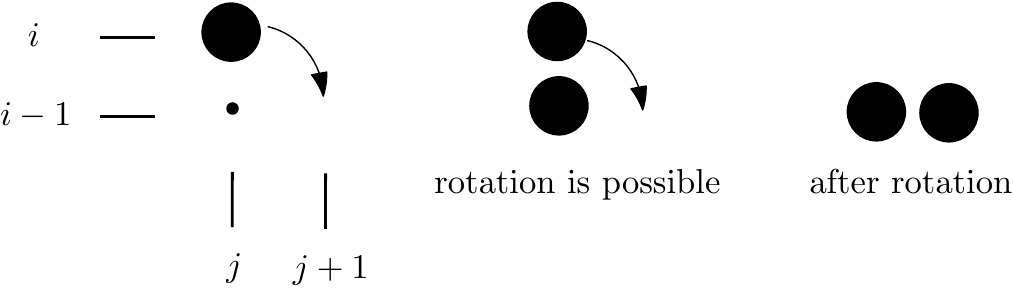}
}
\caption{Rotation to the right and down. A node on the black dot (in row $i-1$) and empty cells at positions $(i,j+1)$ and $(i-1,j+1)$ are required for this movement. Then an example movement is given.} \label{fig:rotation-definition}
\end{figure}

A single \emph{sliding} movement of a node $u$ is a one-step horizontal or vertical movement ``over'' a horizontal or vertical line of (neighboring) nodes of length 2. In particular, if $(i,j)$ is the current position of $u$, then $u$ can \emph{slide} rightwards to position $(i,j+1)$ iff $(i,j+1)$ is not occupied and there exist nodes at positions $(i-1,j)$ and $(i-1,j+1)$ or at positions $(i+1,j)$ and $(i+1,j+1)$, or both. Precisely the same definition holds for up, left, and down sliding movements by rotating the whole system 90\degree, 180\degree, and 270\degree\ counterclockwise, respectively. Intuitively, a node can slide one step in one direction, if there are two consecutive nodes either immediately ``below'' or immediately ``above'' that direction that can assist the node slide (see Figure \ref{fig:sliding-definition}). \footnote{Observe that there are plausible variants of the present definition of sliding, such as to slide with nodes at $(i-1,j)$ and $(i+1,j+1)$ or even with a single node at $(i-1,j)$ or at $(i+1,j)$. In this paper, though, we only focus on our original definition.}

\begin{figure}[!hbtp]
\centering{
\includegraphics[width=0.5\textwidth]{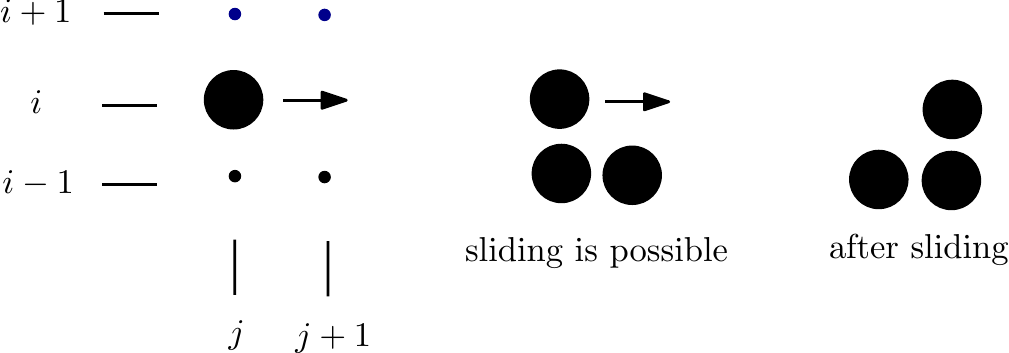}
}
\caption{Sliding to the right. Either the two blues (dots in row $i+1$) or the two blacks (dots in row $i-1$) and an empty cell at position $(i,j+1)$ are required for this movement. Then an example movement with the two blacks is given.} \label{fig:sliding-definition}
\end{figure}

Let $A$ and $B$ be two shapes. We say that \emph{$A$ transforms to $B$ via a movement $m$} (which can be either a rotation or a sliding), denoted $A \stackrel{m}\rightarrow B$, if there is a node $u$ in $A$ such that if $u$ applies $m$, then the shape resulting after the movement is $B$ (possibly after rotations and translations of the resulting shape, depending on the application). We say that $A$ \emph{transforms in one step to $B$} (or that \emph{$B$ is reachable in one step from $A$}), denoted $A\rightarrow B$, if $A \stackrel{m}\rightarrow B$ for some movement $m$. We say that $A$ \emph{transforms} to $B$ (or that $B$ is \emph{reachable} from $A$) and write $A\rsa B$, if there is a sequence of shapes $A=C_{0},C_{1},\ldots,C_{t}=B$, such that $C_{i}\rightarrow C_{i+1}$ for all $i$, $0\leq i <t$. We should mention that we do not always allow $m$ to be any of the two possible movements. In particular, in Sections \ref{sec:rotation} and \ref{sec:rotation-connectivity} we only allow $m$ to be a rotation, as we there restrict attention to systems in which only rotation is available. We shall clearly explain what movements are permitted in each part of the paper.

\begin{proposition} \label{pro:par-equiv}
The relation ``transforms to'' (i.e., `$\rsa$') is a partial equivalence relation.
\end{proposition}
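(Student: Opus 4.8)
The plan is to verify the two defining properties of a partial equivalence relation, namely \emph{symmetry} and \emph{transitivity}, for the relation $\rsa$ on shapes. (Recall that a partial equivalence relation is exactly a relation that is symmetric and transitive, but not necessarily reflexive — indeed $\rsa$ fails reflexivity only in a trivial sense, since every shape reaches itself via the empty sequence, but the point of the terminology is that we do not want to commit to reflexivity across \emph{all} conceivable shapes, e.g. shapes of different orders.)

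First I would establish transitivity, which is essentially immediate from the definition. Suppose $A \rsa B$ and $B \rsa C$. By definition there are sequences $A = C_0 \ra C_1 \ra \cdots \ra C_s = B$ and $B = D_0 \ra D_1 \ra \cdots \ra D_t = C$ with each consecutive pair related by a single-step transformation. Concatenating these two sequences at $B$ yields a sequence $A = C_0 \ra \cdots \ra C_s = D_0 \ra \cdots \ra D_t = C$ witnessing $A \rsa C$. (One should note the minor bookkeeping point that ``reachable in one step'' is defined up to rotations and translations of the resulting shape, so strictly the concatenation may require composing the rigid motion that carried $C_s$ onto the chosen representative of $B$ with the rigid motions used in the second sequence; since rigid motions compose, this causes no trouble.)

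Second I would establish symmetry, which is where the substantive content lies: I would argue that $A \rsa B$ implies $B \rsa A$ because \emph{every single movement considered in this paper — both rotation and sliding — is reversible}. Concretely, it suffices to show that $A \ra B$ implies $B \ra A$; symmetry of $\rsa$ then follows by reversing a witnessing sequence step by step (again composing the associated rigid motions appropriately). For a rotation: if node $u$ rotates $90\degree$ clockwise around a neighbor $v$, then in the resulting configuration the clearance cells required for the counterclockwise rotation of $u$ around $v$ back to its original cell are precisely the cells that $u$ just vacated together with the cell it swept through — both of which are empty in the new configuration — so the reverse rotation is available and returns $B$ to $A$. For a sliding: if $u$ slides from $(i,j)$ to $(i,j+1)$ assisted by two nodes at $(i{-}1,j),(i{-}1,j{+}1)$ (or the analogous pair below), those two assisting nodes did not move and are still in place after the step, the cell $(i,j)$ is now empty, so $u$ can slide leftward back from $(i,j+1)$ to $(i,j)$, recovering $A$. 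The same argument applies verbatim to all rotated variants of these movements.

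The main obstacle — really the only point requiring care — is the reversibility argument for the two movement primitives: one must check that the emptiness/occupancy preconditions of the reverse move are \emph{guaranteed} by the postconditions of the forward move, for every orientation, and that the bookkeeping with rotations/translations of representative shapes does not interfere. Both checks are routine given the formal definitions of rotation and sliding in Section~\ref{sec:prel}, and indeed the reversibility of all movements was already observed informally in the introduction; the proposition simply packages that observation together with the trivial transitivity to conclude that $\rsa$ is a partial equivalence relation.
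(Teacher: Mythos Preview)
Your proposal is correct and follows essentially the same approach as the paper: establish transitivity by concatenating witnessing sequences, and establish symmetry by reducing to single-step reversibility and then checking that the preconditions of the reverse rotation (respectively sliding) are guaranteed by the postconditions of the forward move. The paper presents the two properties in the opposite order and is terser on the sliding case (merely saying ``the argument for sliding is similar''), but the substance is identical.
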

\begin{proof}
The relation `$\rsa$' is a binary relation on shapes. To show that it is a partial equivalence relation, we have to show that it is symmetric and transitive.

For symmetricity, we have to show that for all shapes $A$ and $B$, if $A\rsa B$ then $B\rsa A$. It suffices to show that for all $A,B$, if $A\ra B$ then $B\ra A$, meaning that every one-step transformation (which can be either a single rotation or a single sliding) can be \emph{reversed}. For the rotation case, this follows by observing that a rotation of a node $u$ can be performed iff there are two consecutive empty positions in its trajectory. When $u$ rotates, it leaves its previous position empty, thus, leaving in this way two consecutive positions empty for the reverse rotation to become enabled. The argument for sliding is similar.

For transivity, we have to show that for all shapes $A$, $B$, and $C$, if $A\rsa B$ and $B\rsa C$ then $A\rsa C$. By definition, $A\rsa B$ if there is a sequence of shapes $A=C_{0},C_{1},\ldots,C_{t}=B$, such that $C_{i}\rightarrow C_{i+1}$ for all $i$, $0\leq i <t$ and $B\rsa C$ if there is a sequence of shapes $B=C_{t},C_{t+1},\ldots,C_{t+l}=C$, such that $C_{i}\rightarrow C_{i+1}$ for all $i$, $t\leq i <t+l$. So, for the sequence $A=C_{0},C_{1},\ldots,C_{t}=B,C_{t+1},\ldots,C_{t+l}=C$ it holds that $C_{i}\rightarrow C_{i+1}$ for all $i$, $0\leq i <t+l$, that is, $A\rsa C$.
\end{proof}

When the only available movement is rotation, there are shapes in which no rotation can be performed (we will see such examples in Section \ref{sec:rotation}). If we introduce a \emph{null} rotation, then every shape may transform to itself by applying the \emph{null} rotation. That is, reflexivity is also satisfied, and, together with symmetricity and transivity from Proposition \ref{pro:par-equiv}, ``transforms to'' (by rotations only) becomes an equivalence relation.

\begin{definition} \label{def:perimeter1}
Let $A$ be a connected shape. Color black each cell of the grid that is occupied by a node of $A$. A cell $(i,j)$ is part of a \emph{hole} of $A$ if every infinite length single path starting from $(i,j)$ (moving only horizontally and vertically) necessarily goes through a black cell. Color black also every cell that is part of a hole of $A$, to obtain a \emph{compact} black shape $A^\prime$ (i.e., one with no holes in it). Consider now polygons defined by unit-length line segments of the grid. Define the \emph{perimeter} of $A$ as the minimum-area such polygon that completely encloses $A^\prime$ in its interior. The fact that the polygon must have an \emph{interior} and an \emph{exterior} follows directly from the Jordan curve theorem \cite{Jo93}.
\end{definition}

\begin{definition} \label{def:perimeter2}
Now, color red any cell of the grid that has contributed at least one of its line-segments to the perimeter and is not black (i.e., is not occupied by a node of $A$). Call this the \emph{cell-perimeter} of shape $A$. See Figure \ref{fig:perimeter-definition} for an example.
\end{definition}

\begin{figure}[!hbtp]
\centering{
\includegraphics[width=0.35\textwidth]{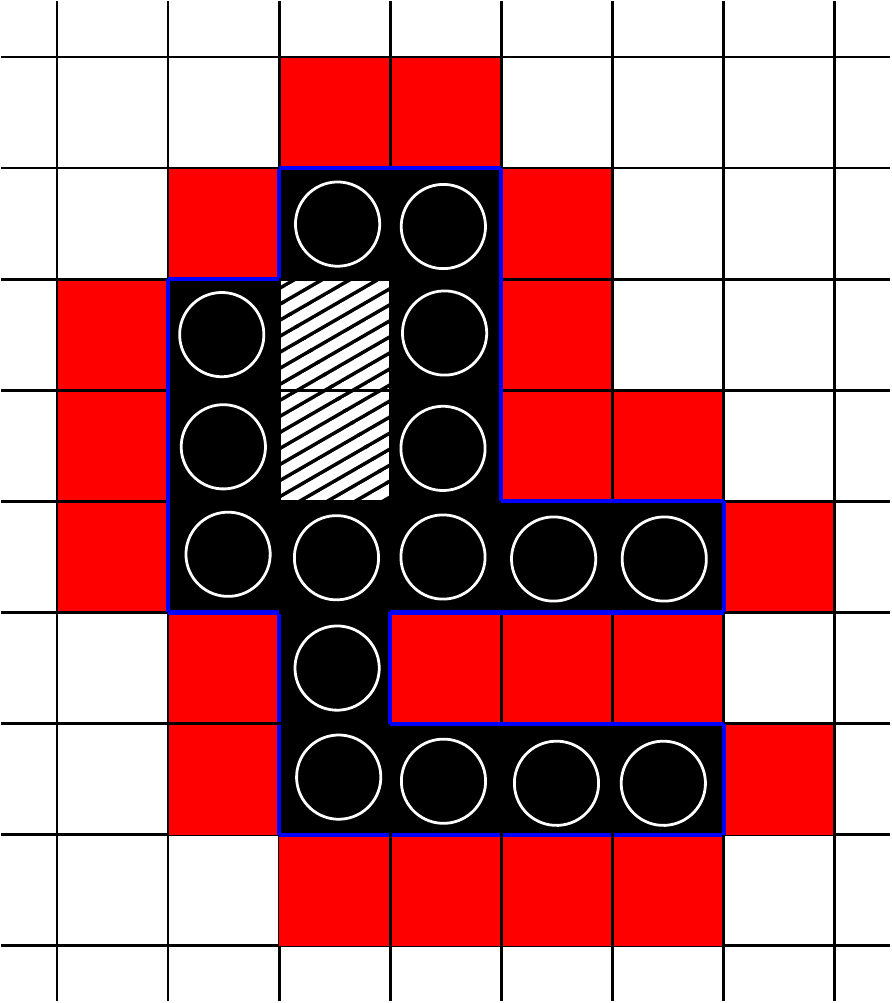}
}
\caption{The perimeter (polygon of unit-length line segments colored blue) and the cell-perimeter (cells colored red) of a shape $A$ (white spherical nodes; their corresponding cells have been colored black). The dashed black cells correspond to a hole of $A$.} \label{fig:perimeter-definition}
\end{figure}

\begin{definition} \label{def:perimeter3}
The \emph{external surface} of a connected shape $A$, is a shape $B$, not necessarily connected, consisting of all nodes $u\in A$ such that $u$ occupies a cell defining at least one of the line-segments of $A$'s perimeter.
\end{definition}

\begin{definition} \label{def:perimeter4}
The \emph{extended external surface} of a connected shape $A$, is defined by adding to $A$'s external surface all nodes of $A$ whose cell shares a corner with $A$'s perimeter (for example, the black node just below the hole, in Figure \ref{fig:perimeter-definition}).
\end{definition}

\begin{proposition} \label{pro:extended-external-connected}
The extended external surface of a connected shape $A$, is itself a connected shape.
\end{proposition}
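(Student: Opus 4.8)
The plan is to read the extended external surface off of a single traversal of the perimeter and show that consecutive nodes met along the way are adjacent. Write $\mathcal{P}$ for the perimeter of $A$, which by Definition~\ref{def:perimeter1} is also the perimeter of the compact shape $A^\prime$. Since $A$ is connected, so is $A^\prime$, and $A^\prime$ has no holes, so (as in Definition~\ref{def:perimeter1}, by the Jordan curve theorem) $\mathcal{P}$ is a single closed polygonal curve and $A^\prime$ is exactly its interior together with $\mathcal{P}$. The key local observation is that no hole cell can share an edge with an empty cell: such an empty cell reaches infinity, and then so would the hole cell, contradicting Definition~\ref{def:perimeter1}. Hence a hole cell contributes no unit line-segment to $\mathcal{P}$, and for every unit segment $s$ of $\mathcal{P}$ the cell $c(s)$ of $A^\prime$ lying on its interior side is a genuine node of $A$ --- indeed it sits on the external surface, since it has the (empty) cell across $s$ as a neighbour.

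Now traverse $\mathcal{P}$ segment by segment, cyclically. At a vertex where two consecutive segments $s,s^\prime$ meet: if they are collinear then $c(s)$ and $c(s^\prime)$ are neighbours; at a convex corner $c(s)=c(s^\prime)$; at a reflex corner $c(s)$ and $c(s^\prime)$ are diagonal to one another, but the single cell $d$ that lies inside the corner --- neighbouring both $c(s)$ and $c(s^\prime)$ and sharing the corner vertex with $\mathcal{P}$ --- belongs to $A^\prime$. Inserting this cell $d$ at every reflex corner turns the cyclic sequence of $c(s)$'s into a closed walk in the grid graph whose consecutive entries are equal or adjacent, and whose set of visited cells is precisely the extended external surface of $A^\prime$ (the $c(s)$'s are the cells of $A^\prime$ with an empty neighbour, and the inserted corner cells are the cells of $A^\prime$ touching $\mathcal{P}$ only at a vertex). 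Therefore the extended external surface of $A^\prime$ is connected. When $A$ has no holes this is already the statement; in general, since $A\subseteq A^\prime$ and both have the same perimeter, the extended external surface of $A$ is obtained from that of $A^\prime$ by deleting the hole cells of $A$ that touch $\mathcal{P}$ --- necessarily among the corner cells inserted at reflex corners --- so it remains to show these deletions preserve connectivity.

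This last point is where connectivity of $A$ is used, and it is the main obstacle. A perimeter-touching hole cell $d$ borders the exterior only at corners, and in fact at exactly one corner $v$: if two of its corners faced empty cells, then $d$ together with the two exterior ``cracks'' reaching those corners would form a curve running off to infinity in both directions and separating the plane, with the four edge-neighbours of $d$ split between the two sides, contradicting that $A$ is connected. Consequently $d$ occurs exactly once in the closed walk above, flanked there by the two surface nodes $x=c(s)$ and $y=c(s^\prime)$ at the reflex corner $v$, which are diagonal to each other. The remaining claim is that $x$ and $y$ are still joined inside the extended external surface of $A$ (intuitively, by following $\mathcal{P}$ the long way around from $x$ to $y$), so that deleting $d$ does not disconnect. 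I would formalize this by induction on the number of hole cells of $A$: fill one hole cell lying on the ``outer layer'' of a hole region (one that has a node of $A$ as a neighbour), obtaining a connected shape with one fewer hole and the same compact shape $A^\prime$; apply the induction hypothesis; and then argue that removing the single just-filled cell is not a cut-vertex removal, using that it appears exactly once in the closed walk and is flanked there by non-hole cells. Carrying out this deletion argument cleanly when several hole cells lie on the perimeter is the delicate part of the proof.
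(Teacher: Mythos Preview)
Your core strategy---walk the perimeter of $A'$ segment by segment and read off the interior-side cells, inserting the diagonal cell at each reflex corner---is exactly the paper's two-sentence argument. Where you go further is in noticing that the ``intermediate diagonal'' cell at a reflex corner need not be a node of $A$: it can be a hole cell. The paper simply asserts it is a node (``from those nodes used to extend the external surface'') and moves on. Your lemma that a perimeter-touching hole cell meets the perimeter at exactly one corner (else a Jordan curve through $A$-free cells would separate the edge-neighbours of $h$, contradicting connectivity of $A$) is correct and is the right thing to isolate.

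The trouble is that the gap you flag is not merely delicate---it is fatal: the proposition as stated is false, so neither your induction nor the paper's argument can be completed. Take $A$ on columns $0$--$4$, rows $1$--$5$, consisting of the full top row $5$, the three vertical columns $0$, $2$, $4$ in rows $2$--$4$, and row $1$ with only the middle cell $(1,2)$ missing. Then $A$ is connected (via row $5$), the empty cells in columns $1$ and $3$ are holes, and $A'$ is the full $5\times 4$ block with just $(1,2)$ deleted. The node $(2,2)$ lies on the external surface (its bottom edge borders the exterior cell $(1,2)$), yet its only neighbour in $A$ is $(3,2)$, which sits strictly in the interior of $A'$ and touches the perimeter at neither an edge nor a corner. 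Hence $(2,2)$ is an isolated vertex of the extended external surface. In the closed walk this shows up as two perimeter-touching hole cells $(2,1)$ and $(2,3)$ flanking $(2,2)$ on \emph{both} sides, which is exactly the multiple-hole situation your induction could not close---and cannot, because here deleting them genuinely disconnects.
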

\begin{proof}
The perimeter of $A$ is connected, actually, it is a cycle. This connectivity is preserved by the extended external surface, as whenever the perimeter moves straight, we have two horizontally or vertically neighboring nodes on the extended external surface and whenever it makes a turn, we either stay put or preserve connectivity via an intermediate diagonal node (from those nodes used to extend the external surface).
\end{proof}

Observe, though, that the extended external surface is not necessarily a cycle. For example, the extended external surface of a line-shape is equal to the shape itself (and, therefore, a line).

\subsection{Problem Definitions}
\label{subsec:problems}

We here provide formal definitions of all the transformation problems that are considered in this work.\\

\noindent\textbf{{\sc Rot-Transformability}.} Given an initial shape $A$ and a target shape $B$ (usually both connected), decide whether $A$ can be transformed to $B$ (usually, under translations and rotations of the shapes) by a sequence of rotation only movements.\\

\noindent\textbf{{\sc RotC-Transformability}.} The special case of {\sc Rot-Transformability} in which $A$ and $B$ are connected shapes and, additionally, connectivity must be preserved throughout the transformation.\\

\noindent\textbf{{\sc RS-Transformability}.} Given an initial shape $A$ and a target shape $B$ (usually both connected), decide whether $A$ can be transformed to $B$ (usually, under translations and rotations of the shapes) by a sequence of rotation and sliding movements.\\

\noindent\textbf{Minimum-Seed-Determination.} Given an initial shape $A$ and a target shape $B$ (usually only with rotation available and a proof that $A$ and $B$ are not transformable to each other without additional assumptions) determine a minimum-size seed and an initial positioning of that seed relative to $A$ that makes the transformation from $A$ to $B$ feasible. There are several meaningful variations of this problem. For example, the seed may or may not form part of the target shape or the seed may be used as an intermediated step to show feasibility with ``external'' help and then be able to show that, instead of externally providing it, it is possible to \emph{extract} it from the initial shape $A$ via a sequence of moves. We will clearly indicate which version is considered in each case.\\

In the above problems, the goal is to show feasibility of a set of transformation instances and, if possible, to provide an algorithm that decides feasibility. \footnote{An immediate next goal is to devise an algorithm able to compute an actual transformation or even compute or approximate the \emph{optimum} transformation (usually with respect to the number of moves). We leave these as interesting open problems.}

In the last part of the paper, we consider \emph{distributed transformation tasks}. There, the nodes are \emph{distributed processes} able to perform \emph{communicate-compute-move rounds} and the goal is to program them so that they (algorithmically) self-transform their initial arrangement to a target arrangement.\\

\noindent\textbf{Distributed-Transformability.} Given an initial shape $A$ and a target shape $B$ (usually by having access to both rotation and sliding), the nodes (which are now distributed processes), starting from $A$, must transform themselves to $B$ by a sequence of communication-computation-movement rounds. In the distributed transformations, we mostly consider the case in which $A$ can be \emph{any connected shape} and $B$ is a \emph{spanning line}, i.e., a linear arrangement of all the nodes.

\section{Rotation}
\label{sec:rotation}

In this section, the only permitted movement is 90\degree\ \emph{rotation} around a neighbor.

Consider a black and red checkered coloring of the 2D grid, similar to the coloring of a chessboard. Then any shape $S$ may be viewed as a colored shape consisting of $b(S)$ blacks and $r(S)$ reds. Call two shapes $A$ and $B$ \emph{color-consistent} if $b(A)=b(B)$ and $r(A)=r(B)$ and call them \emph{color-inconsistent} otherwise. Call a transformation from a shape $A$ to a shape $C$ \emph{color-preserving} if $A$ and $C$ are color consistent. Observe now, that if $A\ra B$, then $A$ and $B$ are color-consistent, because a rotation can never move a node to a position of different color than its starting position. This implies that if $A\rsa C$, then $A$ and $C$ are color-consistent, because any two consecutive shapes in the sequence are color-consistent. We conclude that:

\begin{observation}
The rotation movement is color-preserving. Formally, $A\rsa C$ (restricted to rotation only) implies that $A$ and $C$ are color-consistent. In particular, every node beginning from a black (red) position of the grid, will always be on black (red, respectively) positions throughout a transformation consisting only of rotations.
\end{observation}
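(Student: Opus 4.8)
The plan is to reduce the whole statement to a single local fact: a rotation moves a node \emph{diagonally}, i.e.\ by a vector of the form $(\pm 1,\pm 1)$. First I would unwind the definition of rotation in the base orientation given in Section~\ref{sec:prel}: a node $u$ at $(i,j)$ rotating $90\degree$ clockwise around a neighbor $v$ at $(i-1,j)$ lands on $(i-1,j+1)$, and rotating counterclockwise it lands on $(i-1,j-1)$; in both cases $u$ moves from $(i,j)$ to a cell $(i',j')$ with $|i-i'|=|j-j'|=1$. Since all remaining rotation movements are, by definition, obtained from this one by rotating the whole grid by $90\degree$, $180\degree$, $270\degree$, and these global rotations map diagonal displacements to diagonal displacements, every single rotation of any node is a diagonal move.

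Next I would record the arithmetic consequence. In the black/red checkered coloring, the color of a cell $(i,j)$ is determined by the parity of $i+j$. A diagonal displacement changes $i+j$ by $0$, $+2$, or $-2$ — in all cases an even amount — so $i'+j'\equiv i+j \pmod 2$, and hence $u$ occupies a cell of the same color after the rotation as before it. This already gives the ``in particular'' clause: by a trivial induction on the length of the transformation, each node stays forever on the color of its starting cell.

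To get the first (counting) statement, consider one step $A\ra B$. Exactly one node $u$ moves, and by the above it keeps its color; every other node does not move at all; hence the multisets of colors of $A$ and of $B$ are identical, so $b(A)=b(B)$ and $r(A)=r(B)$, i.e.\ $A$ and $B$ are color-consistent. Finally, if $A\rsa C$ there is a sequence $A=C_0,C_1,\dots,C_t=C$ with $C_k\ra C_{k+1}$ for all $k$; applying the one-step argument along the chain gives $b(C_0)=b(C_1)=\cdots=b(C_t)$ and likewise for $r(\cdot)$, so $A$ and $C$ are color-consistent.

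I do not expect any real obstacle here: the only thing requiring care is the case analysis showing that a rotation is a diagonal move in all four base orientations, and even that is immediate once one appeals to the global-rotation symmetry already built into the definition, so the verification reduces to the single displacement $(i,j)\mapsto(i-1,j\pm1)$.
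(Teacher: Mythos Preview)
Your proposal is correct and follows essentially the same approach as the paper: the paper's justification (given in the paragraph preceding the Observation) is simply that ``a rotation can never move a node to a position of different color than its starting position'' and then that color-consistency propagates along the sequence $A=C_0\ra C_1\ra\cdots\ra C_t=C$. You have merely made explicit the underlying reason---that a rotation is a diagonal displacement $(\pm1,\pm1)$ and hence preserves the parity of $i+j$---which the paper leaves implicit.
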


Based on this property of the rotation movement, we may call each node \emph{black} or \emph{red} throughout a transformation, based only on its initial coloring. The above observation gives a partial way to determine that two shapes $A$ and $B$ cannot be transformed to each other by rotations.

\begin{proposition} \label{impossibility-color-inc}
If two shapes $A$ and $B$ are color-inconsistent, then it is impossible to transform one to the other by rotations only.
\end{proposition}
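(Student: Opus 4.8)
The plan is to derive this immediately from the Observation that precedes it. Recall that the Observation establishes that if $A\rsa C$ (using rotations only), then $A$ and $C$ are color-consistent, i.e., $b(A)=b(C)$ and $r(A)=r(C)$. The Proposition is simply the contrapositive of this implication, so the proof should be a one-line argument rather than anything substantive.

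Concretely, I would argue as follows. Suppose, for the sake of contradiction, that $A$ and $B$ are color-inconsistent but that $A$ can be transformed to $B$ by rotations only, i.e., $A\rsa B$. By the Observation, $A\rsa B$ implies that $A$ and $B$ are color-consistent, that is, $b(A)=b(B)$ and $r(A)=r(B)$. This directly contradicts the assumption that $A$ and $B$ are color-inconsistent. Hence no such transformation exists, and it is impossible to transform one to the other by rotations only. One could equivalently phrase this without contradiction: the Observation says $A\rsa B \Rightarrow (A,B \text{ color-consistent})$, whose contrapositive is exactly the statement of the Proposition.

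There is essentially no obstacle here; the only thing to be careful about is making explicit that the color-preservation property of a single rotation (a rotating node never changes the parity/color of its cell, since a $90\degree$ rotation around a neighbor moves a node by a vector of odd $\ell_1$-length, hence flips neither... wait, let me just rely on the Observation) propagates along the whole sequence $A=C_0,C_1,\ldots,C_t=B$, which is precisely what the Observation already records. So I would keep the proof to a single short paragraph, citing the Observation directly, and not re-derive the single-step color invariance. If the referee wanted more detail, I would add one sentence recalling that each individual rotation preserves $b(\cdot)$ and $r(\cdot)$ because a node on a black cell can only land on a black cell after a $90\degree$ rotation, and similarly for red, so the counts are invariant along the transformation sequence.
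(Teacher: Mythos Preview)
Your proposal is correct and matches the paper's treatment: the paper states the Proposition without proof, as an immediate consequence (contrapositive) of the preceding Observation that rotation is color-preserving. One small note: the aside you abandon mid-sentence is in fact backwards (a $90\degree$ rotation moves a node diagonally, an $\ell_1$-distance of $2$, which is even and hence color-preserving), so if you include any such remark in the final write-up, fix that parity; otherwise just cite the Observation as you planned.
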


We now show that the inverse is not true, that is, it does not hold that any two color-consistent shapes can be transformed to each other by rotations. This is trivial for disconnected shapes, as any collection of isolated nodes cannot move at all, and either we consider only the cardinalities of the colors, in which case any two such shapes of equal cardinalities correspond to the same shape, or we also consider the precise positions of the nodes on the grid (e.g. by their relative distances), in which case no two such shapes can be transformed to each other. Thus, we show a counterexample for the case of connected shapes. We begin with a proposition relating the number of black and red nodes in a connected shape.

\begin{proposition} \label{pro:color-num-bounds}
A connected shape with $k$ blacks has at least $\lceil (k-1)/3\rceil$ and at most $3k+1$ reds.
\end{proposition}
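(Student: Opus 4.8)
The plan is to prove the two bounds separately, both by considering the bipartite-like structure induced by the checkerboard coloring. Fix a connected shape $S$ with $k$ black nodes and $r$ red nodes; since $S$ is connected, its neighboring graph $G$ is connected, every edge of $G$ joins a black node to a red node (neighbors on the grid always have opposite colors), and every node has degree at most $4$.

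For the upper bound $r \leq 3k+1$, I would argue by counting edges of a spanning tree $T$ of $G$. Since $G$ is connected on $k+r$ vertices, $T$ has $k+r-1$ edges, and every edge of $T$ is incident to exactly one black node. Each black node is incident to at most $4$ edges of $T$, so $k+r-1 \leq 4k$, giving $r \leq 3k+1$. (One can also see that the bound is tight: a single black node with four red leaves attached gives $k=1$, $r=4=3k+1$, and larger extremal examples are obtained by linking such "plus" gadgets appropriately.) Symmetrically, counting edges of $T$ against the red side gives $k+r-1 \leq 4r$, i.e. $k \leq 3r+1$, which rearranges to $r \geq (k-1)/3$; since $r$ is an integer this yields $r \geq \lceil (k-1)/3 \rceil$, the lower bound.

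The only subtlety — and the step I expect to be the mildest of obstacles — is making sure the degree-$4$ cap is applied to the right side in each direction and that the spanning-tree argument is legitimate: a connected shape on at least two nodes has a spanning tree, and a single isolated node is the degenerate case $k=1, r=0$ (or $k=0,r=1$), where both inequalities are easily checked directly ($0 \geq 0$ and $0 \leq 4$). After that, both inequalities drop out of the handshake-type count $\#\text{edges}(T) = k+r-1$ combined with $\#\text{edges}(T) \leq 4 \cdot \min(k,r)$-style bounds on each color class, and taking ceilings where the quantities are forced to be integral. No movement-theoretic input is needed; this is a purely static combinatorial fact about connected subgraphs of the grid under the checkerboard $2$-coloring.
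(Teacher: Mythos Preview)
Your proof is correct and follows essentially the same approach as the paper: both exploit the bipartite structure of the grid coloring together with the degree-$4$ bound and connectivity, and both obtain the lower bound by swapping the roles of the two colors and applying integrality. Your spanning-tree formulation is a slightly cleaner packaging of the paper's more informal count (the paper argues that connectivity forces at least $k-1$ ``sharings'' of reds between pairs of blacks, reducing the naive $4k$ bound to $3k+1$), but the underlying idea is the same.
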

\begin{proof}
For the upper bound, observe that a black can hold up to 4 distinct reds in its neighborhood, which implies that $k$ blacks can hold up to $4k$ reds in total, even if the blacks were not required to be connected to each other. To satisfy connectivity, every black must share a red with some other black (if a black does not satisfy this, then it cannot be connected to any other black). Any such sharing reduces the number of reds by at least 1. As at least $k-1$ such sharings are required for each black to participate in a sharing, it follows that we cannot avoid a reduction of at least $k-1$ in the number of reds, which leaves us with at most $4k-(k-1)=3k+1$ reds.

For the lower bound, if we invert the roles of blacks and reds, we have that $l$ reds can hold at most $3l+1$ blacks. So, if $k$ is the number of blacks, it holds that $k\leq 3l+1 \Leftrightarrow l\geq (k-1)/3$ and due to the fact that the number of reds must be an integer, we conclude that for $k$ blacks the number of reds must be at least $\lceil (k-1)/3\rceil$.
\end{proof}

\begin{proposition} \label{pro:line-with-leaves}
There is a generic connected shape, called \emph{line-with-leaves}, that has a color-consistent version for any connected shape. In other words, for $k$ blacks it covers the whole range of reds from $\lceil (k-1)/3\rceil$ to $3k+1$ reds.
\end{proposition}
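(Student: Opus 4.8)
The plan is to construct, for each value of $k \geq 1$ (the number of blacks) and each value $r$ in the range $\lceil (k-1)/3\rceil \leq r \leq 3k+1$, an explicit connected shape of ``line-with-leaves'' type realizing exactly $k$ blacks and $r$ reds, and symmetrically a shape realizing $k$ reds and $r$ blacks with $\lceil (k-1)/3\rceil \leq r \leq 3k+1$. Since the checkered coloring is symmetric under swapping the two colors (via a unit translation), it suffices to handle one case; I will describe the construction for a shape whose ``spine'' nodes alternate colors and whose total black and red counts hit every admissible pair. The key structural object is a horizontal line of nodes (the spine), to which we attach \emph{leaves}: single nodes hanging off a spine node either above or below. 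A leaf attached to a spine node has the opposite color of that node, so each leaf shifts the color balance by one, giving us the fine-grained control needed to sweep the whole interval.

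First I would fix notation: a spine of length $\ell$ placed on a single row contains $\lceil \ell/2 \rceil$ nodes of one color and $\lfloor \ell/2 \rfloor$ of the other. Starting from a bare spine, I would show that by choosing $\ell$ appropriately and then attaching between $0$ and $2$ leaves to each spine node (one above, one below — both of the color opposite to that spine cell), the number of reds (say) can be tuned continuously: removing/adding one leaf changes the red count by exactly $1$ while keeping $k$ fixed if the leaf is red, or changes $k$ by $1$ if the leaf is black. So the real work is a counting argument: I would parametrize by the number $b$ of black spine cells, the number of red spine cells, and the numbers of black-leaves and red-leaves, write down the resulting $(k,r)$ as a function of these, and verify that as the parameters range over their feasible values the pair $(k,r)$ covers precisely $\{(k,r) : k\geq 1,\ \lceil (k-1)/3\rceil \leq r \leq 3k+1\}$. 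The two extreme endpoints deserve separate attention: the maximum $r = 3k+1$ is achieved by a ``comb'' where every black carries the full complement of red neighbors (a spine that is essentially a single black with leaves, then chained), and the minimum $r = \lceil (k-1)/3\rceil$ is achieved by the color-swapped comb — here the line-with-leaves degenerates into a dense configuration where reds are maximally shared among blacks. I would check that these degenerate cases still qualify as ``line-with-leaves'' under a sufficiently liberal reading of the definition (or note that the proposition in the excerpt only needs existence of \emph{some} color-consistent connected shape, and ``line-with-leaves'' is the name given to the generic family).

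The main obstacle I anticipate is not any single hard step but rather getting the endpoints and the ``gluing'' to match cleanly: one must make sure that (a) the comb realizing $r=3k+1$ really is connected and really is a line-with-leaves (a line of blacks would only give up to $\approx k+2$ reds, so achieving $3k+1$ forces a different spine structure, e.g. a spine of reds with black leaves, then re-reading which color is ``few''), and (b) there are no parity gaps — i.e., every integer $r$ in the interval is hit, not just every other one. Parity is the subtle point: attaching a single leaf changes exactly one of the two counts by one, so by alternating which color's spine we grow and which leaves we attach, we can move along the lattice of $(k,r)$ pairs in unit steps in either coordinate, which rules out gaps. I would therefore organize the proof as: (1) define the family precisely; (2) give the formula for $(k,r)$ in terms of the construction parameters; (3) show the image of this formula is contained in the admissible region (this is immediate from Proposition~\ref{pro:color-num-bounds}); (4) show it contains the admissible region, by exhibiting for each target $(k,r)$ an explicit choice of parameters, handling the interior by a direct leaf-count and the two boundary curves $r=3k+1$ and $r=\lceil(k-1)/3\rceil$ by the comb / swapped-comb constructions; and (5) remark that connectivity holds throughout since the spine is connected and each leaf attaches to a spine node.
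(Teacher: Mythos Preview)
Your plan would eventually work, but it is considerably more elaborate than the paper's argument, and the complications you anticipate (the lower endpoint $r=\lceil(k-1)/3\rceil$, the ``swapped comb'', possible parity gaps) are all artifacts of trying to prove the literal ``in other words'' clause rather than the operative statement.

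The paper does two things that short-circuit almost all of your work. First, it fixes a single concrete construction: a horizontal spine $u_1,\dots,u_{2k-1}$ that alternates black/red and uses up \emph{all} $k$ blacks (so the spine already contains $k$ blacks and $k-1$ reds), and then \emph{saturates} every black by hanging red leaves above and below and at the two ends. This saturated shape has exactly $k$ blacks and $3k+1$ reds. Now simply peel off red leaves one at a time: each removal keeps the shape connected (leaves have degree~1) and decreases the red count by exactly one, so you sweep every integer $r$ in $[k-1,\,3k+1]$ with no parity issue and no need to vary the spine length or mix leaf colors.

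Second, and this is the point you are missing, the paper does \emph{not} attempt to realize $r<k-1$ by a line-with-leaves with $k$ blacks. Instead it observes that for an arbitrary connected shape $A$ one may always choose the checkered coloring so that red is the (weak) majority color; then $k\le r\le 3k+1$ automatically, which lies inside the range the construction already covers. So the ``swapped comb'' you worry about is unnecessary: the lower half of the stated interval is handled by the freedom to rename colors, not by a second construction. Once you use this observation, your steps (3)--(5) collapse to a two-line argument, and the obstacles you flag disappear.
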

\begin{proof}
Consider a bi-color line starting with a black node and ending to a black node, such that all $k$ blacks are exhausted, as shown in Figure \ref{fig:saturated-line}. To do this, $k-1$ reds are needed in order to alternate blacks and reds on the line. Next, ``saturate'' every black (i.e. maximize its degree) by adding as many red nodes as it can fit around it (recall that the maximum degree of every node is 4). The resulting saturated shape has $k$ blacks and $3k+1$ reds. This shape covers the $3k+1$ upper bound on the possible number of reds. By removing red leaf-nodes (i.e., of degree 1) one after the other, we can achieve the whole range of numbers of reds, from $k-1$ to $3k+1$ reds. It suffices to restrict attention to the range from $k$ to $3k+1$ reds. Take now any connected shape $A$ and color it in such a way that red is the majority color, that is $l\geq k$, where $l$ is the number of reds and $k$ is the number of blacks (there is always a way to do that). From the upper bound of Proposition \ref{pro:color-num-bounds}, $l$ can be at most $3k+1$, so we have $k\leq l\leq 3k+1$ for any connected shape $A$, which falls within the range that the line-with-leaves can represent. Therefore, we conclude that any connected shape $A$ has a color-consistent shape $B$ from the line-with-leaves family.
\end{proof}

\begin{figure}[!hbtp]
\centering{
\includegraphics[width=0.4\textwidth]{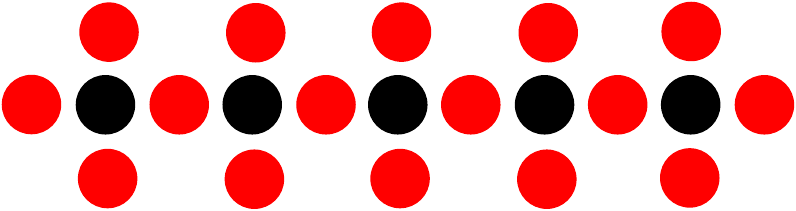}
}
\caption{A saturated line-with-leaves shape, in which there are $k=5$ blacks and $3k+1=16$ reds.} \label{fig:saturated-line}
\end{figure}

\begin{proposition} \label{pro:counterexample-transformability}
There is an infinite set of pairs $(A,B)$ of connected shapes, such that $A$ and $B$ are color-consistent but cannot be transformed to each other by rotations only.
\end{proposition}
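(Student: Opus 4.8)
I would build the family out of \emph{blocked} shapes, i.e.\ connected shapes on which no rotation at all is available. The key remark is that a blocked shape $A$ satisfies $A\rsa C$ only for $C=A$: any sequence $A=C_0,\dots,C_t=C$ witnessing $A\rsa C$ with $t\ge 1$ would require a genuine rotation applied to $C_0=A$, which does not exist, forcing $C_1=A$ and, inductively, $C=A$. Hence it suffices to exhibit, for infinitely many values of a size parameter, a blocked connected shape $A$ together with a connected shape $B$ that is color-consistent with $A$ but not congruent to it (under translations and rotations); by Proposition~\ref{pro:par-equiv} the resulting non-transformability then holds in both directions.

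\textbf{The gadget.} The building block is the \emph{cross}: a center cell together with its four grid neighbors. First I would check that a single cross is blocked: each leaf has the center as its unique neighbor, and a rotation about the center carries it onto one of the two adjacent leaves, which are occupied; the center has degree $4$, and for each incident leaf the cell swept on the way to the landing cell is again one of the cross's own leaves, hence occupied. Next, for a polyomino $P$ I would define $\Gamma(P)$ by placing a cross centered at $2p$ for every cell $p$ of $P$. A parity observation — cross nodes occupy only cells of type (even,\,even) (the centers), (odd,\,even) or (even,\,odd) (the leaves), never (odd,\,odd) — shows that two crosses coming from edge-adjacent cells of $P$ share exactly the single leaf lying between their centers, while crosses coming from non-adjacent cells of $P$ are disjoint and create no new adjacency. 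Consequently $\Gamma(P)$ is connected whenever $P$ is, its degree-$4$ nodes are exactly the $|P|$ cross centers (which form the scaled copy $2P$), and it has $|P|$ blacks and $4|P|-e(P)$ reds, where $e(P)$ is the number of adjacencies of $P$.

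\textbf{The family.} The same local pictures show $\Gamma(P)$ is still blocked: every leaf (shared or unshared), rotated about any incident center, lands on an occupied leaf of that center; every center, rotated about any incident leaf, sweeps one of its own leaves; and, by the parity remark, the only occupied neighbors a leaf can ever have are centers, so gluing crosses along $2P$ introduces no empty cell that a rotation could exploit. To conclude, for $n\ge 1$ I would set $A_n=\Gamma(I_{n+2})$ and $B_n=\Gamma(L_{n+2})$, where $I_{n+2}$ is the straight strip of $n+2$ cells and $L_{n+2}$ is an $L$-shaped path of $n+2$ cells. Both are polyomino paths, so $e=n+1$ in each case; hence $A_n$ and $B_n$ are connected and blocked, each has $n+2$ blacks and $3(n+2)+1$ reds, so they are color-consistent, and they are not congruent because their degree-$4$ node sets are the non-congruent polyominoes $2I_{n+2}$ and $2L_{n+2}$ (non-congruent for $n\ge 1$). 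Since $A_n$ is blocked, $A_n\not\rsa B_n$, which yields the desired infinite family $\{(A_n,B_n)\}_{n\ge 1}$.

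\textbf{Main obstacle.} The only real work is the verification that $\Gamma(P)$ is blocked; this is a finite case analysis over the three node types (center, unshared leaf, shared leaf) and their incident neighbors, and it stays short precisely because, up to the symmetries of the grid, every such local configuration already occurs inside a single cross, and the parity observation prevents the gluing from secretly freeing a cell that some rotation could use. The auxiliary fact that a blocked shape reaches only itself under $\rsa$ is immediate from the definition of $\rsa$.
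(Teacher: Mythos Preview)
Your proof is correct. Both your argument and the paper's rest on the same core observation --- a \emph{blocked} connected shape can reach only itself under $\rsa$ --- but you instantiate it differently. The paper takes for $A$ a solid rhombus (all blacks saturated, hence no rotation is possible) and for $B$ the color-consistent member of the line-with-leaves family supplied by Proposition~\ref{pro:line-with-leaves}; only $A$ is blocked, and $B$ is produced ``for free'' from that earlier proposition. You instead build a reusable gadget $\Gamma(P)$ by gluing crosses along a scaled polyomino and verify it is blocked via a short parity-driven case analysis; then you pick two non-congruent path polyominoes with the same edge count to get a color-consistent pair in which \emph{both} shapes are blocked. Your route is self-contained (it does not invoke Proposition~\ref{pro:line-with-leaves}) and yields a flexible family of blocked shapes parameterized by arbitrary polyominoes, at the cost of a slightly longer local verification; the paper's route is shorter because the rhombus is immediately seen to be rigid and the companion shape comes off the shelf.
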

\begin{proof}
For shape $A$, take a rhombus as shown in Figure \ref{fig:rhombus}, consisting of $k^2$ blacks and $(k+1)^2$ reds, for any $k\geq 2$. In this shape, every black node is ``saturated'', meaning that it has 4 neighbors, all of them necessarily red. This immediately excludes the blacks from being able to move, as all their neighboring positions are occupied by reds. But the same holds for the reds, as all potential target-positions for a rotation are occupied by reds. Thus, no rotation movement can be applied to any such shape $A$ and $A$ can only be transformed to itself (by \emph{null} rotations). By Proposition \ref{pro:line-with-leaves}, any such $A$ has a color-consistent shape $B$ from the family of line-with-leaves shapes, such that $B$ is not equal to $A$ (actually in $B$ several blacks may have degree 3 in contrast to $A$ where all blacks have degree 4). We conclude that $A$ and $B$ are distinct color-consistent shapes which cannot be transformed to each other, and there is an infinite number of such pairs, as the number $k^2$ of black nodes of $A$ can be made arbitrarily large.
\end{proof}

\begin{figure}[!hbtp]
\centering{
\includegraphics[width=0.25\textwidth]{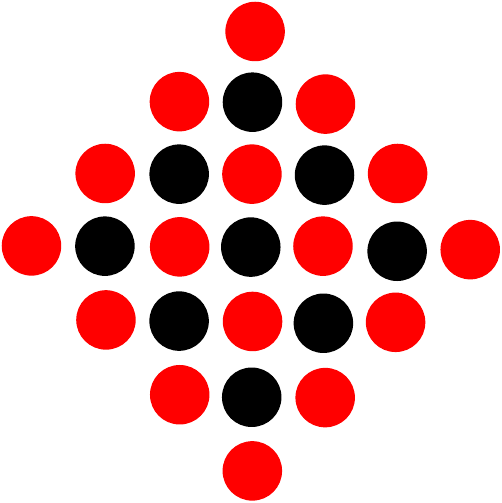}
}
\caption{A rhombus shape, consisting of $k^2=9$ blacks and $(k+1)^2=16$ reds.} \label{fig:rhombus}
\end{figure}

Propositions \ref{impossibility-color-inc} and \ref{pro:counterexample-transformability} give a partial characterization of pairs of shapes that cannot be transformed to each other. Observe that the impossibilities proved so far, hold for all possible transformations based on rotation only, i.e., they do not restrict the transformation in any way as would be, for example, to not allow the transformation to break the connectivity of the shape at any time.

A small shape of particular interest is a bi-color pair or \emph{2-line}. Such pairs can move easily in any direction, which makes them very useful components of transformations. One way to simplify some transformations would be to identify as many such pairs as possible in a shape and treat them in a different way than the rest of the nodes. A question in this respect is whether all the minority-color nodes of a connected shape can be completely to (distinct) nodes of the majority color. We show that this is not true.

\begin{proposition} \label{pro:matching-counterexample}
There is an infinite family of connected shapes, such that, if $A$ is a shape in the family of size $n$, then any matching of $A$ leaves at least $n/8$ nodes of each color unmatched.
\end{proposition}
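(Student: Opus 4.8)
The plan is to exhibit, for each large $k$, a connected shape $A_k$ of order $n = \Theta(k)$ obtained by splicing together a ``black‑extremal'' and a ``red‑extremal'' tree‑like shape, so that no matching can absorb either colour bias. The whole of the statement then follows from one completely elementary observation about matchings, applied to the two halves of $A_k$.

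\textbf{Choice of building blocks.} First I would record why the obvious candidates fail. One is tempted to glue the rhombus of Proposition~\ref{pro:counterexample-transformability} to its unit translate (a translation by one cell swaps the checkerboard colours), but each rhombus has only a $\Theta(\sqrt{n})$ colour imbalance, whereas here we need the number of \emph{forced} unmatched nodes to be $\Omega(n)$. To get a linear imbalance the building blocks must meet the extreme $3{:}1$ ratio of Proposition~\ref{pro:color-num-bounds}, i.e.\ be shapes in which one colour class consists almost entirely of leaves. The saturated line‑with‑leaves of Proposition~\ref{pro:line-with-leaves} (Figure~\ref{fig:saturated-line}) is exactly such a shape: with $k$ blacks it has $3k+1$ reds, a red excess of $\Theta(k)$. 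Call it $C_k$, and let $\overline{C}_k$ be its \emph{colour‑dual}, obtained by shifting the spine one cell (which exchanges the roles of black and red), so that $\overline{C}_k$ has $3k+1$ blacks and $k$ reds, a black excess of $\Theta(k)$. I would then define $A_k$ by placing $C_k$ and $\overline{C}_k$ side by side in the grid, disjoint, and joining them either by an $O(1)$‑cell corridor or simply by positioning them so that one leaf of $C_k$ is adjacent to one leaf of $\overline{C}_k$. The shape $A_k$ is connected, its number of nodes is $n = 2(4k+1) + O(1) = 8k + O(1)$, and distinct values of $k$ give distinct shapes, so the family is infinite.

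\textbf{Forcing unmatched nodes.} The core argument uses only the deficiency inequality: for any matching $M$ of the adjacency graph of $A_k$ and any set $S$ of nodes of one colour, the $M$‑matched members of $S$ are matched to distinct nodes of $N(S)$ (the grid‑neighbourhood of $S$), so at least $|S| - |N(S)|$ members of $S$ are left unmatched by $M$. I would apply this twice, \emph{to the same $M$}. Taking $S$ to be the set of all $3k+1$ blacks of the black‑heavy part $\overline{C}_k$: inside $\overline{C}_k$ these blacks are adjacent only to its $k$ reds (a leaf‑black sees a single spine‑red, a spine‑black sees its spine‑red neighbours and its red leaves), so $|N(S)| \le k + O(1)$ in $A_k$, the only new neighbours coming across the join; hence $M$ leaves at least $(3k+1) - (k+O(1)) = 2k - O(1)$ blacks unmatched. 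Symmetrically, taking $S$ to be all $3k+1$ reds of $C_k$ forces the same $M$ to leave at least $2k - O(1)$ reds unmatched. Since $n = 8k + O(1)$, for all sufficiently large $k$ we have $2k - O(1) \ge n/8$, which gives the claim.

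\textbf{Main obstacle.} I expect the only real obstacle to be conceptual and confined to the construction: seeing that the dense shapes already in the paper are useless here because they are too nearly balanced, that one must instead use the $3{:}1$‑extremal tree‑like shapes, and — crucially — that a single $A_k$ must contain \emph{both} a black‑extremal and a red‑extremal piece so that one matching is simultaneously bad for both colours. Once that is in place, everything remaining is routine bookkeeping: checking that the spliced shape is connected, and checking that the join increases the relevant neighbourhood $N(S)$ by only a constant, so that both deficiency bounds survive.
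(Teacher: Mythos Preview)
Your construction and argument are correct: gluing a black-extremal saturated line-with-leaves to its colour-dual and applying the Hall-type deficiency bound twice forces $2k-O(1)\ge n/8$ unmatched nodes of each colour simultaneously, exactly as you claim. The paper's own ``proof'' consists solely of a reference to Figure~\ref{fig:matching-counterexample}, i.e.\ it exhibits the shape and leaves the verification to the reader; your write-up supplies precisely the deficiency argument that the figure is meant to suggest, and your identification of the key conceptual point --- that a single shape must carry both a black-heavy and a red-heavy region so that one matching is bad for both colours --- is the whole content of the construction.
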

\begin{proof}
See Figure \ref{fig:matching-counterexample}.
\end{proof}

\begin{figure}[!hbtp]
\centering{
\includegraphics[width=0.4\textwidth]{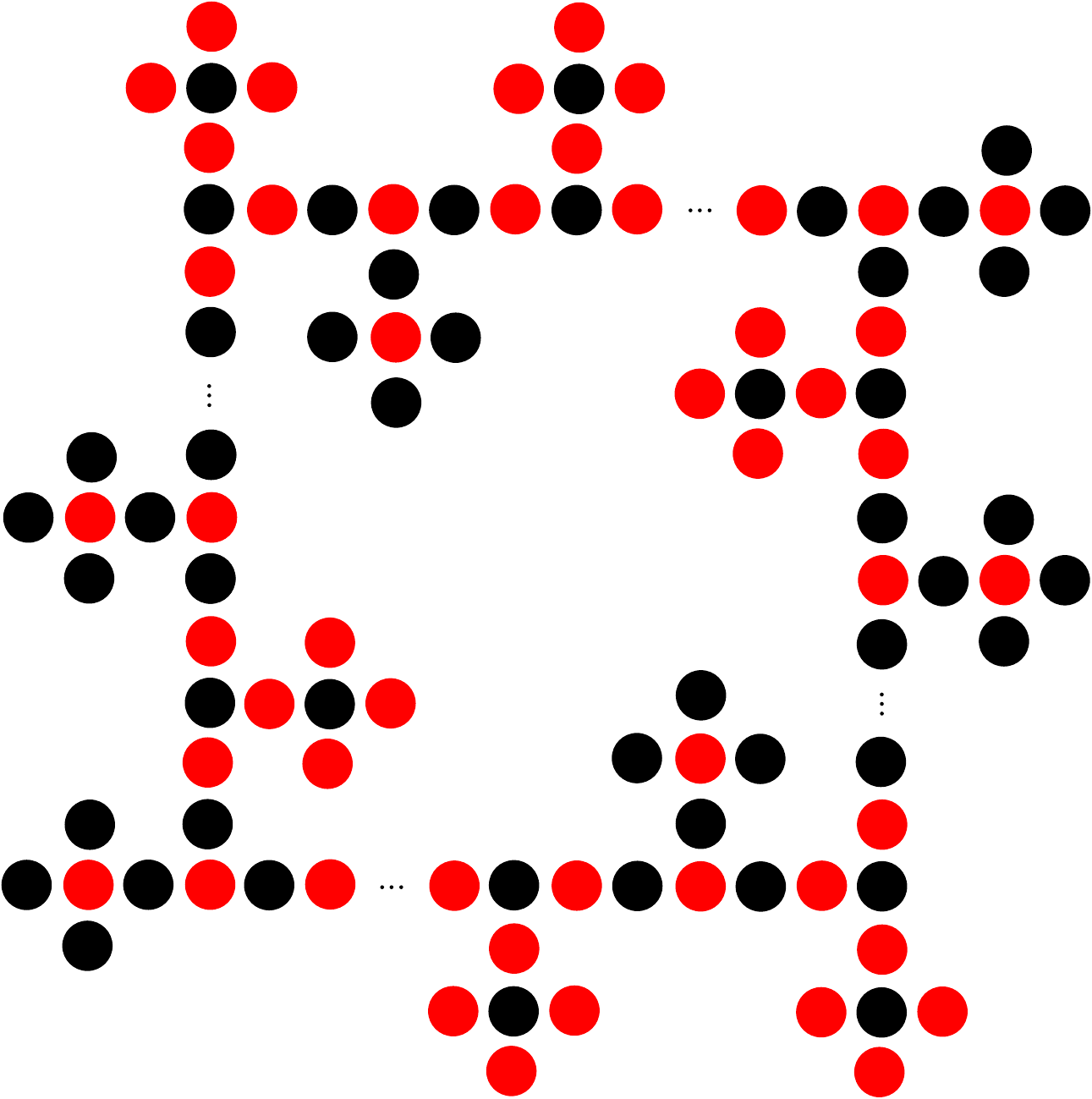}
}
\caption{The counterexample.} \label{fig:matching-counterexample}
\end{figure}

Recall that {\sc Rot-Transformability} is the language of all transformation problems between connected shapes that can be solved by rotation only and {\sc RotC-Transformability} is its subset obtained by the restriction that the transformation should not break the connectivity of the shape at any point during the transformation. We begin by showing that the inclusion between the two languages is strict, that is, there are strictly more feasible transformations if we allow connectivity to break. We prove that by showing that there is a feasible transformation in {\sc Rot-Transformability}$\setminus${\sc RotC-Transformability}.

\begin{theorem} \label{the:rotc-rot}
{\sc RotC-Transformability} $\subset$ {\sc Rot-Transformability}.
\end{theorem}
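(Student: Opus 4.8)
The inclusion $\subseteq$ is immediate from the definitions: {\sc RotC-Transformability} is by definition the subclass of {\sc Rot-Transformability} obtained by additionally demanding that connectivity never break, so every feasible connectivity-preserving transformation is in particular a feasible rotation-only transformation. The content of the statement is therefore the \emph{strictness} of the inclusion, and the plan is to exhibit a single ``line-folding'' pair $(A,B)$ of connected shapes lying in {\sc Rot-Transformability}$\setminus${\sc RotC-Transformability}. Concretely, I would take $A=L_n$ to be the straight spanning line on $n$ cells and $B=R_n$ to be the \emph{bent line} obtained by folding $L_n$ at its midpoint into two perpendicular arms (for concreteness, $n$ even and $n\geq 8$; then one arm has $n/2+1$ cells and the other $n/2-1$ cells, meeting at a right angle at a common endpoint). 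First note that $L_n$ and $R_n$ are color-consistent: the checkerboard colors alternate along any grid path regardless of turns, so both shapes contain $\lceil n/2\rceil$ cells of one color and $\lfloor n/2\rfloor$ of the other, with the same majority color (pinned down by the shared endpoint); thus Proposition~\ref{impossibility-color-inc} does not rule the pair out.

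For membership $A\rsa B$ by rotations only, allowing intermediate disconnections, I would give a short direct construction in the spirit of the ``helper component'' idea used throughout the paper: detach a free-floating $2$-line from one end of $L_n$ by two rotations, feed it further cells extracted from the remaining line until it becomes a mobile $4$-line, and then use the $4$-line as a conveyor that transports the cells of the line's second half, one at a time, into the positions of the perpendicular arm of $R_n$. (Alternatively, one may simply observe that both $L_n$ and $R_n$ have an available rotation — an endpoint rotates freely in each — and invoke the characterization of {\sc Rot-Transformability} proved later in the paper; since only one witness instance is needed, either route suffices.)

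The real work is the impossibility direction: $L_n\not\rsa R_n$ when connectivity must never break. The structural observation is that $L_n$ is a \emph{path graph}, and a connectivity-preserving rotation can only move a \emph{leaf} of the current shape (moving any internal degree-$2$ node detaches one of the two pieces it separates), while relocating a leaf of a ``thin'' path yields again a thin path. I would then argue that, starting from $L_n$, the connectivity-preservingly reachable set is a small, explicitly enumerable family: the two endpoints are the only movable nodes, each can make at most one unit ``bend'' before the cell it would rotate into is occupied by the body of the line, and once both ends are bent no further move exists. Hence every reachable shape contains $\geq n-2$ collinear cells. But $R_n$ is the union of two axis-aligned segments of lengths $\leq n/2+1$, so no grid line meets more than $n/2+1<n-2$ of its cells (here $n\geq 8$ is used). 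Therefore $R_n$ is unreachable from $L_n$ under connectivity preservation, giving $(L_n,R_n)\in{}${\sc Rot-Transformability}$\setminus${\sc RotC-Transformability}.

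The main obstacle is making the reachable-set enumeration airtight: one must be sure that \emph{no} sequence of connectivity-preserving rotations escapes this tiny family, which amounts to checking that a leaf rotation never creates a branch or a cycle (which would take us outside the class of path shapes and invalidate the ``only leaves move'' principle) and that each leaf genuinely has only the single new position claimed. This is manageable precisely because all shapes that arise are extremely thin, so a small case analysis on where a rotated leaf can land suffices; I do not expect the membership direction or the color-consistency check to cause any difficulty.
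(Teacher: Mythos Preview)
Your approach is essentially the paper's: both prove strictness by taking the straight line $L_n$ as the initial shape and observing that, under connectivity preservation, only the two endpoints can ever rotate and each is trapped in a short loop, so the reachable set is tiny. The paper picks the $n/2\times 2$ rectangle as the target and gives a short, self-contained sequence of rotations for the {\sc Rot}-direction (repeatedly walking a detached 2-line rightward over the body of the line), whereas you pick an L-shape and either invoke the later $\rem{P}$-characterization of {\sc Rot-Transformability} or sketch the 4-line conveyor; both targets work for the same reason. Two small corrections: your arm lengths $n/2+1$ and $n/2-1$ meeting at a shared cell give only $n-1$ nodes, so take, e.g., arms of lengths $n/2+1$ and $n/2$; and relying on the later characterization is a forward reference best avoided here, so prefer a direct construction (which is in any case shorter for this particular instance than the conveyor machinery).
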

\begin{proof}
{\sc RotC-Transformability} $\subseteq$ {\sc Rot-Transformability} is immediate, as any transformation that does not break the shape's connectivity is also a valid transformation for {\sc Rot-Transformability}. So, it suffices to prove that there is a transformation problem in {\sc Rot-Transformability}$\setminus${\sc RotC-Transformability}. Consider a (connected) horizontal line of any even length $n$, and let $u_1,u_2,\ldots,u_n$ be its nodes. The transformation asks to fold the line onto itself, forming a double-line of length $n/2$ and width 2, i.e., a $n/2\times 2$ rectangle.

It is easy to observe that this problem is not in {\sc RotC-Transformability} for any $n>4$: the only nodes that can rotate without breaking connectivity are $u_1$ and $u_n$, but any of their two possible rotations only enables a rotation that will bring the nodes back to their original positions. This means that, if the transformation is not allowed to break connectivity, then such a shape is trapped in a loop in which only the endpoints can rotate between three possible positions, therefore it is impossible to fold a line of length greater than 4.

On the other hand, if connectivity can be broken, we can perform the transformation by the following simple procedure, consisting of $n/4$ phases: In the beginning of every phase $i\in \{1,2,\ldots,\lfloor n/4\rfloor\}$, pick the nodes $u_{2i-1}, u_{2i}$, which shall at that point be the two leftmost nodes of the original line. Rotate $u_{2i-1}$ once clockwise, to move above $u_{2i}$, then $u_{2i}$ three times clockwise to move to the right of $u_{2i-1}$ (the first of these three rotations breaks connectivity and the third restores it), and then rotate $u_{2i-1}$ twice clockwise to move to the right of $u_{2i}$, then $u_{2i}$ twice clockwise to move to the right of $u_{2i-1}$ and repeat this alternation until the pair that moves to the right meets the previous pair, which will be when $u_{2i-1}$ becomes the left neighbor of $u_{2i-2}$ on the upper line of the rectangle under formation, or, in case $i=1$, when $u_{2i-1}$ goes above $u_n$ (see Figure \ref{fig:line-folding}). If $n/4$ is not an integer, then perform a final phase, in which the leftmost node of the original line is rotated once clockwise to move above its right neighbor, and this completes folding.
\end{proof}

\begin{figure}[!hbtp]
\centering{
\includegraphics[width=0.4\textwidth]{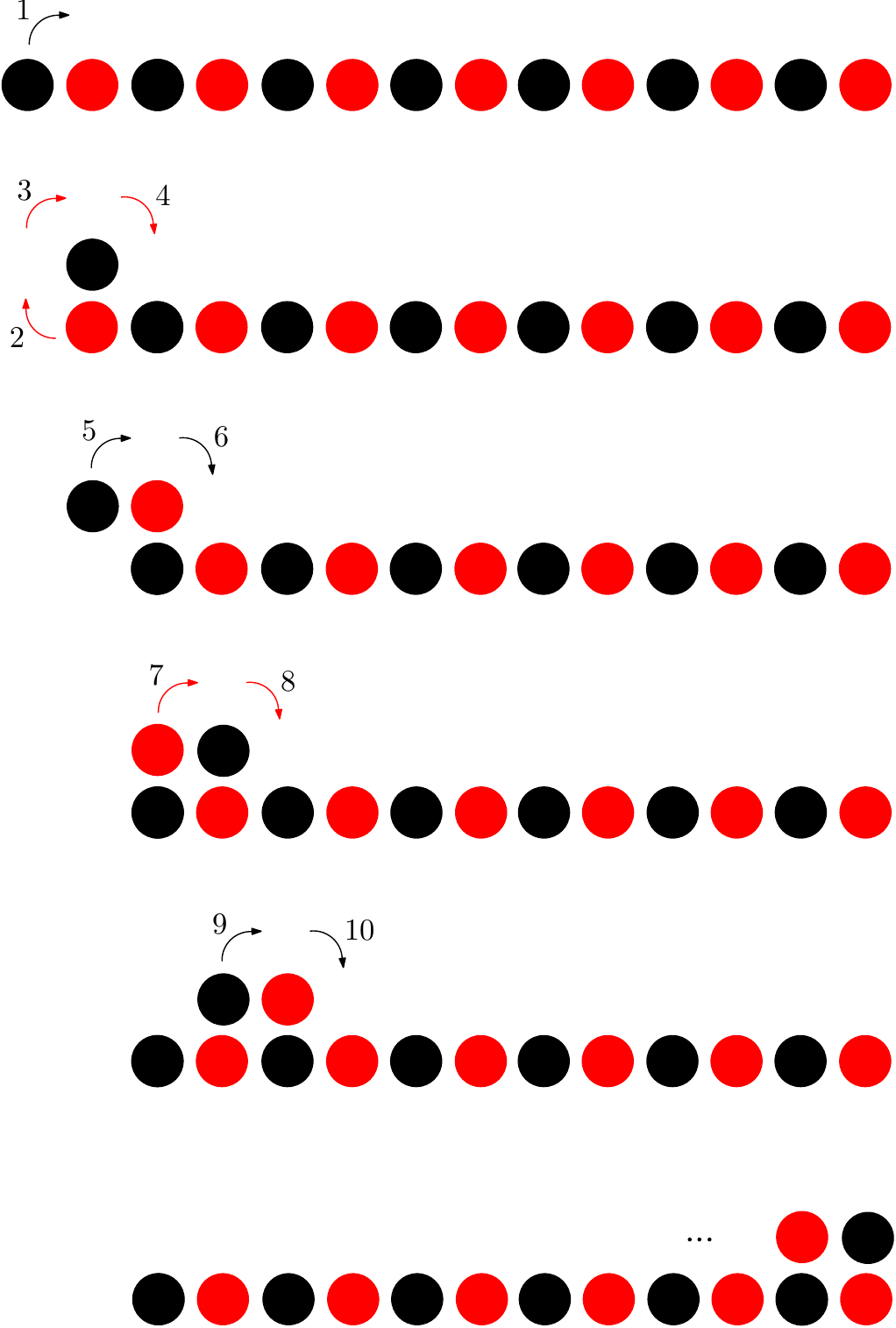}
}
\caption{Line folding.} \label{fig:line-folding}
\end{figure}

This means that allowing the connectivity to break enables more transformations, and this motivates us to start from this simpler case. But we already know from Proposition \ref{pro:counterexample-transformability}, that even in this case an infinite number of pairs of shapes cannot be transformed to each other. Aiming at a general transformation, we ask whether there is some minimal addition to a shape that would allow it to transform. The solution turns out to be as small as a \emph{2-line seed} lying initially somewhere ``outside'' the boundaries of the shape (e.g., just below the lowest row occupied by the shape).

Based on the above assumptions, we shall now prove that any pair of color-consistent connected shapes $A$ and $B$ can be transformed to each other. Recall from the discussion before Proposition \ref{pro:matching-counterexample}, that 2-line shapes can move freely in any direction. The idea is to use this 2-line in order to extract from the shape another 2-line, and use the two 2-lines together as a 4-line seed. The 4-line can also move freely in all directions. Then we shall use the 4-line as a transportation medium for those nodes that cannot move alone. In particular, we partition the nodes of the shape into those that can leave the shape as part of a 2-line and those that cannot. The latter nodes require the help of the 4-line to move them by carrying them, one at a time, in the form of a shape of order 5, which can only move diagonally (due to color-preservation of Proposition \ref{impossibility-color-inc}). We exploit these mobility mechanisms to transform $A$ into a uniquely defined shape from the line-with-leaves family of Proposition \ref{pro:line-with-leaves} (meaning that any two color-consistent shapes are matched to the same shape from the family). But if any connected shape $A$ with an extra 2-line can be transformed to its color-consistent line-with-leaves version with an extra 2-line, then this also holds inversely due to reversibility of rotations (discussed in the proof of Proposition \ref{pro:par-equiv}), and it follows that any $A$ can be transformed to any $B$ by transforming $A$ to its line-with-leaves version $L_A$ and then inverting the transformation from $B$ to $L_B=L_A$.

\begin{theorem} \label{the:rotation-generic}
If connectivity can break and there is a 2-line seed provided ``outside'' the initial shape, then any pair of color-consistent connected shapes $A$ and $B$ can be transformed to each other by rotations only.
\end{theorem}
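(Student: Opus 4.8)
The plan is to fix a canonical target and show every color-consistent connected shape reaches it, so that any two such shapes are connected through it by the transitivity and symmetry of Proposition \ref{pro:par-equiv}. First I would prove the three mobility facts hinted at before the statement, each a finite check on small configurations: (i) a free-floating $2$-line can, by a fixed short sequence of single-node rotations, translate by one cell in any of the four cardinal directions and also reorient, so with empty space around it a $2$-line is fully mobile; (ii) a free-floating straight $4$-line is likewise fully mobile by an analogous constant-size gadget; and (iii) a $4$-line carrying one extra node (an order-$5$ shape) can step diagonally — it cannot move in a cardinal direction, since by the Observation that would change the color class of the carried node, but a diagonal step is color-consistent and realizable.

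Second, I would bootstrap from the given external $2$-line seed to a straight $4$-line anchored next to $A$. Using (i), walk the $2$-line through the empty exterior until it is adjacent to the cell-perimeter of $A$ (Definitions \ref{def:perimeter1}--\ref{def:perimeter4}). Since $A$ is connected with at least two nodes, its external surface contains a pair of neighboring nodes $u,v$; use the $2$-line to rotate $u$ and then $v$ off the shape and concatenate them with the seed into a straight $4$-line lying in the exterior. A short argument on the cell-perimeter shows the $2$-line always has room to perform this extraction, so the process never gets stuck.

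Third comes the core dismantling. Let $L_A$ be the line-with-leaves representative that is color-consistent with $A$, guaranteed by Proposition \ref{pro:line-with-leaves}; note $L_A$ depends only on the counts $b(A),r(A)$. I would build a colored copy of $L_A$ at a construction site well away from $A$ and repeatedly peel nodes off $A$ and dock them onto $L_A$ in their correct (black or red) slots — color-preservation (the Observation) makes the colorings compatible throughout. At each step I partition the nodes still in $A$ into those currently forming a detachable boundary $2$-line and the rest. A detachable $2$-line is rotated away, flown to the site by (i) (using the $4$-line as scaffolding where needed), and docked. For a genuinely stuck node — e.g. a saturated node as in the rhombus of Figure \ref{fig:rhombus} — I first use the $4$-line to excavate enough of its neighborhood (each peeling step frees grid cells, so a currently saturated node eventually becomes exposed), then dock the $4$-line to it, carry it diagonally via (iii) to the site, and deposit it. Iterating empties $A$ and completes $L_A$, so $A$ with the seed transforms to $L_A$ with the seed. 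Applying the same to $B$ gives $L_B = L_A$, and running the $B$-side transformation backwards (reversibility, Proposition \ref{pro:par-equiv}) yields $A \rsa B$.

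The main obstacle is the navigation invariant: proving that the $4$-line and the order-$5$ carrier can always route from the current boundary of the partially dismantled shape to the construction site and back without becoming blocked — i.e. that the exterior stays ``thick enough'' for the constant-size gadgets to operate, that the leftover portion of $A$ remains manageable, and that the growing $L_A$ never gets in the way. This needs a carefully chosen peeling order (always removing from the current external surface, keeping the site sufficiently far, reserving corridors) together with a proof that each stuck node is eventually exposed; that bookkeeping, rather than the mobility gadgets themselves, is where the real work lies.
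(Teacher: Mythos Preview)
Your proposal is correct and follows essentially the same strategy as the paper: reduce to the canonical line-with-leaves $L_A$ of Proposition~\ref{pro:line-with-leaves}, bootstrap the $2$-line seed to a $4$-line by extracting two nodes from $A$, then dismantle $A$ node-by-node (as free $2$-lines when possible, else carried singly by the $4$-line), and finish by reversibility.

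The one place where the paper is sharper is exactly the point you flag as ``where the real work lies.'' Rather than proving a general navigation invariant for the $4$-line and the order-$5$ carrier through an arbitrary exterior, the paper fixes a spatial layout that makes routing trivial: it always extracts from the \emph{current lowest row} of $A$ and builds $L_A$ on three fixed rows strictly below that (rows $l-5$ to $l-3$, with a repository below row $l-7$). Because everything south of $A$'s lowest row is empty by construction, the $2$-line, the $4$-line, and the loaded $5$-shape always have unobstructed space to operate in, and the growing $L_A$ can never interfere with the remaining part of $A$. This choice also dispenses with your ``excavation'' step for saturated nodes: a node on the lowest row is never saturated from below, so the $4$-line can always reach and lift it. Adopting this lowest-row discipline collapses your main obstacle to a handful of local case checks (the paper's Figure~\ref{fig:extracting-2-line} and the termination phase of Figure~\ref{fig:termination-4-line}).
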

\begin{proof}
Without loss of generality (due to symmetry and the 2-line's unrestricted mobility), it suffices to assume that the seed is provided somewhere below the lowest row $l$ occupied by the shape $A$. We show how $A$ can be transformed to $L_A$ with the help of the seed. We define $L_A$ as follows: Let $k$ be the cardinality of the minority color, let it be the black color. As there are at least $k$ reds, we can create a horizontal line of length $2k$, i.e., $u_1,u_2,\ldots,u_{2k}$, starting with a black, i.e., $u_1$ is black, and alternating blacks and reds. In this way, the blacks are exhausted. The remaining $\leq (3k+1)-k=2k+1$ reds are then added as leaves of the black nodes, starting from the position to the left of $u_1$ and continuing counterclockwise, i.e., below $u_1$, below $u_3$, ..., below $u_{2k-1}$, above $u_{2k-1}$, above $u_{2k-3}$, and so on. This gives the same shape from the line-with-leaves family, for all color-consistent shapes (observe that the leaf to the right of the line is always placed). $L_A$ shall be constructed on rows $l-5$ to $l-3$ (not necessarily inclusive), with $u_1$ on row $l-4$ and a column $j$ preferably between those that contain $A$.

First, extract a 2-line from $A$, from row $l$, so that the 2-line seed becomes a 4-line seed. To see that this is possible for every shape $A$ of order at least 2, distinguish the following two cases: (i) If the lowest row has a horizontal 2-line, then the 2-line can leave the shape without any help and approach the 2-seed. (ii) If not, then take any node $u$ of row $l$. As $A$ is connected and has at least two nodes, $u$ must have a neighbor $v$ above it. The only possibility that the 2-line $u$,$v$ is not free to leave $A$ is when $v$ has both a left and a right neighbor. Figure \ref{fig:extracting-2-line} shows how this can be resolved with the help of the 2-line seed (now the 2-line seed approaches and extracts the 2-line).

\begin{figure}[!hbtp]
\centering{
\includegraphics[width=0.6\textwidth]{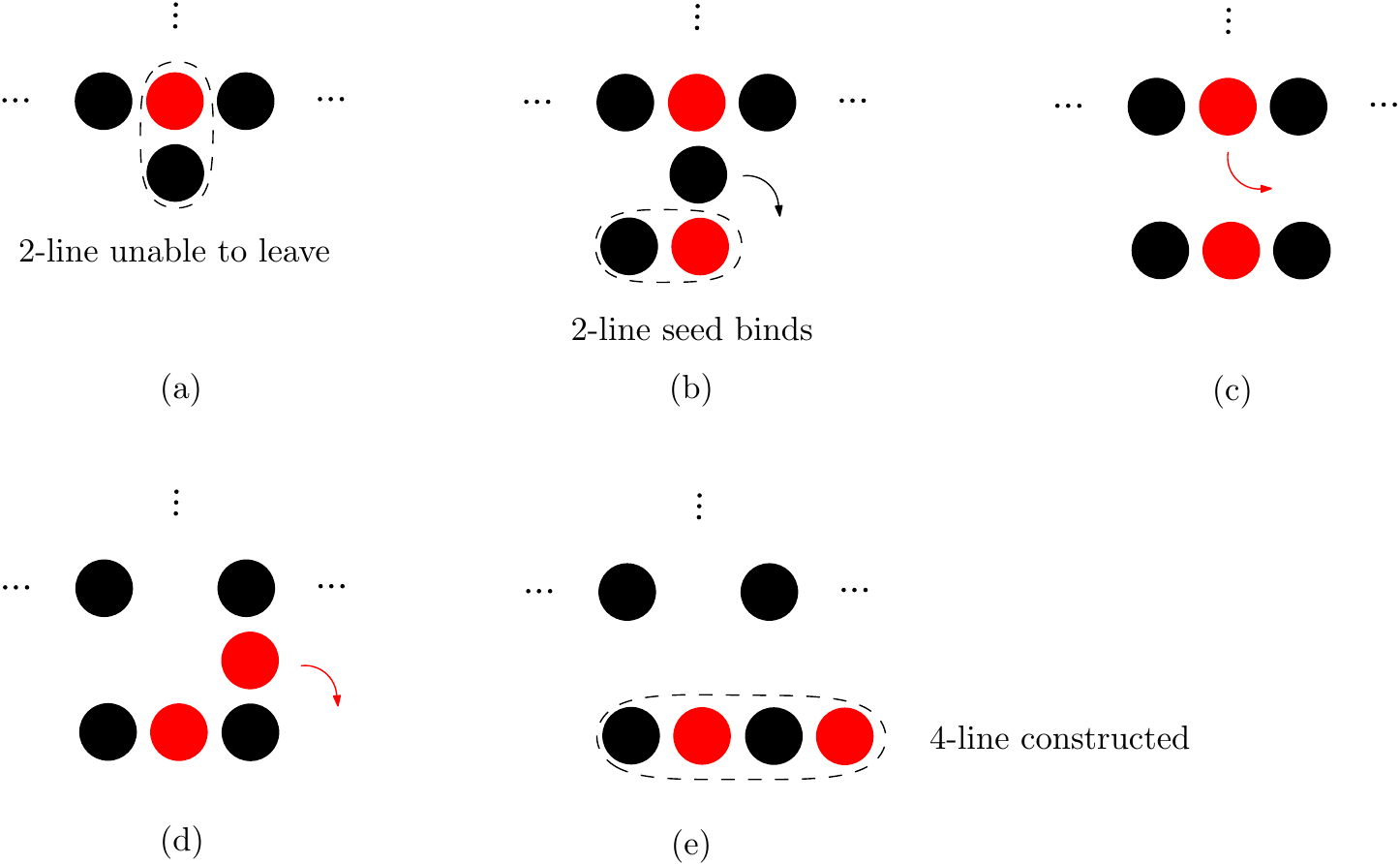}
}
\caption{Extracting a 2-line with the help of the 2-line seed.} \label{fig:extracting-2-line}
\end{figure}

To transform $A$ to $L_A$, given the 4-line seed, do the following:
\begin{itemize}
\item While the minority color (color chosen for $u_1$) is still present in $A$:
\begin{itemize}
\item If on the current lowest row occupied by $A$, there is a 2-line that can be extracted alone and move towards $L_A$, then perform the shortest such movement that attaches the 2-line to the right endpoint of $L_A$'s line $u_1, u_2, \ldots$.
\item If not, then use the 4-line to extract a single node from the lowest row of $A$. If that node fits to the right endpoint of $L_A$'s line, place it there, otherwise, transfer it to an unoccupied position below row $l-7$ to be used later.
\end{itemize}
\item Once the minority color has been exhausted from $A$, alternate the two colors until $u_{2k-3}$ has been placed ($u_{2k-1}$ and $u_{2k}$ will only be placed in the end as they are part of the 4-line). To do this, use the 4-line to transfer nodes from $A$ and from the ``repository'' maintained below $L_A$. When this occurs, if there are no more nodes left, run the termination phase, otherwise transfer the remaining nodes with the 4-line, one after the other, and attach them around the line of $L_A$, beginning from the position to the left of $u_1$ counterclockwise, as decribed above (skipping the position $u_{2k}$).
\item Termination phase: the line-with-leaves is ready, apart from positions $u_{2k-1}$, $u_{2k}$ which require a 2-line from the 4-line. If the position above $u_{2k-1}$ is empty, then extract a 2-line from the 4-line and transfer it to the positions $u_{2k-1}$, $u_{2k}$. This completes the transformation. If the position above $u_{2k-1}$ is occupied by a node $u_{2k+1}$, then place the whole 4-line vertically with its lowest endpoint on $u_{2k}$ (as in Figure \ref{fig:termination-4-line}). Then rotate the top endpoint counterclockwise, to move above $u_{2k+1}$, then rotate $u_{2k+1}$ clockwise around it to move to its left, then rotate the node above $u_{2k}$ counterclockwise to move to $u_{2k-1}$, and finally restore $u_{2k+1}$ to its original position. This completes the construction (the 2-line that always remains can be transferred in the end to a predefined position).
\end{itemize}
\end{proof}

\begin{figure}[!hbtp]
\centering{
\includegraphics[width=0.6\textwidth]{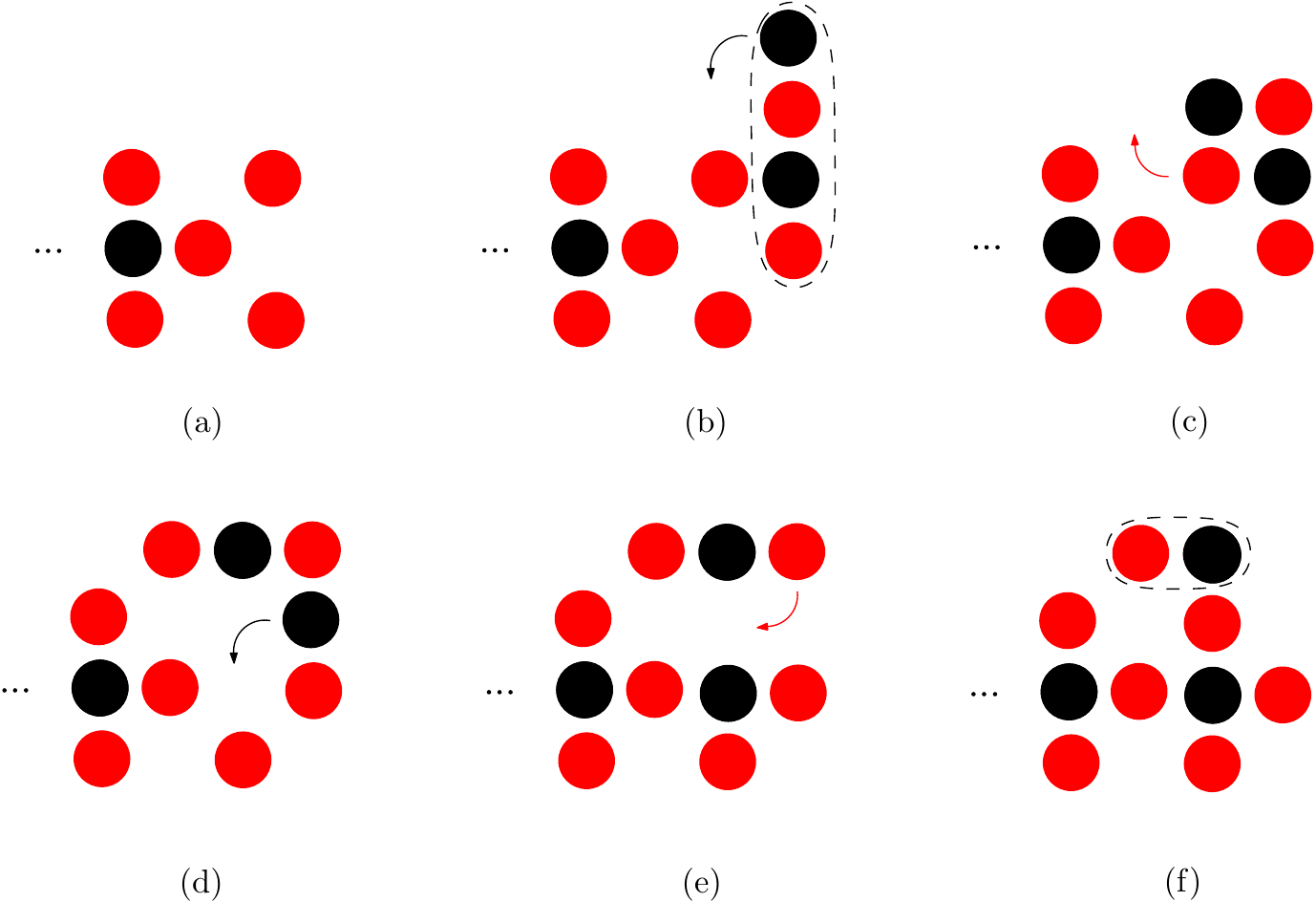}
}
\caption{The termination phase of the transformation.} \label{fig:termination-4-line}
\end{figure}

The natural next question is to what extent can the 2-line seed assumption be dropped. Clearly, by Proposition \ref{pro:counterexample-transformability}, this cannot be always possible. The following corollary gives a sufficient condition to drop the 2-line seed assumption, without looking deep into the structure of the shapes that satisfy it.

\begin{corollary}
Assume rotations only and that connectivity can break. Let $A$ and $B$ be two color-consistent connected shapes such that each one of them can self-extract a 2-line. Then $A$ and $B$ can be transformed to each other.
\end{corollary}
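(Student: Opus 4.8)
The plan is to obtain this as a corollary of Theorem~\ref{the:rotation-generic}, with the externally supplied $2$-line seed replaced by a $2$-line that each shape produces out of itself. Concretely, by hypothesis $A\rsa A^\ast$, where $A^\ast$ is a configuration consisting of a free $2$-line $s$ together with the remaining $n-2$ nodes of $A$; since a $2$-line moves freely in every direction, I would first slide $s$ so that it lies entirely below the lowest occupied row of the rest. The configuration now matches the starting point of the proof of Theorem~\ref{the:rotation-generic}, the only differences being that the set of nodes still to be gathered has order $n-2$ (rather than $n$) and that $s$ is genuine material, so the total order is $n$ (rather than $n+2$). From here I would run the construction of Theorem~\ref{the:rotation-generic} essentially verbatim: use $s$ to peel off a second $2$-line and assemble a free $4$-line, then use the $4$-line to ferry the remaining nodes one at a time into the canonical line-with-leaves shape $L$ of Proposition~\ref{pro:line-with-leaves} determined by the colour counts $b(A),r(A)$, placing the four nodes of the $4$-line last exactly as in the termination phase. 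This gives $A\rsa L$.

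Colour-consistency then does the rest: since $b(B)=b(A)$ and $r(B)=r(A)$, the same procedure applied to $B$ (which, by hypothesis, also self-extracts a $2$-line) terminates at the \emph{same} configuration --- the canonical line-with-leaves $L$ together with the fixed-position residual $2$-line left by the termination phase, both of which depend only on the colour counts. This ``unique representative of a colour class'' property is exactly what Proposition~\ref{pro:line-with-leaves} and Theorem~\ref{the:rotation-generic} supply. Hence $B\rsa L$, and by reversibility of rotations (Proposition~\ref{pro:par-equiv}) also $L\rsa B$; combining with $A\rsa L$ and transitivity of $\rsa$ (again Proposition~\ref{pro:par-equiv}) yields $A\rsa B$.

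The step that genuinely needs checking --- and that blocks a one-line appeal to Theorem~\ref{the:rotation-generic} --- is that $A^\ast$, the configuration left after self-extracting the first $2$-line, need not be connected; it may split into several components, possibly even into isolated nodes. So one must reopen the proof of Theorem~\ref{the:rotation-generic} and observe that connectivity of the set being gathered is used there in exactly one place: to extract the \emph{first} free $2$-line from $A$ (the case analysis around Figure~\ref{fig:extracting-2-line}). Every later step --- forming the $4$-line, and the entire while-loop that ferries single nodes into $L$ --- is insensitive to how those nodes are scattered over the grid, because the $4$-line travels freely and can reach and detach any node from anywhere. In our setting the first free $2$-line is handed to us by hypothesis, so the only remaining gap is to confirm that a free $2$-line can still assemble a free $4$-line out of a badly disconnected remainder: if some component has at least two nodes, peel a $2$-line off it as in Figure~\ref{fig:extracting-2-line} and merge the two $2$-lines; if the remainder is entirely isolated nodes, first use the free $2$-line to carry two of them together into a second $2$-line and then merge. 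Verifying this last sub-case --- in essence, that a short free fragment can transport a single node --- is the main (and, I expect, the only) real obstacle; it can alternatively be bypassed by showing that one may always choose the self-extracted $2$-line so that the remainder keeps a non-trivial component.
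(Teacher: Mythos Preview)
The paper states this corollary without proof, treating it as an immediate consequence of Theorem~\ref{the:rotation-generic}: once a shape can produce its own $2$-line, that $2$-line plays the role of the external seed and the machinery of the theorem takes over. Your plan is exactly this intended argument, so at the level of approach you match the paper.

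Where you go beyond the paper is in noticing that the remainder $A^\ast$ left after self-extraction need not be connected, so that Theorem~\ref{the:rotation-generic} cannot be invoked as a black box. This is a genuine subtlety that the paper glosses over, and your diagnosis of it is accurate: in the proof of Theorem~\ref{the:rotation-generic}, connectivity of the shape is used \emph{only} to manufacture the first extra $2$-line (the passage around Figure~\ref{fig:extracting-2-line}); once a $4$-line exists, the while-loop that ferries single nodes works on arbitrary scattered configurations, since the $4$-line moves freely and the $5$-shape it forms with a passenger can reach any colour-compatible cell. Your reduction of the problem to ``can we always assemble a $4$-line'' is therefore correct.

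The residual obstacle you flag---the all-isolated remainder---is real: a free $2$-line together with a single isolated node is a $3$-shape, and a $3$-shape can rotate about its centre but cannot translate, so a bare $2$-line cannot transport an isolated node. The cleanest way to close this is the alternative you mention at the end: rather than accepting whatever extraction the hypothesis hands you, argue that one can always \emph{choose} an extraction that leaves a component of size $\geq 2$ in the remainder. This is essentially what the subsequent Lemma~\ref{pro:move-2seed} establishes (a move available $\Leftrightarrow$ a $2$-seed can be walked out along the perimeter), and the perimeter-walk extraction described there removes two boundary nodes while leaving the rest of the shape intact. The paper is silent on this point at the corollary itself, so your caution is warranted; but the fix lies in controlling the extraction, not in trying to make a $2$-line carry isolated nodes.
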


We remind that a rotation move in a grid can occur towards $4$ directions:  $NorthEast(1)$, $SouthEast(2)$, $SouthWest(3)$, $NorthWest(4)$. In order for the first move to occur a node has to be present North OR East but not both. The same requirements apply for moves $2$, $3$ and $4$ respectively.
If the connectivity of the shape can be broken and two nodes, A and B, are next to each other and A can perform a rotation using B, then B can perform a rotation using A if the connectivity of the shape can be broken.

\begin{lemma}\label{pro:move-2seed}
A 2-seed can be extracted from a shape iff a single rotation move is available on the shape.
\end{lemma}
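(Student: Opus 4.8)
The statement is an equivalence, so the plan is to prove the two implications separately, and I expect ``a rotation is available $\Rightarrow$ a $2$-seed can be extracted'' to carry essentially all the difficulty. The reverse implication is short: if some sequence of rotations turns the shape $S$ into a configuration in which two mutually adjacent nodes form an isolated $2$-line, then if that sequence is non-empty its first rotation is, by definition, a rotation available on $S$; while if the sequence is empty, then $S$ already contains an isolated $2$-line, and such a $2$-line has empty cells on every side, so each of its two nodes, together with its neighbour, spans a unit grid square whose other two cells are empty and can rotate into one of them. (Contrapositively: if no rotation is available, $S$ is blocked, its only reachable configuration is $S$ itself, and $S$ cannot contain an isolated $2$-line, since such a component could rotate.)

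For the converse, assume a rotation is available; after rotating/reflecting the grid I may assume it is a clockwise rotation of a node $u$ at $(1,0)$ about its neighbour $v$ at $(0,0)$, so that the cells $(0,1)$ and $(1,1)$ are empty (and, by the observation stated just before the lemma, $v$ can symmetrically rotate about $u$ into the other of those two cells). First I would perform this rotation: $u$ moves to $(0,1)$, so $\{u,v\}$ becomes a horizontal $2$-line and, in addition, the $1\times 2$ strip $\{(1,0),(1,1)\}$ along one of its long sides is now empty ($u$ vacated $(1,0)$, and $(1,1)$ was already empty). Recall (cf.\ the discussion preceding Proposition~\ref{pro:matching-counterexample}) that a \emph{free} $2$-line moves in all directions, so once $\{u,v\}$ has been separated from the rest of $S$ the extraction is complete. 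It therefore suffices to peel $\{u,v\}$ loose, and the plan for that is to iteratively swing whichever of $u,v$ currently borders the empty strip over the other node into that strip: each such move keeps $\{u,v\}$ a connected $2$-line, clears a fresh $1\times 2$ strip on the opposite side, and shifts the pair one cell ``outward''; since $S$ is finite, after finitely many such moves no neighbour of $u$ or $v$ lies in $S\setminus\{u,v\}$, i.e., $\{u,v\}$ is an isolated $2$-seed.

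The crux, and the part I would work out most carefully, is that the $2$-line obtained after the first rotation is in general still \emph{attached} to the rest of $S$: only its ``front'' long side is guaranteed clear, while up to four of the cells bordering its two short ends and its other long side may be occupied, so its free mobility cannot be invoked verbatim. I would handle this with a bounded case analysis over which of those cells are occupied, exhibiting in each case a rotation (of $u$ about $v$ or of $v$ about $u$, using the symmetry noted above) that detaches one end into the empty strip while strictly decreasing a potential $\Phi$ --- for instance ``the number of nodes of $S\setminus\{u,v\}$ lying in the minimal axis-aligned box spanned by $\{u,v\}$'', broken if necessary by the distance from $\{u,v\}$ to the bounding box of the remaining nodes. (Subcase: if the first rotation already disconnects $S$ into $\{u,v\}$ plus the rest, we are done immediately; if it splits off a single node, one continues with the smaller remaining shape.) Iterating until $\Phi=0$ isolates the $2$-line, and together with the reverse implication this proves the lemma.
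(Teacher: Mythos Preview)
Your reverse implication is fine, and your strategy for the forward direction---perform the available rotation, then repeatedly ``swing'' the resulting $2$-line outward---is essentially what the paper does \emph{once the move is on the perimeter}. The gap is that you implicitly assume this is always the situation. It need not be: the only available rotation may sit at an \emph{internal} $1\times 2$ hole of an otherwise solid block (e.g.\ a $5\times 5$ square with cells $(2,2),(2,3)$ removed). In that case your procedure cycles. After the first rotation you have a $2$-line $\{u,v\}$ with one long side clear, but the other long side and both ends are occupied; swinging $u$ over $v$ (or $v$ over $u$) into the clear strip simply swaps the positions of the pair and the hole, and after two or four swings you are back to an isometric copy of where you started. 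No rotation \emph{of $u$ about $v$ or of $v$ about $u$} can break out, because all target cells other than the two in the current hole are full. Your proposed potential does not help here: the ``minimal axis-aligned box spanned by $\{u,v\}$'' is just the $1\times 2$ box they occupy, so $\Phi\equiv 0$, and the tie-breaker ``distance to the bounding box of the remaining nodes'' does not decrease either, since the pair is orbiting in place.

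The paper handles exactly this case with a separate first stage that your outline is missing: it moves the \emph{hole} (the pair of empty cells) to the perimeter, not by rotating $u$ and $v$, but by rotating \emph{other} nodes of the shape into the hole one after another, so that the two empty cells migrate row by row toward the boundary (with a small case analysis when they meet further empty cells along the way). Only after the hole reaches the perimeter does the paper run the ``walk the $2$-line outward'' argument analogous to yours. So your plan would go through if you add this hole-migration stage (and restrict the swinging argument to the perimeter case), but as written the bounded case analysis you promise cannot succeed using only rotations of $u$ about $v$ and $v$ about $u$.
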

\begin{proof}

If a move is available on a shape but not on the perimeter, that move can be transferred to the perimeter through transformations.
\\
\\
Let us consider a shape that has only two holes which are next to each other. We will call them cell $A$ and $E$ respectively. Without loss of generality let us consider that cell $A$ is west of $E$. We name $S$ the cell south of $A$ and $SE$ the cell south of $E$. Now we propose the following method. The node residing in cell $S$ rotates to the cell $E$ and then the node in cell $SE$ rotates to the cell $A$. After these two moves, cell $S$ is renamed to $A$ and $SE$ is renamed to $E$. The cell south of the new $A$ and $E$ are named $S$ and $SE$ respectively. This method can be repeated indefinitely until the two white cells reach the end of the grid.

We have shown how two white cells can``travel'' south. By reversing the method the two white cells can travel north. The two white cells can travel east and west with a simple transformation before the method. After naming the four cells above, the node in cell $S$ rotates to $E$. After this step we have two white cells, $A$ and $S$. Now rename $S$ into $E$ and $E$ into $S$. Now repeat the method and the two white positions will start travelling east. For the opposite direction, rotate the node in position $SE$ to cell $A$, and rename $A$ into $SE$ and $SE$ into $A$. Now repeat the method and the white cells can travel east. By using a combination of the above steps, the two white cells can move freely through the grid and reach any place.

Now consider a shape where there are more than $2$ holes but at least two are next to each other. We will show that the two white cells that are side by side can travel to the perimeter of the shape using the above method even if they reach other white cells. Without loss of generality suppose that the two white cells are the southernmost pair travelling south. If the travelling nodes ever meet a white cell south of them, we just need to show that we can turn this cell $(S)$ from a white one to a black one. Thus we perform the following act:
Check if there is a node west of $A$. If there is, move him south of $A$. Note that the cell west of $S$ is always a black node because we cannot have two white cells next to each other south of $A$. If not check if there is a node north of $A$. Note that there is always a node north of $A$, else a move would never be available which is prohibited. Now move the node north of $A$ to the west of $A$ then south of $A$. This move is available if there is a node northwest of $A$. If there is not, move the node north of $A$, east of $A$ then south of $A$. This move is available only if there is a node northeast of $A$. If there is not, move the node east to the northeastern cell of $A$ then east of $A$ then south of $A$.  If there is not one, we reach the following shape.See figure \ref{fig:polynomial}

The first node available northeast of $A$ or northwest of $A$ can be moved with rotations to the cell south of $A$. If a node is not available on either of those lines then either the connectivity of the shape is breached because we know that there are nodes north of $A$ which have to be connected with the rest of the shape, or the $A$ cell is not part of the shape. Both of those are not allowed so there is always a node northeast or northwest of $A$. Thus there is always a way to fill the cell south of $A$. In a similar fashion if the cell south of $E$ was white, we could always fill it.
\\
\\
A 2-seed can be extracted from a shape if a single rotation move is available on the perimeter of the shape.
\\
\\
Without the loss of generality suppose that nodes $A$ and $B$ are east-west to each other respectively, they are the southernmost nodes with a move available and none of them have any nodes in the two cells directly south of them. This means that the other can move as well. If node $A$ can perform a rotation to move south of $B$ then afterwards $B$ can perform a rotation to move west of $A$. Then $B$ can rotate south of $A$ and $A$ west of $B$. This four step method can be repeated forever until either one of them finds a node south. If one of them finds a node south, called $C$ i.e. $A$ find a node south of him then $B$ moves north of $A$ and $A$ moves east of $C$. Then $A$ and $C$ perform the four step method. If the two nodes keep repeating this eventually they will disconnect from the shape as a 2-seed.
\\
\\
If a move is not available a 2-seed cannot be extracted.
\\
\\
If a move is not available then no node can perform a rotation move. This means that no node can begin the process to extract himself as part of a 2-seed.
\end{proof}

\begin{figure}[!hbtp]
\centering{
\includegraphics[width=0.4\textwidth]{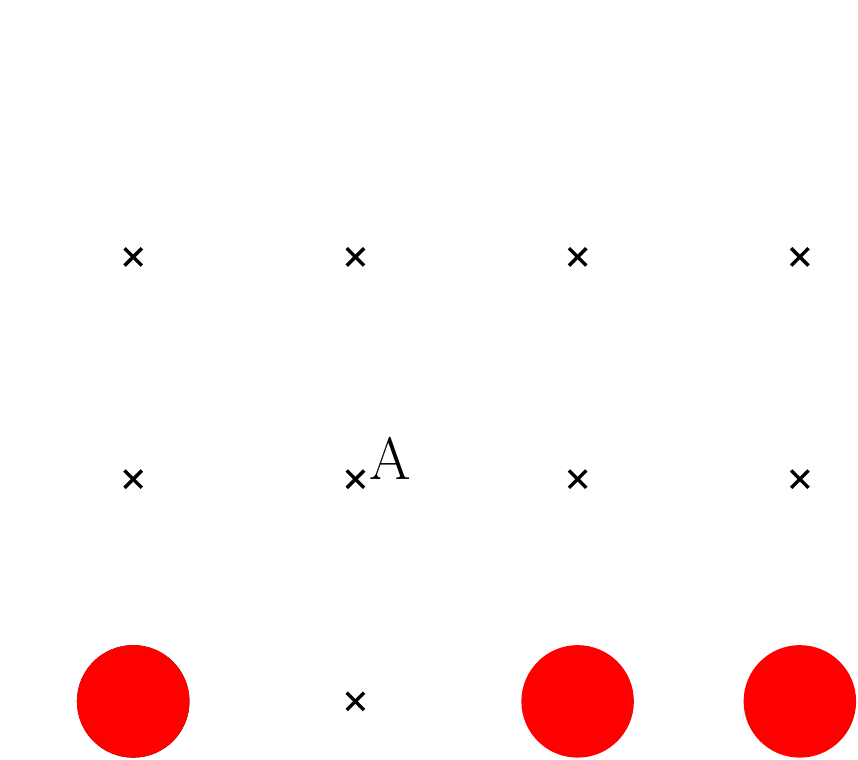}
}
\caption{Proof for 2seed extraction to the perimeter.}\label{fig:polynomial}
\end{figure}

\begin{theorem}
{\sc Rotation-Transformability} belongs to $\rem{P}$.
\end{theorem}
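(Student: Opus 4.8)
The plan is to reduce \textsc{Rot-Transformability} to three checks, each plainly polynomial: (i) whether $A$ and $B$ have matching colour multiplicities, (ii) whether $A$ and $B$ are the same shape up to translation and rotation, and (iii) whether each of $A$ and $B$ admits at least one available rotation move. Concretely, I would establish the characterisation: \emph{$A$ and $B$ are transformable to each other by rotations only iff some placement of $B$ is colour-consistent with $A$ and, in addition, either $A$ coincides with that placement of $B$ or both $A$ and $B$ admit an available rotation move.} Given this, the decision procedure chessboard-colours the grid, counts blacks and reds of $A$ and of $B$ and checks the two unordered pairs agree (a single odd translation of $B$ swaps its two counts, so it is the unordered pair that matters), then scans the $O(n)$ nodes of each shape testing whether any of the at most four candidate rotations per node has its two-cell trajectory free, and finally tests shape-equality against the four rotations of $B$ after aligning bounding boxes. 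All of this is clearly in $\rem{P}$, so the theorem follows once the characterisation is proved.

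The necessity direction is immediate from earlier results. Colour-consistency of (some placement of) $B$ with $A$ is forced by Proposition \ref{impossibility-color-inc} together with the fact that the problem is considered up to isometry. And if $A$ admits no rotation move then $A$ is \emph{blocked}: no node can move, so the only shape reachable from $A$ is $A$ itself, whence $A\rsa B$ implies $B$ is (the chosen placement of) $A$; symmetrically for $B$, using reversibility of rotations (Proposition \ref{pro:par-equiv}).

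For sufficiency, suppose $A$ and $B$ are colour-consistent (after fixing such a placement of $B$) and each admits an available rotation move; I want $A\rsa B$. By Lemma \ref{pro:move-2seed}, from $A$ we can extract a $2$-seed, i.e.\ reach a configuration consisting of what remains of $A$ together with a detached $2$-line lying outside it. From there I would replay the construction of Theorem \ref{the:rotation-generic}: grow the $2$-line into a mobile $4$-line by pulling out a second $2$-line, then use the $4$-line to ferry the remaining nodes, one at a time, into the canonical line-with-leaves of Proposition \ref{pro:line-with-leaves}, leaving a fixed residual $2$-line in a predetermined position. This yields $A\rsa \widehat{L}_A$, where $\widehat{L}_A$ depends only on the colour class of $A$: every $2$-line carries exactly one black and one red, so removing one from $A$ shifts both colour counts by one irrespective of which $2$-line is chosen, and the line-with-leaves together with the fixed residue is then uniquely determined. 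The same argument gives $B\rsa \widehat{L}_B$, and colour-consistency yields $\widehat{L}_A=\widehat{L}_B$. Reversing the second transformation and using transitivity (Proposition \ref{pro:par-equiv}) we obtain $A\rsa \widehat{L}_A=\widehat{L}_B\rsa B$, as required.

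The step I expect to be the main obstacle is exactly this composition of Lemma \ref{pro:move-2seed} with Theorem \ref{the:rotation-generic}: the latter is stated for a \emph{connected} shape equipped with an external seed, whereas after self-extracting the $2$-seed the remainder of $A$ need not be connected (removing two perimeter nodes can sever a narrow neck). Two resolutions are worth checking. One is to verify that the extraction route of Lemma \ref{pro:move-2seed} can always be chosen so that the residual shape stays connected. The other, which seems more robust, is to note that the transformation of Theorem \ref{the:rotation-generic} uses connectivity of the source only to locate the second $2$-line and thereafter merely ferries individual nodes with the $4$-line: that part goes through verbatim for a disconnected family of components, since each component of size $\ge 2$ still yields a $2$-line with the help of the seed, and a size-$1$ component is simply picked up by the $4$-line. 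Making one of these two arguments precise is the only non-routine part; the remainder is bookkeeping and the polynomial-time bounds are evident.
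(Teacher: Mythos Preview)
Your approach is essentially the same as the paper's: reduce to checking whether each shape has an available rotation (via Lemma \ref{pro:move-2seed}) and then invoke Theorem \ref{the:rotation-generic}, handling the blocked case by a direct equality test. You are in fact more careful than the paper on two points: the paper's proof silently omits the colour-consistency check (which is required, since Theorem \ref{the:rotation-generic} only applies to colour-consistent pairs), and it does not address your connectivity concern at all. That concern is legitimate, and your second proposed resolution is the right one and is what actually makes the argument go through: once a freely mobile $4$-line exists, connectivity of the remainder is irrelevant, since the $4$-line can visit any component, extract a $2$-line from it (or pick up a singleton directly), and deposit nodes into the canonical line-with-leaves.
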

\begin{proof}
In Lemma \ref{pro:move-2seed}, we proved that we can extract a 2-seed from a shape iff a move is initially available. By Theorem \ref{the:rotation-generic}, if both shapes $A$ and $B$ have a 2-seed available then they can be transformed to each other. It follows that two shapes $A$ and $B$ can be transformed to each other iff both have a move available. Now we define a $n\times n$ grid where any shape with $n$ nodes can fit in. The time it takes for an algorithm to check if one of the shapes has a move available is $O(n)$. If for example the algorithm checks each individual node, that takes $O(1)$ time and, therefore, $O(n)$ time for $n$ nodes. So for two shapes it takes $O(n)$ time to check if a move is available in each of the shapes. Thus, the problem belongs to $\rem{P}$.

If the two shapes, $A$ and $B$, are the same, then they can trivially transform to each other without any moves. An algorithm can check this by simply mapping the grid of the first shape, which takes $O(n)$ time, and then check the second shape to see if the black cells match. If it ever finds a black cell that does not exist on the first shape, or it finds a white cell when it expected a black cell, then it decides that the two shapes are not the same. This process takes $O(n)$ time because it is equals to the time it takes to visit every $n$ node. Thus, it takes $O(n)$ time to check if $A=B$.
\end{proof}

\section{Rotation and Connectivity Preservation}
\label{sec:rotation-connectivity}

In this section, we restrict our attention to transformations that transform a connected shape $A$ to one of its color-consistent shapes $B$, without ever breaking the connectivity of the shape on the way. As already mentioned in the introduction, connectivity preservation is a very desirable property for programmable matter, as, among other positive implications, it guarantees that communication between all nodes is maintained, it minimizes transformation failures, requires less sophisticated actuation mechanisms, and increases the external forces required to break the system apart.

We begin by proving that {\sc RotC-Transformability} can be decided in deterministic polynomial space.

\begin{theorem}
{\sc RotC-Transformability} is in $\rem{PSPACE}$.
\end{theorem}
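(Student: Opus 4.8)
The plan is to obtain membership in $\rem{PSPACE}$ through Savitch's theorem, by first showing that {\sc RotC-Transformability} lies in $\rem{NPSPACE}$. The key structural observation is that, because the constraint of the problem forces \emph{every} intermediate shape to be connected, every such shape on $n$ nodes has graph-diameter at most $n-1$ and therefore, after translating its lowest occupied row and leftmost occupied column to $0$ (translation is a symmetry that the problem definition explicitly allows us to quotient by), fits inside a fixed $n\times n$ window of the grid. Hence any configuration arising along a valid transformation can be stored in polynomial space, e.g.\ as the list of the $n$ occupied coordinate pairs, using $O(n\log n)$ bits.

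The nondeterministic procedure is then the obvious ``guess the moves one at a time'' algorithm. Keep a single current shape $C$, initialized to the normalized form of $A$. Repeat the following: nondeterministically pick a node $u$ of $C$ and one of the four rotation directions; check in polynomial time that this rotation is enabled (the two cells of $u$'s trajectory are empty in $C$); apply it to get $C'$; check in polynomial time, by a single graph traversal, that $C'$ is connected, rejecting the branch otherwise; set $C$ to the normalized form of $C'$; and if $C$ now coincides with $B$ up to the finitely many rotations of the grid, accept. To make the machine halt on every branch, run alongside a step counter that is incremented each round and that rejects once it exceeds the number of connected $n$-node shapes inside the $n\times n$ window; since this number is at most $\binom{n^2}{n}$, the counter needs only $O(n\log n)$ bits. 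If $A\rsa B$ under the connectivity constraint, there is a walk from $A$ to $B$ in the finite configuration graph, hence a simple one whose length is below that bound, so some accepting branch exists; conversely, an accepting branch exhibits an explicit sequence of enabled, connectivity-preserving rotations from $A$ to $B$. Thus the procedure decides {\sc RotC-Transformability}.

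Every primitive check — enabledness of a rotation, connectivity of a shape, equality up to the grid rotations and the translation already factored out — is polynomial time and hence polynomial space, and at no moment does the machine hold more than a constant number of $O(n\log n)$-bit objects (current shape, candidate successor, the target $B$, the counter). Therefore {\sc RotC-Transformability} $\in \rem{NPSPACE}$, and by Savitch's theorem $\rem{NPSPACE}=\rem{PSPACE}$, which gives the claim.

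The only point that needs a moment of care — it is not really an obstacle — is the justification of the polynomial space bound itself: one must argue that the intermediate shapes cannot spread over an unboundedly large region, which is exactly where connectivity preservation enters (a connected set of $n$ cells has diameter $<n$), and one must note that although the configuration graph is of exponential size, only the logarithm of that size, i.e.\ polynomially many bits, is needed both to store a configuration and to bound the search depth. I would also remark that this naive route is less immediate for the unconstrained {\sc Rot-Transformability}, whose intermediate shapes may be disconnected and far-flung; but that variant is handled by the structural argument of the previous section and in fact lies in $\rem{P}$.
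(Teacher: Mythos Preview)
Your proof is correct and follows essentially the same route as the paper: exhibit an $\rem{NPSPACE}$ machine that guesses one rotation at a time while maintaining only the current configuration and a polynomially-bounded step counter, then invoke Savitch's theorem. If anything, your version is a bit more careful than the paper's on two points the paper leaves implicit: you explicitly justify why every intermediate shape fits in an $n\times n$ window (via the connectivity constraint bounding the diameter), and you explicitly verify connectivity after each move.
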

\begin{proof}
We first present a \emph{nondeterministic} Turing machine (NTM) $N$ that decides {\sc Transformability} in polynomial space. $N$ takes as input two shapes $A$ and $B$, both consisting of $n$ nodes and at most $4n$ edges. A reasonable representation is in the form of a binary $n\times n$ matrix (representing a large enough sub-area of the grid) where an entry is 1 iff the corresponding position is occupied by a node. Given the present configuration $C$, where $C=A$ initially, $N$ nondeterministically picks a valid rotation movement of a single node. This gives a new configuration $C^\prime$. Then $N$ replaces the previous configuration with $C^\prime$ in its memory, by setting $C\leftarrow C^\prime$. Moreover, $N$ maintains a counter $moves$ (counting the number of moves performed so far), with maximum value equal to the total number of possible shape configurations, which is at most $2^{n^2}$ in the binary matrix encoding of configurations. To set up such a counter, $N$ just have to reserve for it $n^2$ (binary) tape-cells, all initialized to 0. Every time $N$ makes a move, as above, after setting a value to $C^\prime$ it also increases $moves$ by 1, i.e., sets $moves\leftarrow moves+1$. Then $N$ takes another move and repeats. If it ever holds that  $C^\prime=B$ (may require $N$ to perform a polynomial-space pattern matching on the $n\times n$ matrix to find out), then $N$ accepts. If it ever holds that the counter is exhausted, that is, all its bits are set to 1, $N$ rejects. If $A$ can be transformed to $B$, then there must be a transformation beginning from $A$ and producing $B$, by a sequence of valid rotations, without ever repeating a shape. Thus, some branch of $N$'s computation will follow such a sequence and accept, while all non-accepting branches will reject after at most $2^{n^2}$ moves (when $moves$ reaches its maximum value). If $A$ cannot be transformed to $B$, then all branches will reject after at most $2^{n^2}$ moves. Thus, $N$ correctly decides {\sc Transformability}. Every branch of $N$, at any time, stores at most to shapes (the previous and the current), which requires $O(n^2)$ space in the matrix representation, and a $2^{n^2}$-counter which requires $O(n^2)$ bits. It follows that every branch uses space polynomial in the size of the input. So, far we have proved that {\sc Transformability} is decidable in nondeterministic polynomial (actually, linear) space. By applying Savitch's theorem \cite{Sa70} \footnote{Informally, Savitch's theorem establishes that any NTM that uses $f(n)$ space can be converted to a deterministic TM that uses only $f^2(n)$ space. Formally, it establishes that for any function $f\colon \bbbn\ra \bbbn$, where $f(n)\geq \log n$, $\rem{NSPACE}(f(n))\subseteq\rem{SPACE}(f^2(n))$.}, we conclude that {\sc Transformability} is also decidable in deterministic polynomial space (actually, quadratic), i.e., it is in $\rem{PSPACE}$.
\end{proof}

Recall that in the line folding problem, the initial shape is a (connected) horizontal line of any even length $n$, with nodes $u_1,u_2,\ldots,u_n$, and the transformation asks to fold the line onto itself, forming a double-line of length $n/2$ and width 2. As part of the proof of Theorem \ref{the:rotc-rot}, it was shown that if $n>4$, then it is impossible to solve the problem by rotation only (if $n=4$, it is trivially solved, just by rotating each endpoint above its unique neighbor). In the next proposition, we employ again the idea of a seed to show that with a little external help the transformation becomes feasible.

\begin{proposition}
If there is a 3-line seed $v_1,v_2,v_3$, horizontally aligned over nodes $u_3,u_4,u_5$ of the line, then the line can be folded.
\end{proposition}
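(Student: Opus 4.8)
The plan is to use the seed to break the deadlock that, as the proof of Theorem~\ref{the:rotc-rot} shows, traps a bare line of length $>4$: without extra help only the two endpoints can rotate and every such rotation is immediately undone, so no progress is ever made. Place the line on row $0$ and the seed on row $1$, so that $u_i=(0,i)$ and $v_i=(1,i+2)$. Then columns $3,4,5$ form a solid $2\times 3$ block, and — crucially — the seed node $v_1$ sits directly above $u_3$, where it can act as a connectivity bridge. First I would absorb the two leftover nodes $u_1,u_2$ onto the top row: rotate $u_1$ up around $u_2$ to $(1,2)$ (legal since $(1,1),(1,2)$ are empty, and connectivity is trivially kept), then rotate $u_2$ up around $u_1$; the only admissible image is $(1,1)$ because $(1,3)=v_1$ is occupied, and the swept cells $(0,1),(1,1)$ are empty. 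After this second move $u_2$ has left the bottom line, yet the shape stays connected through the chain $u_2$--$u_1$--$v_1$--$u_3$. The resulting configuration has top row $u_2,u_1,v_1,v_2,v_3$ over bottom row $u_3,\dots,u_n$: a partially folded line, namely a $2\times 3$ block with a two-cell overhang on its top left and the untouched tail $u_6,\dots,u_n$ hanging off its bottom right.

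From here the plan is to iterate a ``rolling'' step while maintaining the invariant that the current shape is a solid $2\times k$ block (with $k\geq 3$) whose top row holds the already-folded nodes, whose bottom row is a prefix of the still-unfolded line, with the remaining tail attached to the block's bottom-right cell, together with a repository of a bounded number of spare nodes parked on top of (or beside) the block — a bookkeeping device reminiscent of the repository used in the proof of Theorem~\ref{the:rotation-generic} and of the phase structure of the folding in the proof of Theorem~\ref{the:rotc-rot}. The core of the argument is a fixed, connectivity-preserving rotation sequence that migrates one node of the tail from the bottom layer up onto the top layer, one column further to the right, restoring the invariant with $k$ increased by one and the tail shortened by one; iterating it empties the tail. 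The thickness supplied by the seed is exactly what makes such a sequence possible: at the moment when the transferred node would otherwise cut the remainder of the tail loose, the block — and the spare nodes parked above it — furnishes an alternative connected path, the slack that a width-one line inherently lacks.

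The main obstacle is this rolling step. One has to pin down the precise invariant (in particular how many spare nodes the repository must hold and where), write down the explicit rotation sequence, and verify at every intermediate configuration both that each move is a legal rotation (the two cells of its swept region are empty) and that the shape never disconnects; this is a finite but fiddly case analysis. Once the tail is exhausted, a short endgame folds the constant-size repository into place and, when $n$ is not a multiple of $4$, adds one final rotation exactly as in the proof of Theorem~\ref{the:rotc-rot}, leaving the nodes $u_1,\dots,u_n$ in an $(n/2)\times 2$ double line with the three seed nodes as a small attached appendage.
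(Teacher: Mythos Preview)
Your plan follows the same strategy as the paper: exploit the seed as a connectivity bridge above the line, absorb the leftmost line nodes into the top layer, and then iterate a fixed local subroutine that grows the folded part one column at a time. The paper's proof differs only in its opening moves and parity bookkeeping: it rotates just $u_1$ up to obtain a 4-line seed $u_1,v_1,v_2,v_3$, then shifts this 4-seed two positions to the right so that it sits over the third through sixth nodes of the remaining (even-length) line $u_2,\ldots,u_n$; the paper takes $n$ odd so that the $n+3$ total nodes fold exactly into a double line that absorbs the seed, whereas you keep $n$ even and leave the seed as an appendage. From that point the paper, like you, hands the iterative step off to an external reference --- a figure rather than a promised case analysis --- so your explicit acknowledgement that the rolling subroutine still needs to be written out and checked is at the same level of rigor as the paper's own argument.
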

\begin{proof}
We distinguish two cases, depending on whether we want the seed to be part of the final folded line or not. If yes, then we can either use a 4-line seed directly, over nodes $u_3,u_4,u_5,u_6$, or a 3-line seed but require $n$ to be odd (so that $n+3$ is even). If not, then $n$ must be even. We show the transformation for the first case, with $n$ odd and a 3-line seed (the other cases can be then treated with minor modifications).

We first show a simple reduction from an odd line with a 3-line seed starting over its third node to an even line with a 4-line seed starting over its third node. By rotating $u_1$ clockwise over $u_2$, we obtain the 4-line seed $u_1,v_1,v_2,v_3$. It only remains to move the whole seed two positions to the right (by rotating each of its 2-lines clockwise around themselves). In this manner, we obtain an even-length line $u_2,\ldots,u_n$ and a 4-line seed starting over its third node, without breaking connectivity. Therefore, in what follows we may assume that the initial shape is an even-length line $u_1,u_2,\ldots,u_n$ with a 4-line seed $v_1,v_2,v_3,v_4$ horizontally aligned over nodes $u_3,u_4,u_5,u_6$.

See Figure \ref{fig:line-folding-cp}.
\end{proof}

\begin{figure}[!hbtp]
\centering{
\includegraphics[width=0.5\textwidth]{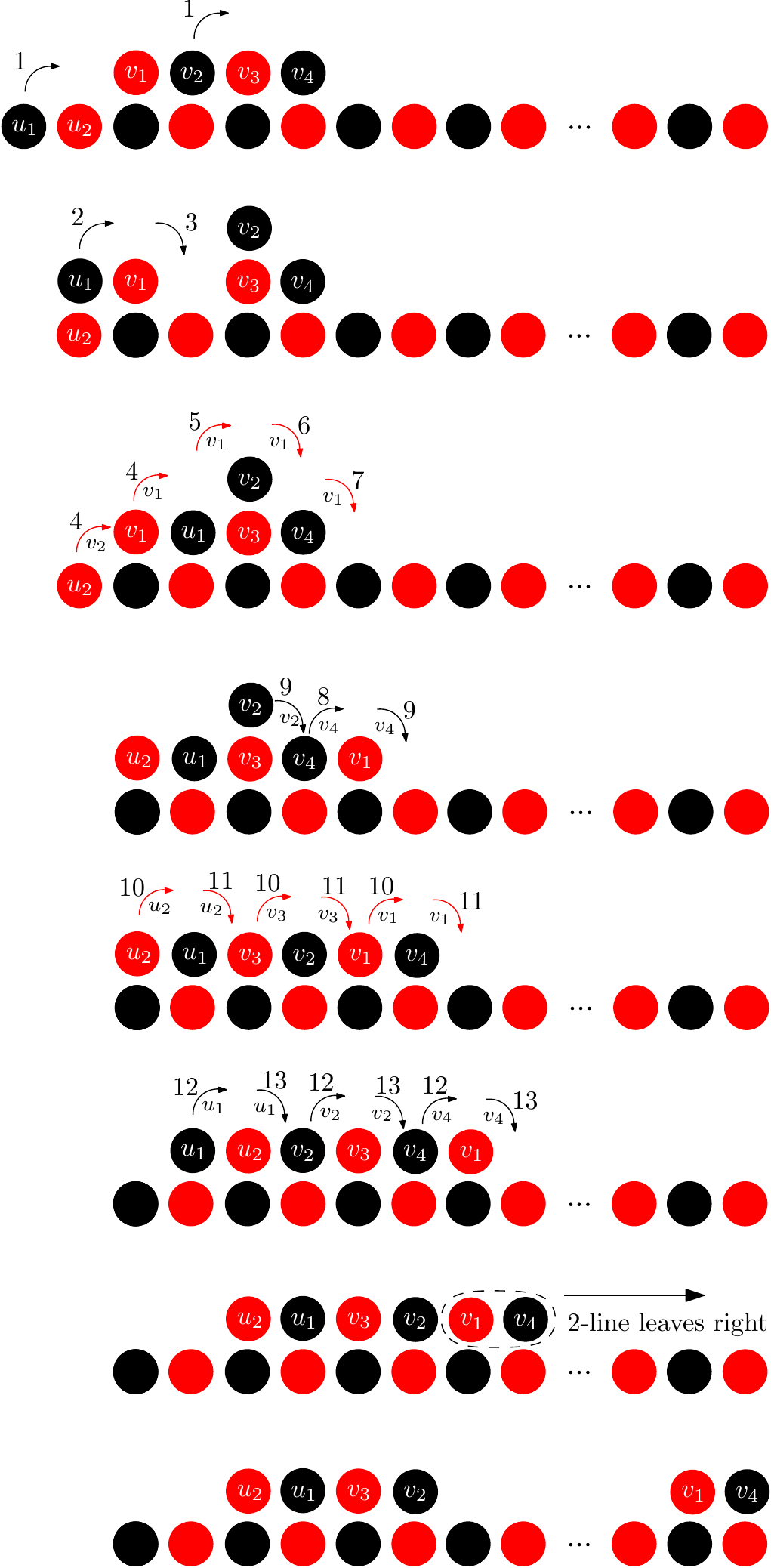}
}
\caption{The main subroutine of line folding with connectivity preservation.} \label{fig:line-folding-cp}
\end{figure}

We believe that in order to transform one shape to another we first need to find a seed that can both move on the perimeter of a shape and being able to reach every possible cell of the perimeter. We call this for simplicity: traverse the perimeter. After this is guaranteed we want the seed to be able to extract nodes and move them gradually to specific cells of the perimeter in order to create the desired shape. Thus the seed could actually simulate the rotation-sliding movement. We begin with the smallest seed possible and try to tackle the problem of moving on the perimeter of a line. Note that we do not allow the nodes of the shape to move in order to simplify and strengthen the model.

\begin{proposition}
A 2-seed cannot traverse the perimeter a line without breaking the connectivity.
\end{proposition}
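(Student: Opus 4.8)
The plan is to analyze exhaustively what a 2-seed (two adjacent nodes, one black and one red, say horizontally aligned) can do when placed against the perimeter of a straight line shape $L = u_1, u_2, \ldots, u_n$, with the extra rule (stated just before the proposition) that nodes of $L$ itself are \emph{not} allowed to move. So the only mobile objects are the two seed nodes, and at every intermediate step the overall shape (seed $\cup$ $L$) must remain connected. First I would set up coordinates: put $L$ on row $0$, so $u_i$ occupies $(0,i)$, and observe that the seed, to be connected to $L$, must have at least one of its two nodes adjacent to some $u_i$. Since $L$ occupies an entire row, the cells adjacent to $L$ from outside are exactly row $1$ (above) and row $-1$ (below), plus the two cells $(0,0)$ and $(0,n+1)$ at the ends. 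I would then enumerate the possible positions of a connected 2-seed relative to $L$: it can sit as a horizontal domino on row $1$ or row $-1$ (over two consecutive $u_i$'s), as a vertical domino hanging off a single $u_i$ (occupying $(1,i),(2,i)$ or $(-1,i),(-2,i)$), or it can involve the end cells $(0,0),(0,n+1)$ which effectively extend the line.

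The key step is a local mobility analysis. For a horizontal seed domino $\{(1,i),(1,i+1)\}$ sitting on top of the line, I would check each of the (up to) eight candidate rotations of its two nodes and show that every one of them either (a) is blocked by the line itself or by the other seed node, or (b) breaks connectivity, or (c) is immediately reversible and leads to no progress — in particular, the only moves available are the seed node "curling up" into a vertical configuration $\{(1,i),(2,i)\}$ and back, or one node climbing over the other within rows $1$ and $2$, which is exactly the same trap described in the proof of Theorem~\ref{the:rotc-rot} for folding a line of length $>4$. The crucial observation is that a 2-seed can only "walk" along the top of the line if, after moving, both its nodes remain supported by the line in a way that permits a further move; but a rotation around a line node always requires two consecutive empty cells in the trajectory, and to actually translate the domino one step to the right (say from over $u_i,u_{i+1}$ to over $u_{i+1},u_{i+2}$) one would need to lift a node off the line, which is fine, but then to re-attach it one step further one needs a support node there — and the support can only be the \emph{other} seed node or a line node; since line nodes don't move and there are only two seed nodes, one checks that the domino is stuck oscillating over a fixed pair of columns, exactly as in the folding argument. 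I would formalize "stuck" via an invariant: the leftmost column occupied by any seed node never decreases (and the rightmost never increases) beyond an $O(1)$ window, unless connectivity is broken.

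So the argument structure is: (1) classify all connected seed positions against a line into finitely many local types; (2) for each type, list all enabled rotations (respecting that line nodes are immobile and the whole shape stays connected); (3) show the resulting state transition graph on these local configurations, restricted to connectivity-preserving moves, is confined to a bounded region of the perimeter — concretely, the seed cannot cross from, say, the top-left region to the bottom or to the far right end — hence it cannot \emph{traverse} (reach every cell of) the perimeter. The reduction to the already-proven line-folding obstruction is the clean way to avoid re-doing the case analysis from scratch: a 2-seed on top of a line is literally the two endpoint nodes $u_1, u_2$ of a longer line in that earlier proof, and the exact same "trapped in a three-position loop" phenomenon applies.

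The main obstacle I anticipate is making the finite case analysis genuinely complete and not hand-wavy: one must be careful about the two \emph{ends} of the line ($(0,0)$ and $(0,n+1)$), where the seed has more room and could conceivably wrap around the tip of the line; I would need to check that wrapping around an endpoint (which, with both seed nodes, amounts to forming a small $2\times 2$-ish blob at the tip) still leaves the seed unable to proceed down the \emph{other} long side — because advancing along the bottom row would again require support from below, which doesn't exist. A secondary subtlety is the orientation/coloring constraint: by the color-preservation Observation, the black seed node stays on black cells throughout, which further restricts reachable positions and can be used to shorten the case analysis. I would present the proof as a figure-driven enumeration (pointing to a figure analogous to Figure~\ref{fig:line-folding}) together with the monotone-column invariant that certifies the seed is confined.
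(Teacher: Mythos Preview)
Your central claim is false: a 2-seed \emph{can} walk freely along the top (or bottom) of the line while preserving connectivity. Concretely, from the horizontal domino $\{(1,j),(1,j+1)\}$ over a line in row $0$, rotate the node at $(1,j+1)$ counterclockwise around $(1,j)$ to reach $(2,j)$ (both required cells $(2,j)$ and $(2,j+1)$ are empty), obtaining the vertical domino $\{(1,j),(2,j)\}$; then rotate $(2,j)$ clockwise around $(1,j)$ to reach $(1,j-1)$. Every intermediate shape is connected (the node at $(1,j)$ stays adjacent to the line throughout), and you have translated the domino one column to the left. Iterating moves the seed arbitrarily far along the top of the line. So the ``stuck oscillating over a fixed pair of columns'' claim and the proposed monotone-column invariant both fail, and the reduction to the line-folding obstruction of Theorem~\ref{the:rotc-rot} is not valid: in that theorem the two mobile nodes are \emph{endpoints} of the line (in row $0$), not a domino sitting \emph{on top} of it in row $1$, and the local geometry is genuinely different.

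The actual obstruction --- and the entire content of the paper's proof --- is the case you relegate to a ``secondary subtlety'': the seed cannot turn the corner at an endpoint of the line. The paper takes for granted that the seed reaches the end (which, as above, it can) and then performs a short exhaustive case analysis there: once one seed node has rounded the tip, the other seed node is pinned (any move disconnects the first from the line), and the remaining node is trapped in a small loop of positions that never progresses to the underside. Your plan would be salvageable if you dropped the incorrect ``stuck on top'' argument entirely and instead carried out carefully the endpoint case analysis you only sketch; that is where the proof lives.
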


\begin{proof}
Observe figure \ref{fig:2-seed-black}, shape number $1$. The 2-seed has reached the end of the line and now it tries to move east of the line and then south of it. Note that the black node has $2$ possible moves. It can either perform a single move and stop above the red node, or perform two subsequent moves and stop east of the red node. No matter the choice, the red node then is not able to move because any possible move would break the connectivity of the shape. See figure \ref{fig:2-seed-black}, shapes number $2,3$. Thus the black node has to stay in place and only the red node can move now. Observe now that the red node is trapped in a loop of $2$ possible moves (excluding the act of moving above the black node which would not allow us to try and move under the line): become the new endpoint or move under the end of the line. The first case leads necessarily to the second case because it is the only legal move available (excluding the move of looping back). See figure \ref{fig:2-seed-red}. But when we each the second case, once more we are limited into looping back to the initial positions. Thus a 2-seed cannot traverse the perimeter of a line without breaking the connectivity
\end{proof}

\begin{figure}[!hbtp]
\centering{
\includegraphics[width=1.0\textwidth]{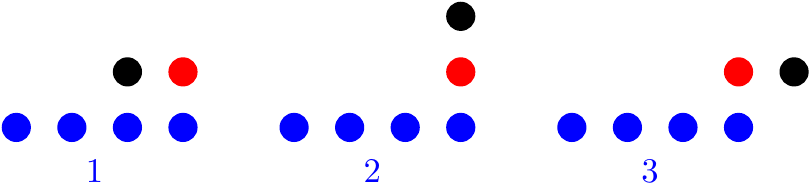}
}
\caption{Black node movement for 2-seed.}\label{fig:2-seed-black}
\end{figure}

\begin{figure}[!hbtp]
\centering{
\includegraphics[width=1.0\textwidth]{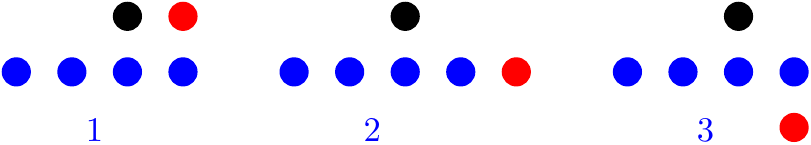}
}
\caption{Red node movement for 2-seed).}\label{fig:2-seed-red}
\end{figure}

\begin{proposition}
A 4-seed cannot traverse the perimeter of a line without breaking the connectivity.
\end{proposition}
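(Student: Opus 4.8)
The plan is to mimic the structure of the preceding proposition for the 2-seed: exhaustively analyze the possible configurations a 4-seed can be in while attempting to ``round the corner'' at the endpoint of a line, and show that in every case the seed is trapped in a bounded loop of configurations from which it cannot reach the cells on the far side of the line. First I would fix the shape to be a horizontal line $u_1, u_2, \ldots, u_n$ with $n$ sufficiently large (the argument is purely local near an endpoint, so only a constant number of line-nodes matter), and assume the 4-seed has already travelled along the top of the line and reached the right endpoint $u_n$; the only way to ``traverse'' the perimeter is for the seed to get from the top side to the bottom side by going around $u_n$. Since the line-nodes are forbidden to move, every movement in the transformation is a rotation of one of the four seed-nodes around a neighbor (another seed-node or a line-node), and connectivity of the whole shape (line $\cup$ seed) must be preserved at every step.

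The key steps, in order, would be: (1) enumerate the connectivity-preserving configurations of a 4-seed attached at the endpoint region --- up to the symmetry of ``reflecting the whole picture'', there are only finitely many shapes a connected 4-node blob can take (L-shapes, T-shape, S/Z-shapes, square, straight 4-line, etc.), and only finitely many ways each can be positioned so that it stays connected to $u_n$ and its neighbors; (2) for each such configuration, identify which seed-nodes have a legal rotation available that does not disconnect the shape, exactly as was done for the 2-seed where the black node was shown to have only two moves; (3) build the (finite) directed graph on these configurations whose edges are single legal rotations, and observe that the strongly connected component reachable from the ``arrived at the top-endpoint'' configuration does not contain any configuration in which a seed-node occupies a cell strictly below the line on the far side. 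A convenient invariant to make step (3) clean is a potential/parity argument: colour the grid as in the checkerboard colouring, and/or track the ``winding'' of the seed around $u_n$; every reachable configuration keeps, say, the bottom-most seed-node at row $\geq i_{\text{line}} - 1$ except in configurations that are immediately forced to loop back, so the seed never crosses to the underside.

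I expect the main obstacle to be the case analysis itself: unlike the 2-seed, a 4-seed has genuinely more internal freedom --- it can reshape itself (fold from a 4-line into a square into an S-shape, etc.) while staying connected, so one must be careful that no clever reconfiguration sneaks a node around the corner. In particular, the delicate sub-case is when the seed partially wraps the endpoint (one seed-node ends up directly east of $u_n$, another northeast, forming an L that ``hugs'' the corner); here I would have to check that the node east of $u_n$ cannot then rotate down to the south-east of $u_n$ because the required two consecutive empty cells in its trajectory are blocked, or that if it does, the resulting shape is disconnected (the rest of the seed loses its only attachment), or that it is forced straight back. Handling this wrapping case, together with the bookkeeping needed to show the configuration graph has no ``escaping'' path, is where essentially all the work is; everything else is routine enumeration that can be compressed into a figure, exactly as the paper does for the 2-seed with Figures~\ref{fig:2-seed-black} and~\ref{fig:2-seed-red}. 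I would therefore present the proof as: a short paragraph reducing to the local endpoint picture, a statement of the parity/winding invariant, a figure exhibiting the reachable configurations and the loop they are trapped in, and a one-line conclusion that none of them places a seed-node on the underside of the line, so the 4-seed cannot traverse the perimeter without breaking connectivity.
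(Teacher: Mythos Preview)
Your plan is workable in principle --- the local configuration space near the endpoint is finite, so an exhaustive search of the reachability graph would settle the question --- but it is not the route the paper takes, and the difference is worth noting.

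The paper avoids a full enumeration by a ``last-crossing-time'' argument. With the line on row $i$, for the seed to traverse the perimeter it must eventually move all four of its nodes from rows $>i$ down to rows $\leq i$. The paper looks at the \emph{last} step $t_{\mathrm{last}}$ at which the number of seed-nodes in rows $\leq i$ rises from $2$ to $3$: at that instant exactly one seed-node $u$ remains above, necessarily in row $i{+}1$ (otherwise connectivity is already broken), the move at $t_{\mathrm{last}}$ must have been a clockwise rotation, and --- by the very definition of ``last'' --- none of the three lower nodes can ever return to rows $>i$ afterwards. This pins the configuration at $t_{\mathrm{last}}$ down to just two cases (according to the colour of $u$, which fixes its column relative to the endpoint), and in each case a short local check shows $u$ is stuck: its only support from below cannot move without first moving $u$, and $u$ cannot move without some node going back above row $i$, which is now forbidden.

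What the paper's device buys is that ``last time'' collapses the entire history into a single frozen snapshot with two sub-cases, whereas your enumeration must track the whole endpoint configuration graph and argue no path escapes. The weak point in your plan is the proposed invariant: the checkerboard colouring is already preserved by every rotation and does not by itself obstruct crossing to the underside, and a ``winding number around $u_n$'' for a reshaping $4$-node blob is neither obviously well-defined nor obviously monotone under single rotations. If you drop the invariant you are left with the raw case analysis, which is doable but noticeably larger than the paper's two cases; if you want a short proof, the last-crossing argument is the idea you are missing.
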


\begin{proof}

Consider the last time, \emph{tlast}, that the black and red nodes in rows$\leq$i increases from $2$ to $3$. This means that either a black or a red moved at \emph{tlast} from $i+1$ to $i$. From now on, none of those $3$ nodes can go back to rows$>$i and there is one node remaining in rows$>$i. Actually that node u must necessarily be in row $i+1$, otherwise the connectivity would have broken. So no node from rows$\leq$i can return to rows$>$i anymore and there is a single node \emph{u} remaining in row $r+1$. We begin by finding the possible shapes that meet the above requirements.

The rotation of the node at \emph{tlast} was necessarily clockwise, as the closest counterclockwise move to the line is from $(i+1,j+1)$ to $(i,j)$, but it requires $(i,j)$ to be empty before rotating, but then $2$ nodes in rows$\leq$i and only one additional (u) in rows$>$i cannot support connectivity. We will now distinguish the \emph{tlast} into cases.

If u is a black node: If u is at position $(i+1,j-2)$ then it is stuck forever (blue node cannot move and the other black and red cannot go up any more to carry u. It also cannot be at $(i+1,j+2)$ as this does not permit a clockwise move of a red from $(i+1,j+1)$, so it has to be at $(i+1,j)$. See figure \ref{fig:4seed-tlast} shape number $2$. Node u is connected to A only via the red below it, which therefore cannot move unless u moves first(because no node can return to row$>$i any more to support u via another path. But the only way for u to move is for the black southeast node to move first, which in turn cannot move unless the rightmost red moves up which is impossible as no node may return to row$>$i (that red node can move down but then the only available movement is to return to its previous position.

If u is a red node: It cannot be at $(i+1,j-3)$ as before and it cannot be at $(i+1,j+1)$ as the rotation at \emph{tlast} was then necessarily from $(i+1,j)$ which is blocked by u. Observe that the clockwise rotation could not have been from $(i+1,j+2)$. The only way to support connectivity in this case with $2$ nodes in rows$\leq$i and $2$ in rows$>$i, is by having the following shape but then a clockwise rotation of the upper black is impossible. Therefore, if u is a red node it has to be at $(i+1,j-1)$. See figure \ref{fig:4seed-tlast} shape number $1$. Either nodes in rows$\leq$i cannot move at all, or if the bottom black is far away, the rightmost black is trapped in a loop going down and then up to its original positions as before.

Therefore the 4-seed cannot traverse the perimeter of a line without breaking the connectivity.

\qed
\end{proof}

\begin{figure}[!hbtp]
\centering{
\includegraphics[width=0.5\textwidth]{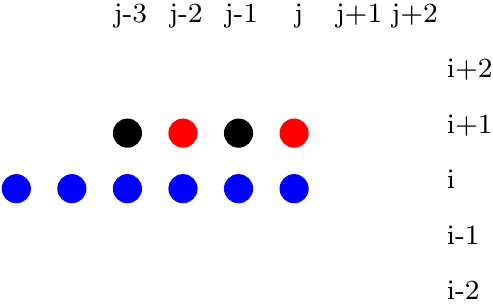}
}
\caption{Starting position of 4-seed on a line.}\label{fig:4seed-start}
\end{figure}

\begin{figure}[!hbtp]
\centering{
\includegraphics[width=1.0\textwidth]{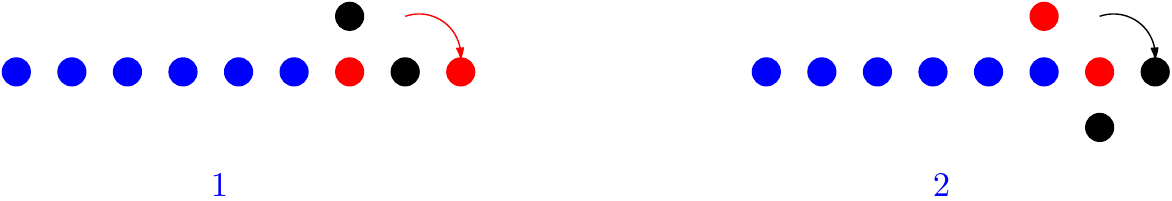}
}
\caption{Black and red cases for node u).}\label{fig:4seed-tlast}
\end{figure}

\begin{proposition}
A 6-seed can traverse the perimeter of a \emph{discrete-convex} shape without breaking the connectivity.
\end{proposition}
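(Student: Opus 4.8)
\emph{Proof plan.} The plan is to fix a concrete $6$-node seed $\cs$ — we take it to be a $2\times 3$ block, which we think of as a freely mobile $2$-line riding on a $4$-node ``carrier'' — and to show that, when $\cs$ is placed flush against the cell-perimeter of a discrete-convex shape $A$, it can be driven once around the perimeter of $A$ by rotations only, with the nodes of $A$ held fixed and the combined shape $A\cup\cs$ connected at every intermediate step. The perimeter of $A$ is a closed rectilinear walk, so every corner it makes is either a \emph{convex} corner (interior angle $90\degree$) or a \emph{reflex} corner (interior angle $270\degree$); between corners the boundary runs in straight segments. It therefore suffices to implement three local ``gadgets'': (G1) advance $\cs$ one cell along a straight segment; (G2) carry $\cs$ through a convex corner; (G3) carry $\cs$ through a reflex corner. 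Chaining these around the whole boundary yields a full traversal, and along the way the nodes of $\cs$ sweep over every cell of the cell-perimeter, which is what ``traverse the perimeter'' asks for.

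For (G1) I would isolate a \emph{canonical walking configuration}: $\cs$ sitting as a $2\times 3$ block immediately outside a straight portion of the boundary, long side parallel to it. From there, a short fixed ``rolling'' sequence of rotations slides $\cs$ by one unit into the next canonical configuration; the only things to verify are that every rotation is enabled (its two trajectory cells are empty) and that connectivity is never lost. Both are routine on a straight stretch: the rows (resp.\ columns) of $A$ bordering that stretch are intervals, so the side of $\cs$ facing \emph{away} from $A$ — where all the swinging happens — is clear, and at each intermediate step at least one node of $\cs$ stays orthogonally adjacent to a node of $A$. This already disposes of all the flat parts of the boundary; in particular it shows a $6$-seed \emph{can} walk the perimeter of a line, in contrast with the $2$- and $4$-seed impossibility results proved above (a line is itself discrete-convex, so that case is subsumed here).

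The heart of the argument, and where discrete-convexity really earns its keep, is (G2), with (G3) a gentler variant. At a convex corner $\cs$ must reorganize from resting on one edge to resting on the perpendicular edge, sweeping through the full $270\degree$ of exterior angle; in the extreme case of a width-$1$ feature this is exactly the ``go around the end of a line'' manoeuvre that trapped the smaller seeds. The idea is to peel the mobile $2$-line off the trailing end of $\cs$ and walk it around the outside of the corner through the surrounding empty quadrant until it gains a foothold (an orthogonal contact with $A$) on the new edge, while the $4$-node carrier stays anchored on the old edge; then bring the carrier around and re-form the canonical configuration on the new edge. Connectivity holds because one of the two pieces always touches $A$. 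The decisive point — and precisely what fails for the $4$-seed, whose lone leftover node gets locked into a $3$-cycle as in the preceding proposition — is that the $6$-seed's extra pair is a genuinely mobile, re-anchorable $2$-line, so $\cs$ never gets stuck. At a reflex corner the seed instead turns through only $90\degree$, into a region carrying $270\degree$ of clearance, and the same peel-and-reattach routine goes through with even more room.

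I expect the corner analysis to be the main obstacle. Making (G2) and (G3) rigorous amounts to enumerating the finitely many local pictures that a corner of a \emph{discrete-convex} shape can present (this finiteness is itself a consequence of the hypothesis: rows and columns being intervals means there are no thin spikes, slits, or interior bays that could complicate or pinch shut the quadrant used by the peeled $2$-line) and, for each, exhibiting an explicit rotation sequence that realizes the turn while certifying, step by step, that every shape along the way is connected and every rotation has its two free cells. Once (G1)--(G3) are in hand, iterating them along the alternation of straight segments and corners that makes up the perimeter carries the $6$-seed once around $A$, in $O(n)$ moves, completing the proof.
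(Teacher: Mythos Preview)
Your high-level plan --- fix a $2\times 3$ block as the $6$-seed, reduce traversal to a finite set of local gadgets, and exhibit an explicit rotation sequence for each --- is exactly what the paper does. Where you differ is in how the cases are cut. You parametrize by the geometric feature of the perimeter (straight segment, convex corner, reflex corner); the paper instead parametrizes by the occupancy pattern of the column immediately ahead of the seed, i.e., which of the cells $(i,j+3)$, $(i-1,j+3)$, $(i-2,j+3)$ contain nodes of $A$ when the seed sits on rows $i,i-1$ and columns $j,j+1,j+2$ with $A$ below. This yields four cases (slide forward; climb a unit step; hit a wall and reorient; nothing ahead, so drop around), each dispatched by an explicit rotation sequence given in figures, with discrete convexity invoked exactly as you anticipate: to guarantee that the side of the seed facing away from $A$ is clear.

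Two caveats on your version. First, because the seed is three cells wide it can straddle several perimeter corners at once (the tip of a line, or a step followed immediately by a wall), so your corner-by-corner decomposition has to absorb these into compound cases; at that point you are effectively enumerating occupancy patterns ahead of the seed, which is the paper's cut from the start. Second, your ``peel the $2$-line and walk it through the empty quadrant'' heuristic does not preserve connectivity as stated: if the $2$-line is moving through empty space while only the carrier touches $A$, the shape disconnects the moment the $2$-line leaves the carrier --- ``one of the two pieces always touches $A$'' is not enough; you need both pieces connected to $A$ or to each other at every step. The paper's sequences never split $\cs$ into independently mobile sub-shapes; all six nodes stay mutually connected and at least one of them stays adjacent to $A$ throughout. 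Your plan goes through once you replace the peel-and-walk picture by such direct sequences. (Minor slip: at a reflex corner of $A$ the exterior wedge is $90\degree$, not $270\degree$ --- the seed has \emph{less} room there, though the turn is indeed shorter.)
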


\begin{proof}
Consider a folded 6-seed occupying cells $(i,j)$, $(i,j+1)$, $(i,j+2)$ and $(i-1,j)$, $(i-1,j+1)$, $(i-1,j+2)$. Since the shape is discrete-convex, iff there is any node present in cells $(i,j+3)$ or $(i-1,j+3)$, there can be no node present in cells $(i,j-1)$ or $(i-1,j-1)$. For the same reason if there is a node in cells $(i-2,j)$ or $(i-2,j+1)$ or $(i-2,j+2)$ there can be no node in cells $(i+1,j)$ or $(i+1,j+1)$ or $(i+1,j+2)$.In order to place this seed at those cells, one of the neighbouring cells has to be occupied by a node. Without loss of generality, suppose that the 6-seed tries to move east. There are $4$ distinct cases for this move. Note that in the following $4$ cases we assume the absolute minimum amount of neighbouring nodes. If at any case there were more present at the shape, the rotations would be the exact same without any modification or problem.

A node occupies cell $(i,j+3)$. In order for this shape to be discrete-convex, a node has to be present in cell $(i-1,j+3)$. In this case the 6-seed has to move north and performs the rotations described in figure \ref{fig:orientation}, \ref{fig:orientation1}, \ref{fig:orientation2}, \ref{fig:orientation3} if a node is present in cell $(i+1,j+3)$ and the orientations described in figures \ref{fig:orientation-climb} if a node is not present in cell $(i+1,j+3)$ in order to keep moving.

A node occupies cell $(i-1,j+3)$ and no node occupies cell $(i,j+3)$. In this case the 6-seed performs the rotations described in figure \ref{fig:step} in order to \emph{climb} the step. Note that since the 6-seed begins and ends the move while preserving its shape, it is guaranteed that any number of steps can be climbed this way.

A node occupies cell $(i-2,j+3)$ and no node occupies cell $(i-1,j+3)$ or $(i,j+3)$. In this case the 6-seed performs the rotations described in figure \ref{fig:slide} rotations in order to \emph{slide} east.

No nodes occupy cells $(i-2,j+3)$, $(i-1,j+3)$ and $(i,j+3)$. In this case the 6-seed performs the rotations described in figure \ref{fig:dstep} in order to reach a shape that matches the conditions of the step case. Therefore the 6-seed can now perform a \emph{climb} move in order to continue.

We can replicate the results for south, west, north directions by simply rotating the whole shape by $90$, $180$, $270$ degrees respectively.
\end{proof}

\begin{figure}[!hbtp]
\centering{
\includegraphics[width=1.0\textwidth]{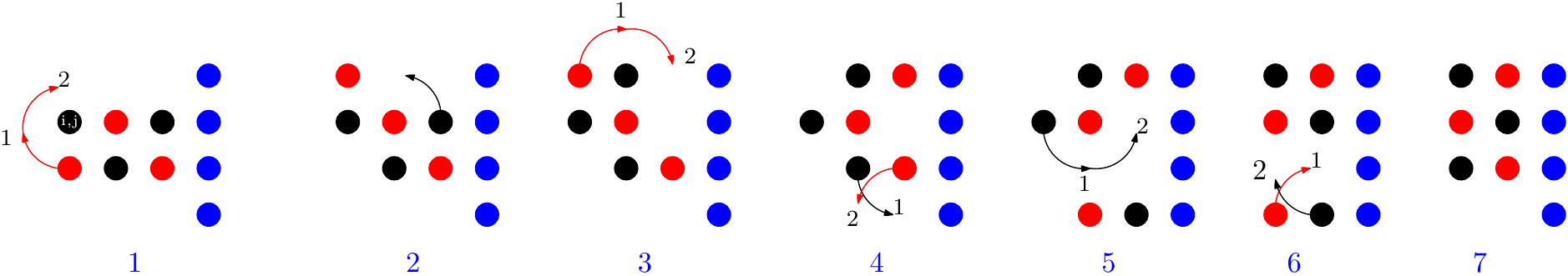}
}
\caption{Method for going north.}\label{fig:orientation}
\end{figure}

\begin{figure}[!hbtp]
\centering{
\includegraphics[width=1.0\textwidth]{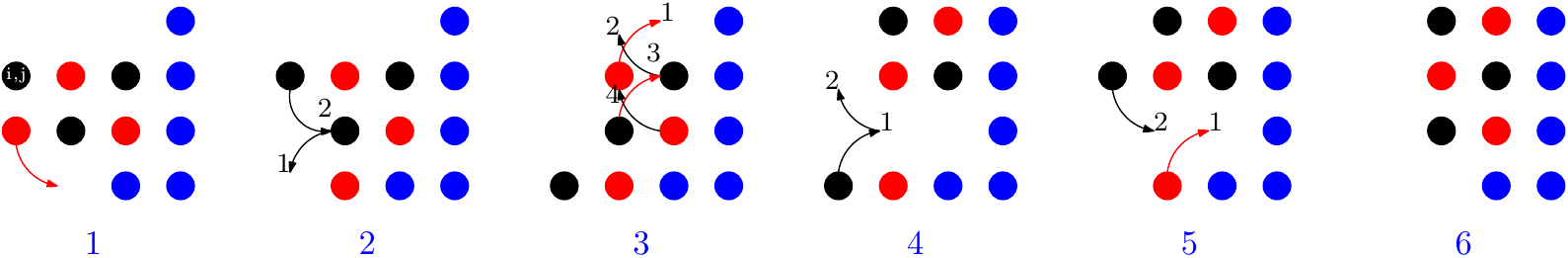}
}
\caption{Method for going north with node at (i-2,j+2).}\label{fig:orientation1}
\end{figure}

\begin{figure}[!hbtp]
\centering{
\includegraphics[width=1.0\textwidth]{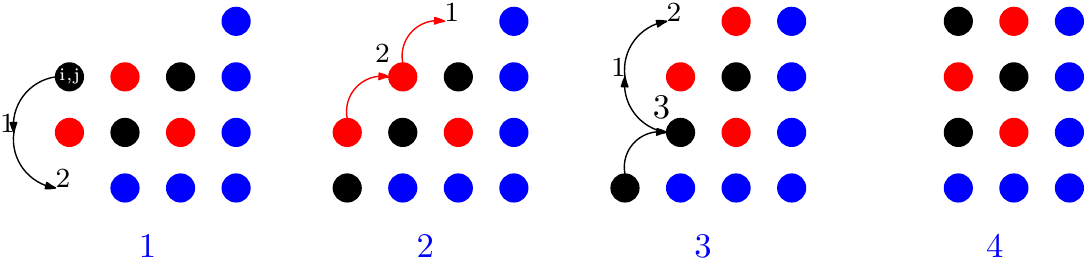}
}
\caption{Method for going north with node at (i-2,j+1).}\label{fig:orientation2}
\end{figure}

\begin{figure}[!hbtp]
\centering{
\includegraphics[width=1.0\textwidth]{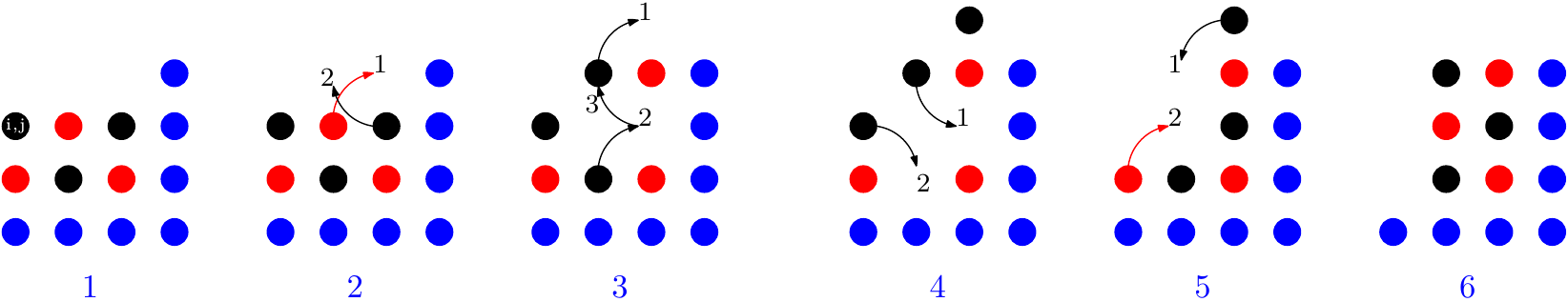}
}
\caption{Method for going north with node at (i-2,j).}\label{fig:orientation3}
\end{figure}

\begin{figure}[!hbtp]
\centering{
\includegraphics[width=1.0\textwidth]{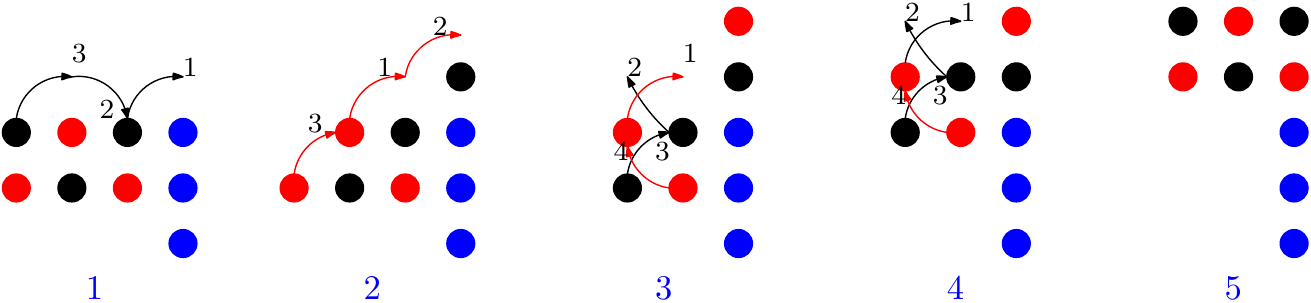}
}
\caption{Method for climbing north.}\label{fig:orientation-climb}
\end{figure}

\begin{figure}[!hbtp]
\centering{
\includegraphics[width=1.0\textwidth]{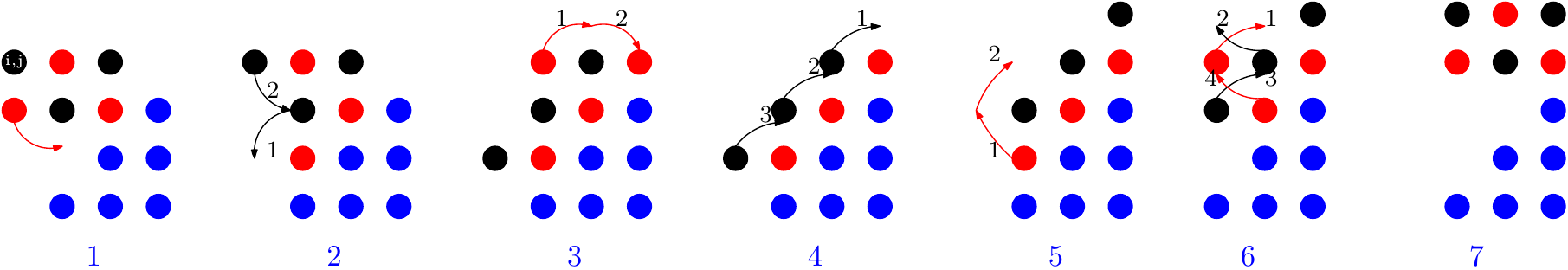}
}
\caption{Method for climbing a step.}\label{fig:step}
\end{figure}

\begin{figure}[!hbtp]
\centering{
\includegraphics[width=0.5\textwidth]{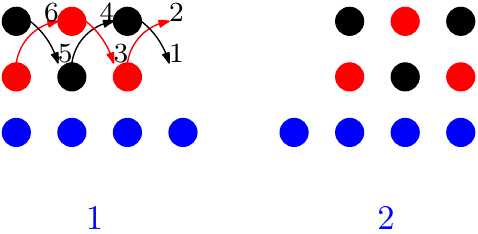}
}
\caption{Method for sliding.}\label{fig:slide}
\end{figure}

\begin{figure}[!hbtp]
\centering{
\includegraphics[width=1.0\textwidth]{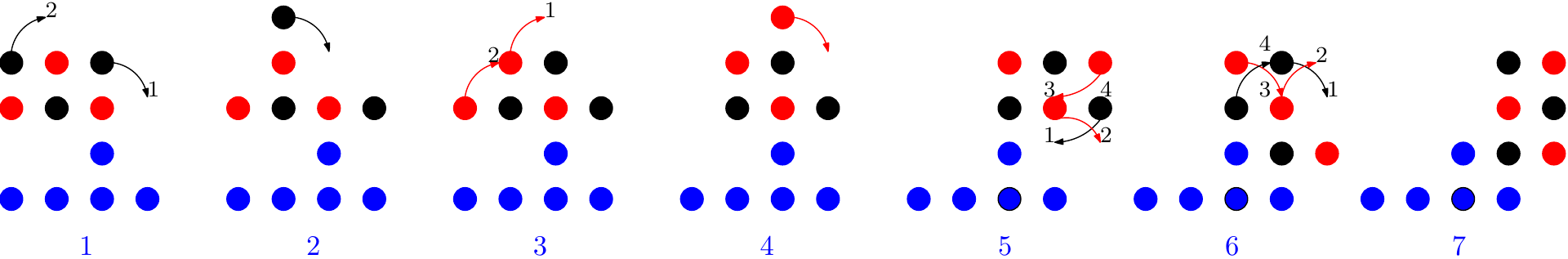}
}
\caption{Method for transforming.}\label{fig:dstep}
\end{figure}

\begin{proposition}
An 8-seed can traverse the perimeter of a \emph{discrete-convex} shape without breaking the connectivity.
\end{proposition}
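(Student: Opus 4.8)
The plan is to view the 8-seed as a 6-seed that carries an additional 2-line, i.e.\ as a $2\times 4$ rectangle occupying cells $(i,j),(i,j+1),(i,j+2),(i,j+3)$ together with $(i-1,j),(i-1,j+1),(i-1,j+2),(i-1,j+3)$, and to show that each of the four traversal moves established for the 6-seed in the previous proposition --- slide east, climb a step, the transforming move of Figure~\ref{fig:dstep}, and the northward/climb-north routines --- can be simulated by this longer rectangle, possibly at the cost of a few extra rotations that shuffle the trailing 2-line forward. (Recall from the discussion before these propositions that the relevant 8-seeds are exactly those obtained when the 6-seed detaches a pair of nodes from the shape and attaches them to itself, so proving traversability for this family is what is needed.)

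First I would repeat the same case analysis as in the 6-seed proof, according to which of the cells $(i,j+3)$, $(i-1,j+3)$, $(i-2,j+3)$ is occupied, noting that discrete-convexity imposes the same symmetric exclusions on the cells on the opposite side of the rectangle, so no node ever obstructs the region into which the seed turns. In the three ``easy'' cases --- the pure east slide, climbing a single step, and the doubly-empty case that reduces to a step --- the extra 2-line trails strictly behind the leading $2\times 3$ block in the direction of motion, hence it never blocks any of the rotations; so I would first run the corresponding 6-seed routine on the leading $2\times 3$ block exactly as before (ignoring the trailing 2-line), and then drag the trailing 2-line one unit forward by two clockwise rotations around itself. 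The invariant I would maintain throughout is that the seed begins and ends every macro-move as a horizontal $2\times 4$ block (up to a global rotation), so that consecutive macro-moves compose and an arbitrarily long portion of the perimeter can be covered.

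The delicate case is the northward move --- when $(i,j+3)$, and hence by discrete-convexity also $(i-1,j+3)$, is occupied --- because then the seed must turn a corner, and a $2\times 4$ block is too long to pivot in the space that a $2\times 3$ block uses in the 6-seed routine. Here I would temporarily detach the trailing 2-line and park it on the portion of the perimeter just behind the seed (or, equivalently, refold the 8-seed into a compact L-shaped configuration), then execute the 6-seed's northward/climb-north routine with the compact $2\times 3$ core, and finally re-attach the parked 2-line once the core has cleared the corner and is again a horizontal $2\times 3$ block. I expect this parking-and-reattachment step to be the main obstacle: one must verify that discrete-convexity always leaves enough free perimeter cells immediately behind the seed to hold the two extra nodes without those nodes themselves disconnecting from the shape or blocking the subsequent rotations --- intuitively, discrete-convexity forbids thin spikes on both sides of the seed at once, which is precisely what guarantees the needed room. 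Once the corner case is settled, the moves for the remaining directions follow by rotating the whole picture by $90\degree$, $180\degree$, and $270\degree$, exactly as in the 6-seed proposition, which completes the proof.
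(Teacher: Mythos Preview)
Your proposal is a plan rather than a proof, and it diverges from the paper's argument in a way that introduces real gaps.

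First, a bookkeeping slip: once you have declared the 8-seed to occupy columns $j,\ldots,j+3$, the cells $(i,j+3)$ and $(i-1,j+3)$ are part of the seed, so your case analysis cannot be ``according to which of the cells $(i,j+3)$, $(i-1,j+3)$, $(i-2,j+3)$ is occupied''. The relevant boundary cells are $(i,j+4)$, $(i-1,j+4)$, $(i-2,j+4)$, exactly as in the paper.

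More substantively, the paper does \emph{not} reduce the 8-seed to the 6-seed. It repeats the four-case analysis from scratch and exhibits explicit rotation sequences for the $2\times 4$ block in each case (the routines of Figures~\ref{fig:8orientation}, \ref{fig:8orientation-climb}, \ref{fig:8step}, \ref{fig:8step1}, \ref{fig:8slide}, \ref{fig:8dstep}); the 8-seed is always kept together as a single $2\times 4$ block at the start and end of every macro-move. Your reduction idea --- run the 6-seed routine on the leading $2\times 3$ sub-block and then drag the trailing 2-line --- rests on the unproved claim that the trailing 2-line ``never blocks any of the rotations''. But the 6-seed routines (for instance the slide of Figure~\ref{fig:slide} and the step of Figure~\ref{fig:step}) move the seed east precisely by rotating its back-column nodes forward, and those rotations require the cells immediately west of the $2\times 3$ block to be empty; in your setup those cells are occupied by the trailing 2-line. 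So the ``easy'' cases are not actually easy under your scheme.

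For the corner case you propose to ``temporarily detach the trailing 2-line and park it on the perimeter'', then re-attach after the core turns. You yourself flag this as the main obstacle and do not resolve it: you would have to show, for every discrete-convex configuration, that (a) there are free perimeter cells adjacent to the trailing 2-line where it can sit while remaining connected to the shape, (b) the 6-seed core can complete its northward routine without the parked pair blocking it, and (c) the pair can afterwards rejoin the core without the combined object ever disconnecting. None of this is argued. The paper sidesteps all of it by never splitting the 8-seed; its northward routines (Figures~\ref{fig:8orientation} and~\ref{fig:8orientation-climb}) rotate the full $2\times 4$ block around the corner in one piece.
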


\begin{proof}
Consider a folded 8-seed occupying cells $(i,j)$ $(i,j+1)$ $(i,j+2)$ $(i,j+3)$ and $(i-1,j)$, $(i-1,j+1)$, $(i-1,j+2)$, $(i-1,j+3)$. Since the shape is discrete-convex, iff there is any node present in cells $(i,j+4)$ or $(i-1,j+4)$, there can be no node present in cells $(i,j-1)$ or $(i-1,j-1)$. For the same reason if there is a node in cells $(i-2,j)$ or $(i-2,j+1)$ or $(i-2,j+2)$ or $(i-2,j+3)$ there can be no node in cells $(i+1,j)$ or $(i+1,j+1)$ or $(i+1,j+2)$ or $(i+1,j+3)$.In order to place this seed at those cells, one of the neighbouring cells has to be occupied by a node. Without loss of generality, suppose that the 8seed tries to move east. There are $4$ distinct cases for this move. Note that in the following $4$ cases, if not mentioned, we assume the absolute minimum amount of neighbouring nodes. If at any case there were more present at the shape, the rotations would be the exact same without any modification or problem.

A node occupies cell $(i,j+4)$. In order for this shape to be discrete-convex, a node has to present in cell $(i-1,j+4)$. In this case the 8seed has to move north and performs the rotations described in figure \ref{fig:8orientation} if a node is present in cell $(i+1,j+4)$; and the orientations described in figure \ref{fig:8orientation-climb} if a node is not present in cell $(i+1,j+4)$ in order to keep moving.

A node occupies cell $(i-1,j+4)$ and no node occupies cell $(i,j+4)$. In this case the 8-seed performs the rotations described in figure \ref{fig:8step} and in figure \ref{fig:8step1} order to \emph{climb} the step. Note that since the 8-seed begins and ends the move while preserving its shape, it is guaranteed that any number of steps can be climbed this way.

A node occupies cell $(i-2,j+4)$ and no node occupies cell $(i-1,j+4)$ or $(i,j+4)$. In this case the 8-seed performs the rotations described in figure \ref{fig:8slide} rotations in order to \emph{slide} east.

No nodes occupy cells $(i-2,j+4)$, $(i-1,j+4)$ and $(i,j+4)$. In this case the 8-seed performs the rotations described in figure \ref{fig:8dstep} in order to reach a shape that matches the conditions of the first case. Therefore the 8-seed can now perform a \emph{climb} move in order to continue.

We can replicate the results for south, west, north directions by simply rotating the whole shape by $90$, $180$, $270$ degrees respectively.
\end{proof}

\begin{figure}[!hbtp]
\centering{
\includegraphics[width=1.0\textwidth]{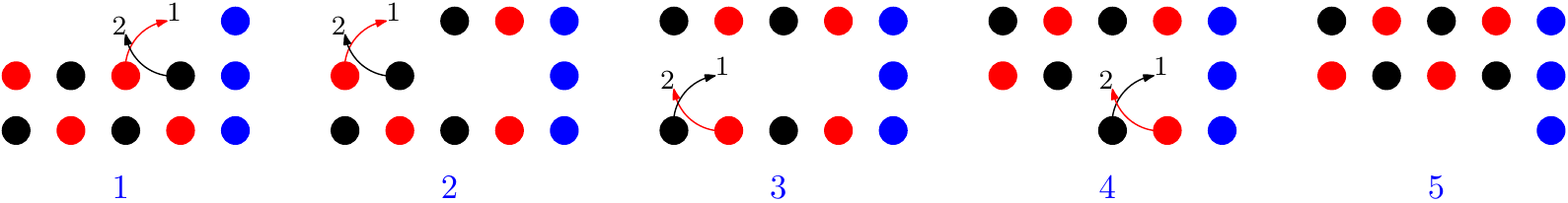}
}
\caption{Method for going north.}\label{fig:8orientation}
\end{figure}

\begin{figure}[!hbtp]
\centering{
\includegraphics[width=1.0\textwidth]{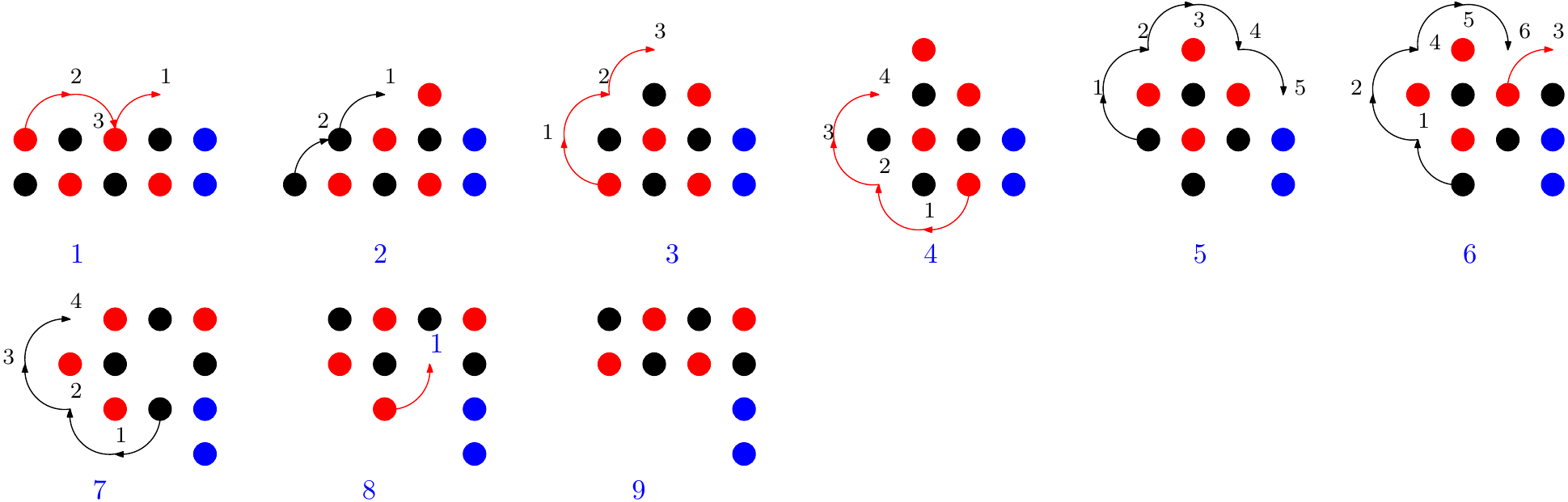}
}
\caption{Method for climbing north.}\label{fig:8orientation-climb}
\end{figure}

\begin{figure}[!hbtp]
\centering{
\includegraphics[width=1.0\textwidth]{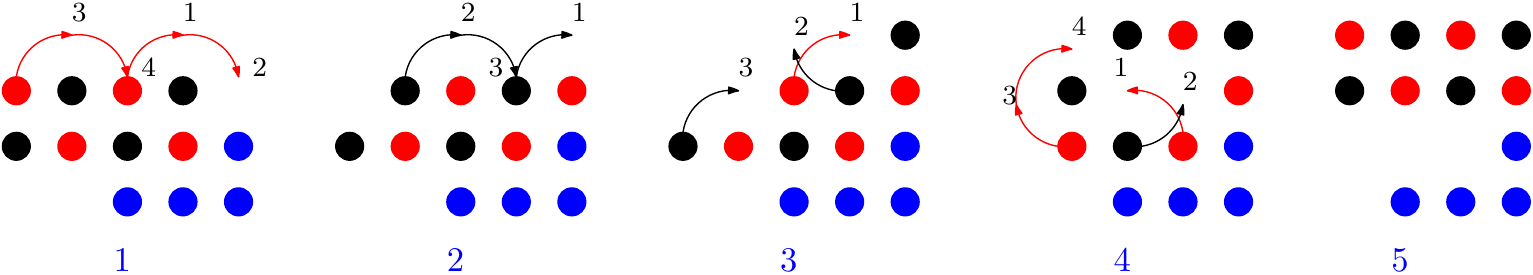}
}
\caption{Method for climbing a step with a node at cell (i-2,j+2).}\label{fig:8step}
\end{figure}

\begin{figure}[!hbtp]
\centering{
\includegraphics[width=1.0\textwidth]{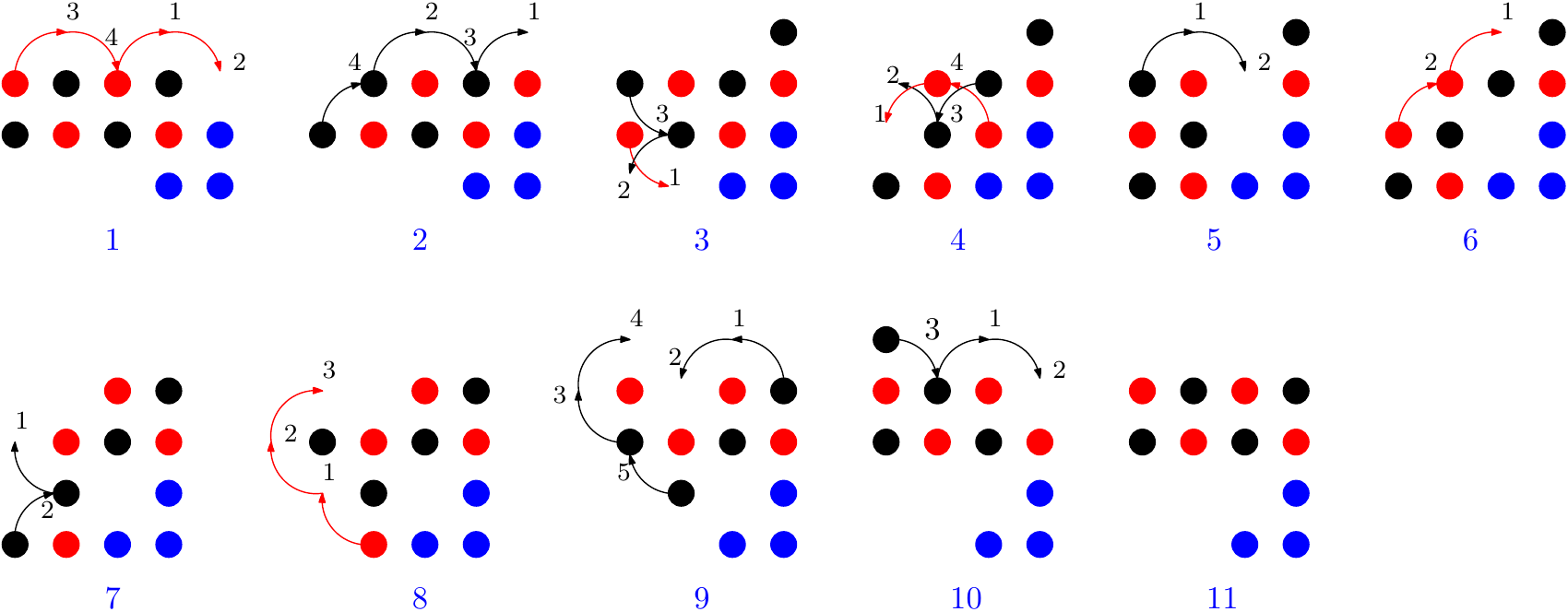}
}
\caption{Method for climbing a step.}\label{fig:8step1}
\end{figure}

\begin{figure}[!hbtp]
\centering{
\includegraphics[width=1.0\textwidth]{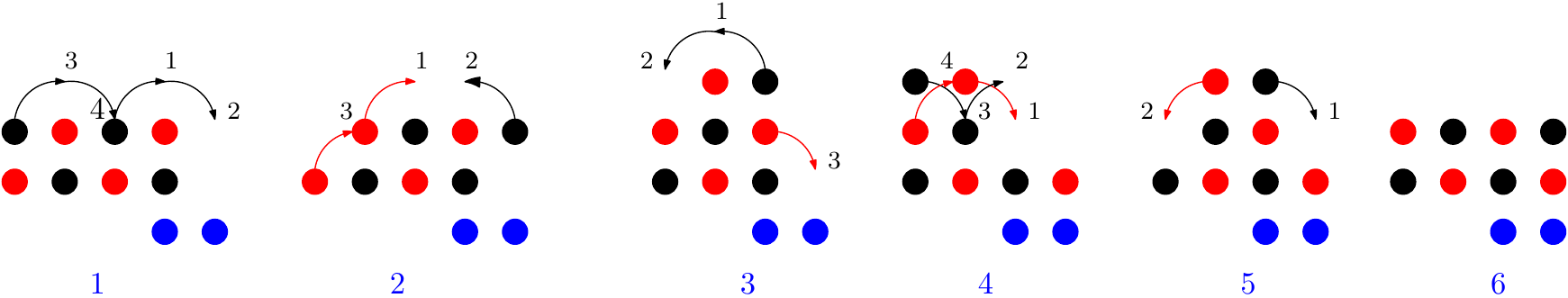}
}
\caption{Method for sliding.}\label{fig:8slide}
\end{figure}

\begin{figure}[!hbtp]
\centering{
\includegraphics[width=1.0\textwidth]{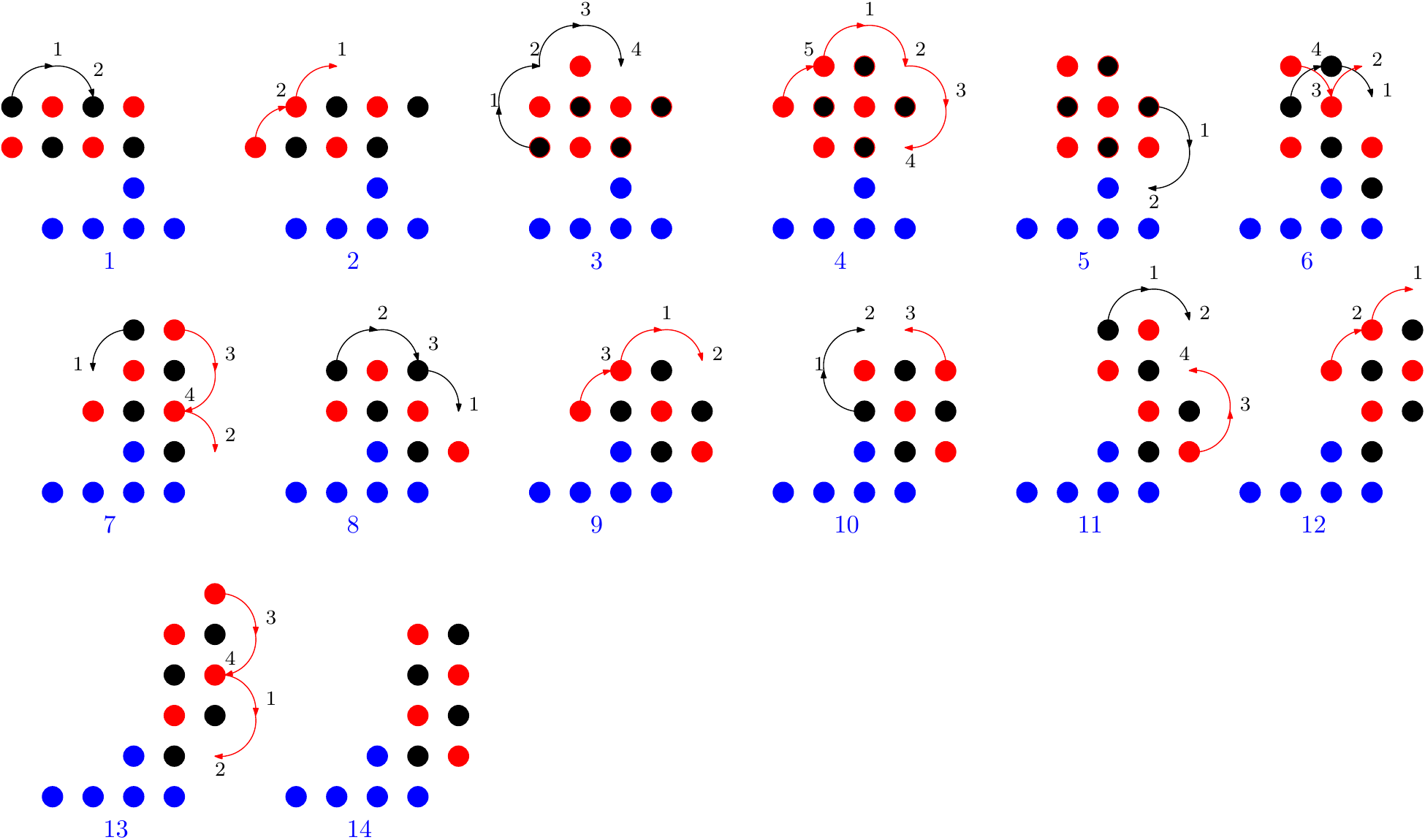}
}
\caption{Method for transforming.}\label{fig:8dstep}
\end{figure}

Our goal here was to show that since both a 6-seed and an 8-seed can traverse the perimeter of any discrete-convex shape, then a 6-seed may be able to start extracting $2$ nodes at a time from the shape A , move them as an 8-seed at a designated cell, leave them there, and continue this loop while creating i.e. a line with leaves. Afterwards we could perform the same method for shape B. If we succeeded in both shapes, then we could transform one to another.

\section{Rotation and Sliding}
\label{sec:rotation-sliding}

In this section, we study the combined effect of rotation and sliding movements.

We shall prove that rotation and sliding together, are \emph{transformation-universal}, meaning that they can transform any given shape to any other shape of the same size without ever breaking the connectivity during the transformation. It would be useful for the reader to recall Definitions \ref{def:perimeter1}, \ref{def:perimeter2}, \ref{def:perimeter3}, and \ref{def:perimeter4} and Proposition \ref{pro:extended-external-connected}, from Section \ref{sec:prel}, as the results that follow make extensive use of them.

As the perimeter is a (connected) polygon, it can be traversed by a particle walking on its edges (the unit-length segments). We now show how to ``simulate'' the particle's movement and traverse the cell-perimeter by a node, using rotation and sliding only.

\begin{lemma} \label{lem:walk-cell-perimeter-1}
If we place a node $u$ on any position of the cell-perimeter of a connected shape $A$, then $u$ can walk the whole cell-perimeter and return to its original position by using only rotations and slidings.
\end{lemma}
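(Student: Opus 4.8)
The plan is to argue that a single node $u$ placed on the cell-perimeter can imitate the motion of an abstract particle walking along the perimeter polygon, edge by edge, using only the two available movements. Recall that the perimeter is a closed polygonal curve made of unit-length grid segments, and that every red (cell-perimeter) cell is, by Definition \ref{def:perimeter2}, an empty cell adjacent to at least one black cell lying on that polygon; moreover the extended external surface of $A$ is connected (Proposition \ref{pro:extended-external-connected}), so the black cells bordering the perimeter form a connected ``track'' along which $u$ can be pushed. The idea is to traverse the perimeter in a fixed orientation (say, keeping the shape on the right of the walking direction) and show that, from any red cell on the perimeter, $u$ can always reach the ``next'' red cell along the perimeter.

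First I would set up a local case analysis according to the shape of the perimeter near $u$'s current cell. Walking along the perimeter, the polygon either goes straight or makes a convex ($90\degree$ outward) turn or a reflex ($90\degree$ inward) turn. Concretely, if $u$ occupies red cell $(i,j)$ with a black cell immediately ``below'' it along the direction of travel (the straight case), then there is a block of two consecutive black cells supporting a sliding step, and $u$ slides one cell forward along the perimeter; this is exactly the sliding primitive of Figure \ref{fig:sliding-definition}. At a convex corner the perimeter turns away from the shape: here the supporting black cell has no black neighbor in the forward direction, but $u$ can rotate $90\degree$ around that black cell (the two cells of its rotation trajectory are empty precisely because the turn is convex and those cells lie outside the compact shape $A'$), landing on the next red cell of the perimeter. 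At a reflex corner the perimeter turns into the shape: there is a black cell ``ahead and inward'', and $u$ rotates $90\degree$ around its current supporting black node in the opposite sense; again the two trajectory cells are red/empty because they lie on the outside of the polygon. In each of the three cases I would verify, by inspecting the $3\times3$ neighborhood, that the emptiness/occupancy preconditions of rotation or sliding are met — this uses that the cells on the outer side of the perimeter polygon are never occupied (they lie in the exterior of the Jordan curve of Definition \ref{def:perimeter1}) and that the inner supporting cells are black by the definition of the external and extended external surface.

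Having shown that from every perimeter position $u$ can advance by one polygon edge, I would then chain these elementary steps: since the perimeter is a finite cycle, repeating the advance step brings $u$ all the way around and back to its starting cell, which proves the lemma. I expect the main obstacle to be the bookkeeping at corners — in particular making sure the rotation preconditions hold at reflex corners and at places where two corners are adjacent (e.g. a one-cell ``notch'' or a one-cell ``tooth'' on the boundary), and making sure the black ``track'' of supporting nodes is never interrupted; this is exactly where Proposition \ref{pro:extended-external-connected} is needed, since it guarantees that even at turns the supporting black cells stay connected (via a diagonal node if necessary), so $u$ never gets stranded with no node to rotate around or slide over. A secondary subtlety is that the perimeter encloses holes of $A$ that have been filled in to form $A'$: since $u$ walks only the outer polygon, it never sees the holes, so discrete-convexity or simple-connectedness of $A$ is not required — the argument works for an arbitrary connected shape, as claimed.
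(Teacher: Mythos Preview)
Your approach --- simulating a walk along the perimeter polygon and realizing each step as a rotation or a sliding --- is exactly the paper's strategy, but your three-way case split (straight / convex / reflex) misses a case. The paper tracks the polygon \emph{edge by edge} and asks whether the next unit segment borders the \emph{same} red cell as the previous one: if yes, $u$ simply stays put; if no and the two segments are collinear, $u$ slides; if no and they are perpendicular, $u$ rotates. The ``same cell, stay put'' case arises precisely at every reflex vertex of the polygon (three black cells and one red meeting at a grid point), because both incident perimeter edges border that single red cell. Your reflex-corner rule (``rotate around the current supporting black in the opposite sense'') is the wrong action there and in fact fails outright at the closed end of a width-one notch, where two reflex vertices occur back to back: $u$ must sit in the same red cell through both turns and then slide back out, and no valid rotation exists at that position.

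A smaller point: your appeal to Proposition~\ref{pro:extended-external-connected} is unnecessary. Once you follow polygon edges rather than red cells, the preconditions for sliding (two consecutive blacks on the interior side of a straight length-$2$ piece) and for rotating (one black in the interior angle of a convex polygon vertex, with the two exterior cells empty) are immediate from the definition of the perimeter as the boundary of the compact $A'$; no global connectivity fact about the external surface is needed. The fix is not more machinery but different bookkeeping: walk one unit edge at a time and add the trivial ``stay'' rule for edges that share a red cell.
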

\begin{proof}
We show how to ``simulate'' the walk of a particle moving on the edges of the perimeter. The simulation implements the following simple rules:
\begin{enumerate}
\item If the current line-segment traversed by the particle concerns the same red cell as the one of the immediately previous line-segment traversed, then stay put.
\item If not:
 \begin{enumerate}
 \item If the two consecutive line-segments traversed form a line-segment of length 2, then move by sliding one position in the same direction as the particle.
 \item If the two consecutive line-segments traversed are perpendicular to each other, then move by a single rotation in the same direction as the particle.
 \end{enumerate}
\end{enumerate}

It remains to prove that $u$ can indeed always perform the claimed movements. (1) is trivial. For (2.a), a line-segment of length 2 on the perimeter is always defined by two consecutive blacks to the interior and two consecutive empty cells to the exterior (belonging to the cell-perimeter), therefore, $u$ can slide on the empty cells. For (2.b), there must be a black in the internal angle defined by the line-segments and an empty cell diagonally to it, in the exterior (for an example, see the right black node on the highest row containing nodes of $A$, in Figure \ref{fig:perimeter-definition}, Section \ref{sec:prel}). Therefore, rotation can be performed.
\end{proof}

Next, we shall prove that $u$ need not be an additional node, but actually a node belonging to the shape, and in particular one of those lying on the shape's boundary.

\begin{lemma} \label{lem:r-external-surface}
Let $A$ be a connected shape of order at least 2. Then there is a subset $R$ of the nodes on $A$'s external surface, such that $|R|\geq 2$ and for all $u\in R$, if we completely remove $u$ from $A$, then the resulting shape $A^\prime=A-\{u\}$ is also connected.
\end{lemma}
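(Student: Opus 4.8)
The statement is essentially a grid-geometric refinement of the classical fact that every connected graph on at least two vertices has at least two vertices whose deletion preserves connectivity. First I would recall why that fact holds: fix any spanning tree $T$ of $A$ (viewing $A$ as the grid graph on its occupied cells); $T$ has at least two leaves, and deleting a leaf $\ell$ of $T$ from $A$ leaves $T-\ell$, which is a spanning tree of $A-\{\ell\}$, so $A-\{\ell\}$ is connected. Hence $A$ has at least two ``removable'' nodes. The whole content of the lemma is therefore the localization: to show that at least two removable nodes can be found \emph{on the external surface}, i.e.\ that the removable nodes cannot all be buried in the interior.

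For the localization I would argue via the block--cut tree of $A$. If $A$ is $2$-connected (in particular if $|A|=2$), every node is removable, and the external surface already contains at least two nodes: every node in the topmost row of $A$, and every node in the bottommost row, has a side facing the exterior (no occupied cell, and no hole, can lie strictly above the topmost row), and the topmost-then-leftmost node and the bottommost-then-rightmost node are distinct when $|A|\ge 2$; take those. Otherwise $A$ has a cut vertex, so its block--cut tree has at least two leaf blocks. In a leaf block $B$ with its unique cut vertex $c$, every node of $B\setminus\{c\}$ is removable from $A$ (deleting it leaves $B$ connected and the rest of $A$ still hanging off $c$), and moreover such a node has \emph{all} of its $A$-neighbours inside $B$ (an edge to a node outside $B$ would place the node in two blocks, making it a cut vertex). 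I would then take, in each relevant leaf block, the topmost-then-leftmost cell $b\in B\setminus\{c\}$: the cell directly above $b$ and the cell directly to the left of $b$ are not occupied by $A$, and I would show they cannot both be hole cells — chasing the definition of a hole against the position of $b$ — so at least one of them is exterior and $b$ lies on the external surface. Doing this for two distinct leaf blocks (or, when only one leaf block is ``exposed'', extracting two external-surface removable nodes from it together with the outermost block) produces the set $R$ with $|R|\ge 2$; symmetricity of transformations is not even needed here.

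The main obstacle is exactly the pocket phenomenon: ``removable'' does not imply ``on the external surface'' — a degree-one node can sit in a pocket deep inside a hole — so the argument must actively steer toward the outermost parts of $A$. The delicate point is therefore to guarantee that among the $\ge 2$ leaf blocks (and their removable nodes) at least two are not buried in holes. I would handle this by anchoring on the topmost row of $A$, all of whose nodes are on the external surface, and showing that the block--cut structure, read from that row, always exposes at least two removable boundary nodes; equivalently, that the auxiliary graph obtained from $A$ by adjoining one virtual vertex adjacent to every node possessing an empty neighbouring cell is $2$-connected (the key point being that each component of $A-u$ contains its own topmost-then-leftmost cell, which necessarily has an empty neighbouring cell), which pins down precisely how deletions can disconnect $A$. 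Once the localization is settled, the remainder is routine.
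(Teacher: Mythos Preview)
Your plan is sound at the purely graph-theoretic level, and you correctly isolate the real difficulty: a removable node need not lie on the external surface. But your proposed resolution of the ``pocket phenomenon'' does not close the gap.

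Take a $10\times 10$ square, delete the inner $6\times 6$ block (rows and columns $3$ through $8$) to obtain a width-$2$ annulus, and re-insert two short paths reaching into the hole, say the cells $(3,4),(4,4)$ and $(3,7),(4,7)$, attached to the inner boundary of the annulus at $(2,4)$ and $(2,7)$ respectively. The block--cut tree of this $A$ has exactly two leaf blocks, the bridges $\{(3,4),(4,4)\}$ and $\{(3,7),(4,7)\}$; the annulus itself carries two cut vertices and is therefore \emph{not} a leaf. In each leaf block, $B\setminus\{c\}$ is a singleton, so your choice of $b$ is forced; the cut vertex $c$ occupies one of the four cells adjacent to $b$ and the remaining three are all empty hole cells. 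Hence either your premise that the cells above and to the left of $b$ are both unoccupied already fails (one of them is $c$), or both of those cells are hole cells --- directly contradicting the claim you intended to ``chase from the definition of a hole''. In neither case is $b$ on the external surface, and your fallback clause (``when only one leaf block is exposed\ldots'') does not fire because \emph{no} leaf block is exposed here. Your auxiliary-graph criterion is correct but beside the point: you attach the virtual vertex to every node with an empty neighbouring cell, which includes hole-adjacent nodes such as $(4,4)$, so $2$-connectedness of that graph says nothing about reaching the external surface. (The lemma does hold for this shape --- the four outer corners are removable and on the external surface --- but none of them live in a leaf block.)

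The paper sidesteps all of this by never consulting the block--cut tree of $A$. It works directly with the \emph{extended external surface} $S$ of Definitions~\ref{def:perimeter3}--\ref{def:perimeter4}, which is connected by Proposition~\ref{pro:extended-external-connected} and by construction contains only nodes that touch the outer perimeter. If $S$ contains a cycle (necessarily of length $\geq 4$), any external-surface node on that cycle is removable; if $S$ is acyclic it is a tree with $\geq 2$ leaves, and those leaves are removable. Because the candidate pool is the external surface from the outset, the pocket phenomenon simply cannot arise.
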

\begin{proof}
If the extended external surface of $A$ contains a cycle, then such a cycle must necessarily have length at least $4$ (due to geometry). In this case, any node of the intersection of the external surface (non-extended) and the cycle can be removed without breaking $A$'s connectivity. If the extended external surface of $A$ does not contain a cycle, then it corresponds to a tree graph which by definition has at least 2 leaves, i.e., nodes of degree exactly 1. Any such leaf can be removed without breaking $A$'s connectivity. In both cases, $|R|\geq 2$.
\end{proof}

\begin{lemma} \label{lem:walk-cell-perimeter-2}
Pick any $u\in R$ ($R$ defined on a connected shape $A$ as above). Then $u$ can walk the whole cell-perimeter of $A^\prime=A-\{u\}$ by rotations and slidings.
\end{lemma}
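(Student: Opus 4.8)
The plan is to reduce the statement directly to Lemma \ref{lem:walk-cell-perimeter-1}. That lemma already tells us that a node sitting anywhere on the cell-perimeter of a connected shape can walk the entire cell-perimeter and return to its starting position; and by the choice $u\in R$ together with Lemma \ref{lem:r-external-surface}, the shape $A^\prime=A-\{u\}$ is connected. So the only thing left to establish is the purely geometric fact that \emph{the cell originally occupied by $u$ becomes a cell of the cell-perimeter of $A^\prime$}. Once this is in hand, we view $u$ as an extra node placed on the cell-perimeter of $A^\prime$ and invoke Lemma \ref{lem:walk-cell-perimeter-1}.

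To prove that geometric fact, let $c$ denote the cell of $u$ and recall (Definition \ref{def:perimeter3}) that, since $u$ lies on $A$'s external surface, $c$ contributes at least one unit line-segment $e$ to $A$'s perimeter polygon. The Jordan curve theorem (invoked already in Definition \ref{def:perimeter1}) gives an interior and an exterior of that polygon; $c$ is black and hence lies in the interior, so the cell $c^\prime$ sharing the segment $e$ with $c$ is a white cell lying in the exterior region, and in fact $c^\prime$ is a red cell of $A$'s cell-perimeter (Definition \ref{def:perimeter2}). I would then use the obvious monotonicity: the black cells of $A^\prime$ form a subset of the black cells of $A$, hence any infinite orthogonal path avoiding all black cells of $A$ also avoids all black cells of $A^\prime$. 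Applying this to a path witnessing that $c^\prime$ is not part of a hole of $A$, and prepending the single step from $c$ to $c^\prime$ (legal because $c$ is white in $A^\prime$), shows that $c$ itself is \emph{not} part of a hole of $A^\prime$, i.e.\ it lies in the exterior of $A^\prime$'s perimeter. On the other hand, since $A$ is connected and of order at least $2$, $u$ has a neighbour $w$ in $A$, and $w\in A^\prime$, so $c$ is orthogonally adjacent to the black cell of $w$. A white exterior cell adjacent to a black cell of $A^\prime$ necessarily contributes the shared segment to $A^\prime$'s perimeter polygon, so $c$ is a red cell of $A^\prime$'s cell-perimeter, as claimed.

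With this established, place $u$ back on the (now empty) cell $c$: the pair consisting of $A^\prime$ and the node $u$ on $c$ satisfies the hypothesis of Lemma \ref{lem:walk-cell-perimeter-1}, and therefore $u$ can walk the whole cell-perimeter of $A^\prime$ using only rotations and slidings (returning to $c$), which is exactly the statement. A minor corner case, $A^\prime$ having order $1$, is handled by noting that in that case the walk of Lemma \ref{lem:walk-cell-perimeter-1} triggers only its rotation rule (no length-$2$ perimeter segment exists), and a lone node's cell-perimeter is just its four orthogonal neighbours, around which $u$ rotates freely.

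The step I expect to be the main obstacle is the topological claim that $c$ lands on the cell-perimeter of $A^\prime$, rather than being engulfed in a newly created hole or failing to touch the perimeter polygon at all; making it rigorous requires the monotonicity-of-exterior argument above together with a careful reading of Definitions \ref{def:perimeter1}--\ref{def:perimeter3}. Everything else is a direct appeal to Lemmas \ref{lem:r-external-surface} and \ref{lem:walk-cell-perimeter-1}.
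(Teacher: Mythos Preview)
Your proposal is correct and follows exactly the same approach as the paper: observe that $u$ already lies on the cell-perimeter of $A^\prime$ and then invoke Lemma~\ref{lem:walk-cell-perimeter-1}. The paper's proof is a two-line observation that asserts the geometric fact without justification, whereas you have supplied the careful argument (monotonicity of the exterior plus adjacency to a remaining black cell) that makes that observation rigorous.
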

\begin{proof}
It suffices to observe that $u$ already lies on the cell-perimeter of $A^\prime$. Then, by Lemma \ref{lem:walk-cell-perimeter-1}, it follows that such a walk is possible.
\end{proof}

We are now ready to state and prove the universality theorem of rotations and slidings.

\begin{theorem} \label{the:universality-rot-sl}
Let $A$ and $B$ be any connected shapes, such that $|A|=|B|=n$. Then $A$ and $B$ can be transformed to each other by rotations and slidings, without breaking the connectivity during the transformation.
\end{theorem}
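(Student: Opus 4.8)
The plan is to reduce the universality claim to a single canonical target shape. The argument of Proposition~\ref{pro:par-equiv} shows that the relation ``transforms to by rotations and slidings while preserving connectivity'' is symmetric (reverse the move sequence: reversed rotations and slidings are again valid moves and visit exactly the same connected shapes) and transitive (concatenate two such sequences). Hence it suffices to fix, for each $n$, one connected shape $L_n$ -- take $L_n$ to be the straight horizontal line on $n$ nodes -- and to prove that \emph{every} connected shape $X$ with $|X|=n$ can be transformed into $L_n$ by a connectivity-preserving sequence of rotations and slidings. Indeed, applying this to $A$ and to $B$ gives $A\rsa L_n$ and $B\rsa L_n$; symmetry then yields $L_n\rsa B$ and transitivity yields $A\rsa B$, all connectivity-preserving, which is the theorem (shapes being taken up to translation and rotation, as in the problem statement, so that $L_{|A|}=L_{|B|}$ makes sense).

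To prove $X\rsa L_n$ I would peel one node at a time off the ``bulk'' and append it to a growing straight segment, controlled by a potential. Let $r$ be the lowest occupied row of the current shape and set $\Phi=\sum_{v}(\mathrm{row}(v)-r)$; a connected shape of order $n$ with $\Phi=0$ is exactly a horizontal line, i.e.\ $L_n$, so it is enough to show that whenever the (connected, order-$n$) shape has $\Phi>0$ we can perform moves that preserve connectivity and strictly decrease $\Phi$. By Lemma~\ref{lem:r-external-surface} there is a node on the external surface whose removal leaves the shape connected; a short case analysis lets us choose such a node $u$ that, moreover, lies strictly above row $r$. Conceptually remove $u$: by Lemma~\ref{lem:walk-cell-perimeter-2}, $u$ can walk the entire cell-perimeter of $X-u$ using only rotations and slidings, and the empty cell immediately to the right of the rightmost node of row $r$ is a cell-perimeter cell, so $u$ can be routed there and deposited, lengthening row $r$ by one node. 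Throughout the walk $u$ stays on the cell-perimeter of $X-u$ and hence adjacent to it (Lemma~\ref{lem:walk-cell-perimeter-1}), and $X-u$ is connected by the choice of $u$, so connectivity is never broken; the net effect moves one node from a row above $r$ into row $r$, so $\Phi$ drops by at least $1$. Iterating until $\Phi=0$ turns $X$ into $L_n$.

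The step I expect to be the main obstacle is precisely the joint requirement on the peeled node $u$: it must be a boundary node whose removal keeps the \emph{whole} shape connected, it must lie strictly above row $r$ so that relocating it makes progress, and the walk carrying it down to row $r$ must never disconnect the shape. Lemma~\ref{lem:r-external-surface} only guarantees two removable boundary nodes, and in degenerate configurations (a tall thin shape, a shape whose only obviously removable nodes sit in the bottom row, or a $2$-line dangling below the bottom row) neither of them need a priori satisfy all three conditions at once; handling this requires a careful but finite case analysis -- typically: if the lexicographically extreme node of the topmost row is not removable, the node that ``pins'' it is itself a leaf and can be peeled instead. Two auxiliary points that I expect to be routine: the cell-perimeter walk of Lemma~\ref{lem:walk-cell-perimeter-1} passes only through empty cell-perimeter cells, so the forming bottom segment and the shrinking bulk never collide; and the deposited node is always placed adjacent to the current end of row $r$, so the shape stays connected at the moment of deposit as well. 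Once feasibility is established, the $\Omega(n^2)$ worst-case move count stated in the introduction is immediate, since the construction performs $\Theta(n)$ relocations and each walks a node along a perimeter of length $O(n)$.
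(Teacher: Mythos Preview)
Your overall plan---reduce to a canonical line via Proposition~\ref{pro:par-equiv}, and reach the line by repeatedly peeling a removable boundary node (Lemma~\ref{lem:r-external-surface}) and walking it along the cell-perimeter (Lemmas~\ref{lem:walk-cell-perimeter-1} and~\ref{lem:walk-cell-perimeter-2}) to the end of a growing segment---is exactly the paper's. The gap is precisely where you suspected, and it is real rather than a matter of case analysis.

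Your iteration needs, whenever $\Phi>0$, a removable node \emph{strictly above} row $r$. Such a node need not exist. Take the five-node ``table with two legs'': nodes at $(1,0),(1,1),(1,2),(0,0),(0,2)$. As a graph this is the path $(0,0)-(1,0)-(1,1)-(1,2)-(0,2)$, so its only non-cut vertices are the two endpoints $(0,0)$ and $(0,2)$, both in the bottom row $r=0$. Every node above row $r$ is a cut vertex, and relocating either leg to the right of the rightmost row-$r$ node leaves $\Phi=3$ unchanged, so your potential argument stalls. Your proposed fix (``the node that pins the extreme top node is itself a leaf'') does not help here: the only leaves are the two legs, already in row $r$. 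One could try to patch this by allowing non-progress moves that reshape row $r$ until an upper node becomes removable, but then $\Phi$ is no longer a strict Lyapunov function and you need a different termination argument.

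The paper avoids the issue by growing the line in \emph{fresh space} to the right of a fixed anchor $u$ (the lowest node in the rightmost occupied column), which never moves. In each phase one only needs a removable $v\neq u$ in the residual shape $S(k)$---no row constraint on $v$ at all. Lemma~\ref{lem:r-external-surface} gives $|R|\geq 2$, so such a $v$ always exists, and since everything to the right of $u$ is empty, $v$ can always be walked to the current right endpoint of the line. That anchoring trick, not the case analysis you anticipate, is the missing idea.
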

\begin{proof}
It suffices to show that any connected shape $A$ can be transformed to a spanning line $L$ by rotations and slidings only and without breaking connectivity during the transformation. If we show this, then $A$ can be transformed to $L$ and $B$ can be transformed to $L$ (as $A$ and $B$ have the same order, therefore correspond to the same spanning line $L$), and by reversibility of these movements, $A$ and $B$ can be transformed to each other via $L$.

Pick the rightmost column of the grid containing at least one node of $A$, and consider the lowest node of $A$ in that column. Call that node $u$. Observe that all cells to the right of $u$ are empty. Let the cell of $u$ be $(i,j)$. The final constructed line will start at $(i,j)$ and end at $(i,j+n-1)$.

The transformation is partitioned into $n-1$ phases. In each phase $k$, we pick a node from the original shape and move it to position $(i,j+k)$, that is, to the right of the right endpoint of the line formed so far. In phase 1, position $(i,j+1)$ is a cell of the cell-perimeter of $A$. So, even if it happens that $u$ is a node of degree 1, by Lemma \ref{lem:r-external-surface}, there must be another such node $v\in A$ that can walk the whole cell-perimeter of $A^\prime=A-\{v\}$ (the latter, due to Lemma \ref{lem:walk-cell-perimeter-2}). As $u\neq v$, $(i,j+1)$ is also part of the cell-perimeter of $A^\prime$, therefore, $v$ can move to $(i,j+1)$ by rotations and slidings. As $A^\prime$ is connected (by Lemma \ref{lem:r-external-surface}), $A^\prime\cup \{(i,j+1)\}$ is also connected and also all intermediate shapes were connected, because $v$ moved on the cell-perimeter and, therefore, it never disconnected from the rest of the shape during its movement.

In general, the transformation preserves the following invariant. At the beginning of phase $k$, $1\leq k\leq n-1$, there is a connected shape $S(k)$ (where $S(1)=A$) to the left of of column $j$ ($j$ inclusive) and a line of length $k-1$ starting from position $(i,j+1)$ and growing to the right. Restricting attention to $S(k)$, there is always a $v\neq u$ that could move to position $(i,j+1)$ if it were not occupied. This implies that before the final movement that places it on $(i,j+1)$, $v$ must have been in one of $(i+1,j)$ and $(i+1,j+1)$, if we assume that $v$ always walks in the clockwise direction. Observe now that from each of these positions $v$ can perform zero or more right slidings above the line in order to reach the position above the right endpoint of the line. When this occurs, a final clockwise rotation makes $v$ the new right endpoint of the line. The only exception is when $v$ is on $(i+1,j+1)$ and there is no line to the right of $(i,j)$ (this implies the existence of a node on $(i+1,j)$, otherwise connectivity of $S(k)$ would have been violated). In this case, $v$ just performs a single downward sliding to become the right endpoint of the line.
\end{proof}

\begin{theorem} \label{the:ladder-time}
The transformation of Theorem \ref{the:universality-rot-sl} requires $\Theta(n^2)$ movements in the worst case.
\end{theorem}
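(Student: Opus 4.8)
The plan is to prove matching $O(n^2)$ and $\Omega(n^2)$ bounds: the upper bound by directly counting the movements of the transformation of Theorem~\ref{the:universality-rot-sl}, and the lower bound by a potential-function argument on one explicit bad instance (which in fact shows the bound holds for \emph{every} transformation, with or without connectivity preservation).

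For the upper bound, recall that the transformation proceeds in $n-1$ phases, and in phase $k$ a single node $v$ walks along (part of) the cell-perimeter of the residual shape $S(k)$ until it reaches $(i+1,j)$ or $(i+1,j+1)$, and then performs a sequence of right-slidings over the partial line followed by one rotation that makes it the new right endpoint. First I would record the fact that the cell-perimeter of any connected shape on $m$ nodes consists of $O(m)$ cells: each node contributes at most four segments to the perimeter polygon, and no cell of a hole can border that polygon (a neighbour of a hole cell that were reachable from infinity without meeting $A$ would yield an infinite $A$-avoiding path starting from that hole cell, contradicting Definition~\ref{def:perimeter1}). Since $|S(k)|\le n$, the simulation of Lemma~\ref{lem:walk-cell-perimeter-1} performs at most one rotation or sliding per perimeter segment and hence $O(n)$ moves for the walk, while the line built so far has length $\le n$ so the slidings cost $O(n)$; thus each phase costs $O(n)$ moves and the whole transformation costs $(n-1)\cdot O(n)=O(n^2)$.

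For the lower bound it suffices to exhibit one family of instances that forces $\Omega(n^2)$ moves. I would take $A$ to be a vertical line occupying cells $(i,j),(i+1,j),\dots,(i+n-1,j)$; by the construction in the proof of Theorem~\ref{the:universality-rot-sl} the rightmost column is $j$, the lowest node in it is $(i,j)$, and the spanning line built is the horizontal segment $(i,j),(i,j+1),\dots,(i,j+n-1)$. Any transformation of $A$ into $L$ realizes some bijection $\pi$ from the nodes to the cells of $L$; define $\Phi(C)=\sum_{v} d_1(\mathrm{pos}_C(v),\pi(v))$, where $d_1$ is Manhattan distance. Every source cell lies in column $j$ and every target cell lies in row $i$, so $d_1((i+h,j),(i,j+p))=h+p$, whence $\Phi(A)=\sum_{h=0}^{n-1}h+\sum_{p=0}^{n-1}p=n(n-1)=\Theta(n^2)$, and this value is the same for \emph{every} bijection $\pi$. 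A single rotation displaces exactly one node by Manhattan distance $2$ and a single sliding by $1$, so each movement changes $\Phi$ by at most $2$; since $\Phi(L)=0$, any transformation from $A$ to $L$ needs at least $\Phi(A)/2=\Omega(n^2)$ movements, independently of connectivity. Together with the upper bound this yields the $\Theta(n^2)$ worst-case claim.

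The only genuinely delicate point I anticipate is the $O(m)$ bound on the length of the cell-perimeter, which must rule out the long perimeters that a large hole could a priori create; once that is in place both directions are short. (One could equally run the lower bound on a $2\times(n/2)$ rectangle or a $\sqrt{n}\times\sqrt{n}$ square in place of the line, using a single coordinate as the potential; the vertical line is simply the cleanest instance, and it is exactly the one the generic transformation is applied to.)
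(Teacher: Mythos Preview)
Your proof is correct, but the route differs from the paper's in both halves.

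For the lower bound, the paper does not use a potential argument here at all: it takes a \emph{ladder} shape (a staircase-like diagonal), observes that at every stage the only node removable without breaking connectivity is the current top-left one, and directly sums the walk lengths of the specific algorithm of Theorem~\ref{the:universality-rot-sl} on that instance to get $\Theta(n^2)$. Your choice of a vertical line together with the Manhattan-distance potential instead proves a stronger, algorithm-independent $\Omega(n^2)$ bound; this is essentially what the paper establishes \emph{separately} right after this theorem (via Definition~\ref{def:potential-distance} and the ``potential of the ladder'' lemma). So you have folded the two results into one, at the cost of not exhibiting the algorithm's actual behaviour on a concrete instance, which is the stated content of Theorem~\ref{the:ladder-time}.

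For the upper bound, the paper says nothing explicit; it is treated as evident once the $\Theta(n^2)$ count for the ladder is displayed. Your explicit $O(n)$-per-phase argument via the $O(m)$ bound on the cell-perimeter length is a genuine addition and fills a gap the paper leaves implicit. The hole argument you give (a hole cell cannot touch the perimeter polygon, else it would admit an infinite $A$-avoiding path) is exactly the right observation to make the $O(m)$ bound go through.
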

\begin{proof}
Consider a ladder shape of order $n$, as depicted in Figure \ref{fig:ladder}. The strategy of Theorem \ref{the:universality-rot-sl} will choose to construct the line to the right of node $u$. The only node that can be selected to move in each phase without breaking the shape's connectivity is the top-left node. Initially, this is $v$, which must perform $\lceil n/2\rceil$ movements to reach its position to the right of $u$. In general, the total number of movements $M$, performed by the transformation of Theorem \ref{the:universality-rot-sl} on the ladder, is given by
\begin{align*}
M&=\left\lceil \frac{n}{2}\right\rceil+2\cdot\sum_{i=1}^{(n-3)/2} \left\lceil \frac{n}{2}\right\rceil+i\\
&= \left\lceil \frac{n}{2}\right\rceil(n-2)+2\cdot\sum_{i=1}^{(n-3)/2} i\\
&= \Theta(n^2).
\end{align*}
\end{proof}

\begin{figure}[!hbtp]
\centering{
\includegraphics[width=0.8\textwidth]{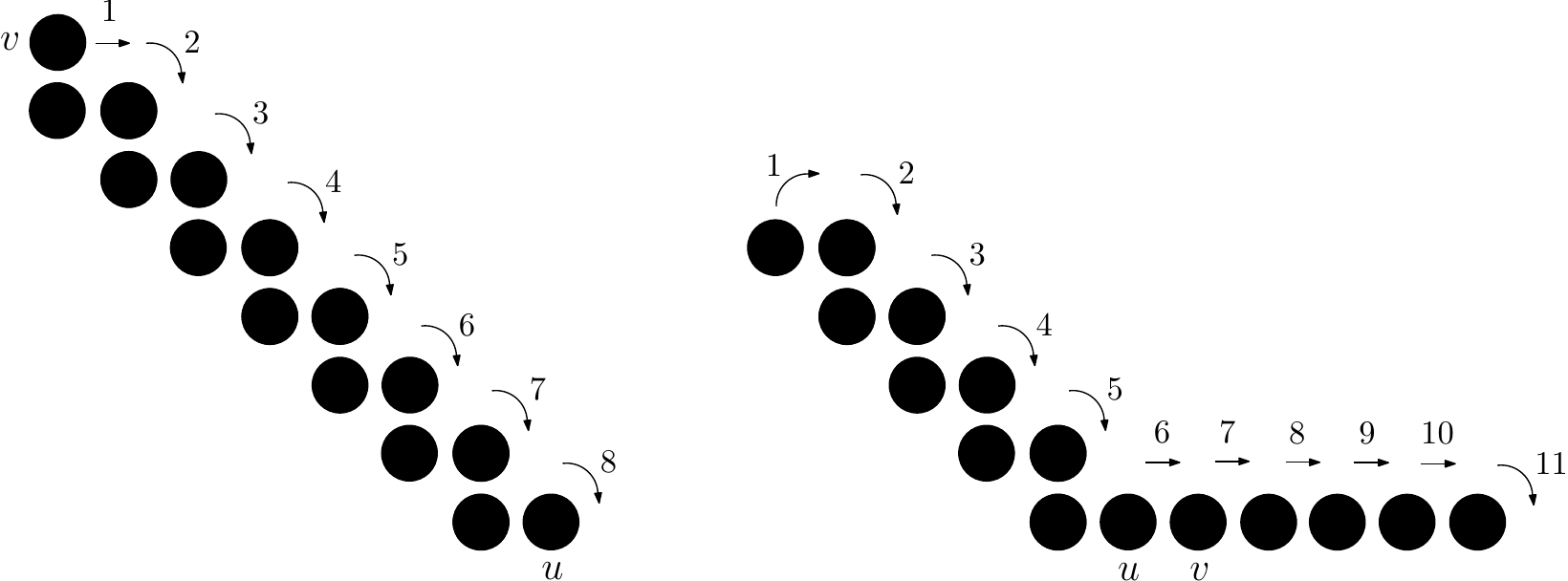}
}
\caption{Transforming a ladder into a spanning line.} \label{fig:ladder}
\end{figure}

Theorem \ref{the:ladder-time} shows that the above generic strategy is slow in some cases, as is the case of transforming a ladder shape into a spanning line. We shall now show that there are pairs of shapes for which any strategy and not only this particular one, may require a quadratic number of steps to transform one shape to the other.

\begin{definition} \label{def:potential-distance}
Define the \emph{potential of a shape} $A$ as its minimum ``distance'' from the line $L$, where $|A|=|L|$. The \emph{distance} is defined as follows: Consider any placement of $L$ relative to $A$ and any pairing of the nodes of $A$ to the nodes of the line. Then sum up the Manhattan distances \footnote{The Manhattan distance between two points $(i,j)$ and $(i^\prime,j^\prime)$ is given by $|i - i^\prime| + |j - j^\prime|$.} 
between the nodes of each pair. The minimum sum between all possible relative placements and all possible pairings is the distance between $A$ and $L$ and also $A$'s potential. In case the two shapes do not have an equal number of nodes, then any matching is not perfect and the distance can be defined as infinite.
\end{definition}

Observe that the potential of the line is 0 as it can be totally aligned on itself and the sum of the distances is 0.

\begin{lemma}
The potential of the ladder is $\Theta(n^2)$.
\end{lemma}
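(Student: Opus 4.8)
The plan is to establish matching $\Omega(n^2)$ and $O(n^2)$ bounds on the potential of the ladder, so that it equals $\Theta(n^2)$. Let me fix the ladder $A$ of order $n$ as in Figure~\ref{fig:ladder}: it is a $2\times m$ rectangle with $m=n/2$ (handling the odd case with a $\pm 1$ correction at the end). Write its cells as $(r,c)$ with $r\in\{0,1\}$ and $c\in\{0,1,\dots,m-1\}$. The line $L$ has $n$ cells, all in a single row or column. The key observation is that the ladder is ``spread out'' in only one long direction, whereas the line must be spread out along a single axis of length $n$, so most of the ladder's nodes must travel a distance proportional to their index to reach the line.

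First I would prove the lower bound $\Omega(n^2)$. The cleanest route is a projection argument. Fix any placement of $L$ and any perfect matching between the nodes of $A$ and those of $L$; without loss of generality take $L$ to be horizontal, say occupying cells $(0,a),(0,a+1),\dots,(0,a+n-1)$ for some offset $a$. For a node $p$ of $A$ matched to a node $q$ of $L$, the Manhattan distance is at least $|c(p)+a' - \text{col}(q)|$ where I relate the horizontal coordinates (the vertical contribution only helps). The horizontal coordinates of $A$'s nodes take each value in $\{0,\dots,m-1\}$ exactly \emph{twice}, while the target columns are the $n=2m$ distinct consecutive integers $a,\dots,a+n-1$. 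Matching a multiset with only $m$ distinct values (each doubled) to $2m$ distinct consecutive integers forces a total displacement of $\Omega(m^2)=\Omega(n^2)$: indeed, at most two target columns can lie within $O(1)$ of any given source column, so if we sort both multisets, the $i$-th smallest target column differs from the $i$-th smallest source column by $\Omega(i)$ for a constant fraction of the indices, and summing gives $\Omega(m^2)$. The case where $L$ is placed vertically is even easier, since then all $n$ nodes of $A$, which span only $2$ rows, must be matched to $n$ distinct consecutive rows, forcing displacement $\Omega(n^2)$ by the same sorting argument applied to the row coordinate. Since this holds for every placement and every matching, the minimum — i.e. the potential — is $\Omega(n^2)$.

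For the upper bound $O(n^2)$, I would simply exhibit one placement and matching: place $L$ horizontally starting at the ladder's corner, pair the bottom row of the ladder with the first $m$ cells of $L$ and the top row with the last $m$ cells in reverse (or any) order; then node $(1,c)$ is matched to a line cell at horizontal distance $O(m)$ and vertical distance $1$, so each of the $n$ nodes contributes $O(n)$, for a total of $O(n^2)$. Combining the two bounds yields that the potential of the ladder is $\Theta(n^2)$, which completes the proof (adjusting $m=\lfloor n/2\rfloor$ and absorbing the parity correction into the constants).

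The main obstacle I expect is making the ``multiset-to-interval matching'' lower bound fully rigorous — specifically, arguing that an optimal pairing may be assumed to be sorting-monotone (a standard exchange argument: any inversion can be removed without increasing the sum) and then carefully lower-bounding $\sum_i |t_i - s_i|$ where the $t_i$ are $2m$ consecutive integers and the $s_i$ are the sorted doubled source coordinates. One must also make sure the vertical/horizontal split of cases is exhaustive and that the offset $a$ (the free translation of $L$) cannot help beyond a constant factor, which again follows because shifting the whole target interval changes the sum by a convex, essentially quadratic function of $a$ minimized at $O(1)$ displacement per node on average.
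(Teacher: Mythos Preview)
Your argument is internally sound, but it is applied to the wrong shape. You assume the ladder is a $2\times m$ rectangle; in the paper the ladder is a diagonal \emph{staircase} (cf.\ Proposition~\ref{pro:ladder-pipelining}, which speaks of an ``even-numbered upper diagonal and an odd-numbered lower diagonal'', and the proof of the lemma itself, which says ``every two nodes, the height increases by $1$''). For the actual ladder the nodes occupy roughly $n/2$ distinct rows, not $2$, so your column-projection multiset $\{0,0,1,1,\ldots,m-1,m-1\}$ and your ``only two rows'' vertical case do not describe the shape at all.

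The paper's proof exploits exactly this large vertical extent and is much shorter than your approach: any horizontal placement of $L$ sits in a single row, so at least half of the ladder's nodes lie on one side of it, and since the height grows by $1$ every two nodes, those $\approx n/2$ nodes contribute vertical distances $1,1,2,2,\ldots,n/4,n/4$; hence every matching has total cost at least $2(1+2+\cdots+n/4)=\Theta(n^2)$. No sorting or exchange argument is needed, because the vertical coordinate alone already forces the bound regardless of the pairing. Your projection-and-monotone-matching technique is correct in spirit and would also work for the true ladder (projecting onto rows rather than columns gives the same arithmetic as the paper, essentially for free), but as written your calculations prove the lemma for a different shape.
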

\begin{proof}
We prove it for horizontal placement of the line, as the vertical case is symmetric. Any such placement leaves either above or below it at least half of the nodes of the ladder (maybe minus 1). W.l.o.g. let it be above it. Every two nodes, the height increases by 1, therefore there are 2 nodes at distance 1, 2 at distance 2,$\ldots$, 2 at distance n/4. Any matching between these nodes and the nodes of the line gives for every pair a distance at least as large as the vertical distance between the ladder's node and the line, thus, the total distance is at least $2\cdot 1+2\cdot 2+...+2\cdot (n/4) = 2\cdot (1+2+...+n/4) = (n/4)\cdot(n/4 + 1) = \Theta(n^2)$. We conclude that the potential of the initial ladder is $\Theta(n^2)$.
\end{proof}

\begin{theorem}
Any transformation strategy based on rotations and slidings and performing a single movement per step, requires $\Theta(n^2)$ steps to transform a ladder into a line.
\end{theorem}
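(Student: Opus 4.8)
The plan is to sandwich the bound: the upper bound $O(n^2)$ is already in hand, and the lower bound $\Omega(n^2)$ will come from the potential function of Definition~\ref{def:potential-distance} together with the fact that a single move can change that potential only by a constant. For the upper bound, Theorem~\ref{the:universality-rot-sl} provides a generic strategy transforming any connected shape (in particular the ladder) into the spanning line $L$, and Theorem~\ref{the:ladder-time} shows that on the ladder this strategy uses $\Theta(n^2)=O(n^2)$ single movements. So $O(n^2)$ steps always suffice, and it remains to prove the matching lower bound.

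For the lower bound I would first establish the key claim: if $C\rightarrow C'$ in one step (a single rotation or a single sliding), then $|\Phi(C')-\Phi(C)|\le 2$, where $\Phi(\cdot)$ denotes the potential (distance to $L$) of a configuration. The reason is that a single move displaces exactly one node and leaves all others fixed: a sliding moves a node by Manhattan distance $1$, and a rotation moves a node by Manhattan distance at most $2$ (the diagonal step from $(i,j)$ to $(i-1,j\pm 1)$, which is immediate from the definition of rotation in Section~\ref{sec:prel}). Now fix a relative placement of $L$ and a pairing of the nodes realizing $\Phi(C)$; keeping that same placement and the same pairing for $C'$ (the moved node retains its partner) is a feasible but possibly suboptimal choice, and it changes the total Manhattan distance by at most the displacement of the moved node, i.e.\ by at most $2$. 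Hence $\Phi(C')\le\Phi(C)+2$; by reversibility of all moves (Proposition~\ref{pro:par-equiv}) the symmetric argument gives $\Phi(C)\le\Phi(C')+2$, so $|\Phi(C')-\Phi(C)|\le 2$.

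With this in place, the lower bound is immediate. Consider any transformation $\text{ladder}=C_0\rightarrow C_1\rightarrow\cdots\rightarrow C_T=L$ performing one movement per step. By the preceding lemma $\Phi(C_0)=\Theta(n^2)$, and $\Phi(C_T)=\Phi(L)=0$ since the line is at distance $0$ from itself. Since each step alters $\Phi$ by at most $2$, the number of steps satisfies $T\ge \Phi(C_0)/2=\Omega(n^2)$. Combining with the $O(n^2)$ upper bound yields $T=\Theta(n^2)$ for the worst case, and the statement follows.

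The only genuinely delicate point — the place I expect to have to argue most carefully — is the per-step bound on $\Phi$, precisely because $\Phi$ is defined as a minimum over all relative placements of $L$ and all pairings. The argument above sidesteps this: optimality of the chosen placement/pairing for $C$ need not carry over to $C'$, but it still furnishes a \emph{valid upper bound} on $\Phi(C')$, which is all that is needed, and the reverse inequality is obtained from symmetry of the one-step relation. A secondary (routine) check is that the maximal displacement of a node under a rotation is exactly $2$ in Manhattan distance and under a sliding exactly $1$, so that the constant in $|\Phi(C')-\Phi(C)|\le 2$ is justified; everything else is the arithmetic already carried out in the ladder-potential lemma.
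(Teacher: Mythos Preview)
Your proposal is correct and follows essentially the same approach as the paper: bound the potential of the ladder by $\Theta(n^2)$ (the preceding lemma), observe that a single rotation or sliding changes the potential by at most a constant, and combine with the $O(n^2)$ upper bound from Theorem~\ref{the:ladder-time}. You are in fact more careful than the paper on two points: you correctly take the per-step potential change to be at most $2$ (a rotation displaces a node by Manhattan distance $2$), whereas the paper states ``at most $1$''; and you explicitly handle the subtlety that $\Phi$ is a minimum over placements and pairings by reusing the optimal witness for $C$ as a feasible (not necessarily optimal) witness for $C'$, which the paper leaves implicit.
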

\begin{proof}
To show that $\Omega(n^2)$ movements are needed to convert the ladder to a line, it suffices to observe that the difference in their potentials is that much and that one rotation or one sliding can decrease the potential by at most 1.
\end{proof}

\begin{remark}
The above lower bound is independent of connectivity preservation. It is just a matter of the total distance based on single distance-one movements.
\end{remark}

Finally, it is interesting to observe that such lower bounds can be computed in polynomial time, because there is a polynomial-time algorithm for computing the distance between two shapes.

\begin{proposition}
Let $A$ and $B$ be connected shapes. Then their distance $d(A,B)$ can be computed in polynomial time.
\end{proposition}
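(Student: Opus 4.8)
The plan is to reduce the computation of $d(A,B)$ to polynomially many instances of minimum-cost perfect bipartite matching, each of which is solvable in polynomial time by the Hungarian (Kuhn--Munkres) algorithm. Recall from Definition \ref{def:potential-distance} that $d(A,B)$ is the minimum, over all relative placements of $B$ with respect to $A$ and all bijections between the nodes of $A$ and the nodes of $B$, of the sum of Manhattan distances between matched pairs. A placement here consists of one of the four rotations ($0\degree,90\degree,180\degree,270\degree$) of $B$ together with a translation vector $(a,b)\in\mathbb{Z}^2$. So the first step is to fix such a placement; once it is fixed, the optimal bijection is exactly a minimum-weight perfect matching in the complete bipartite graph on $A\times B$ whose edge weights are the (now constant) Manhattan distances, and this is computable in $O(n^3)$ time. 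It therefore suffices to argue that only polynomially many placements need to be examined.

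The key observation for bounding the number of placements is a simple monotonicity/exchange argument. Each of $A$ and $B$ fits inside an $n\times n$ sub-grid, so after fixing a rotation of $B$ its bounding box has side lengths at most $n$ in each coordinate. If in some relative placement the bounding boxes of $A$ and (the rotated) $B$ are disjoint in, say, the $x$-coordinate with a gap of $g\geq 1$, then every node of $A$ and every node of $B$ differ by at least $g$ in their $x$-coordinate; translating $B$ by one unit toward $A$ decreases the $x$-part of every pairwise Manhattan distance by exactly $1$, hence strictly decreases the total for any fixed bijection, and in particular decreases the optimum over bijections. Repeating this in both coordinates shows that an optimal placement may be assumed to have overlapping (or touching) bounding boxes, which confines the relevant translation vectors to a window of size $O(n)\times O(n)$, i.e. $O(n^2)$ candidates, and $4\cdot O(n^2)=O(n^2)$ placements in total. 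For each we run a min-cost perfect matching; the overall algorithm runs in $O(n^2)\cdot O(n^3)=O(n^5)$ time, which is polynomial, and returns the minimum value found, which by the above equals $d(A,B)$.

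The main obstacle is precisely the bounding of the translation range: a priori the minimum in Definition \ref{def:potential-distance} ranges over infinitely many placements, so one must justify cutting it down to $O(n^2)$ of them. The exchange argument above handles this cleanly provided one is careful that shifting by a unit step indeed decreases \emph{every} pairwise distance (which holds as long as the bounding boxes remain disjoint in that coordinate, so one shifts only up to the point of contact). A minor additional point to check is that the $n\times n$ bounding-box assumption is legitimate, which follows because each shape has $n$ nodes and (for connected shapes) even spans at most $n$ cells per axis; if one prefers to avoid connectivity here, $n\times n$ still suffices since $n$ nodes cannot span more than $n$ rows or $n$ columns. Everything else --- building the bipartite instance with Manhattan-distance weights and invoking a standard polynomial-time matching algorithm --- is routine.
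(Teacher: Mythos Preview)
Your proposal is correct and follows essentially the same approach as the paper: enumerate a polynomial number of relative placements (four orientations times $O(n^2)$ translations) and, for each, solve a minimum-cost perfect matching with Manhattan-distance weights via the Hungarian algorithm. The paper bounds the translations by arguing that an optimal placement must have $A$ and $B$ sharing at least one grid cell (giving $4n^2$ placements by pairing a node of $A$ with a node of $B$), whereas you use the slightly weaker but cleaner condition that the bounding boxes must overlap; both lead to the same $O(n^2)$ count and the same overall structure.
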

\begin{proof}
The algorithm picks a node $u\in B$, a cell $c$ of the grid occupied by a node $v\in A$, and an orientation $o\in {north,east,south,west}$ and draws a copy of the shape $B$, starting with $u$ on $c$ and respecting the orientation $o$. Then, it constructs (in its memory) a complete weighted bipartite graph $(X,Y)$, where $X$ and $Y$ are equal to the node-sets of $A$ and $B$, respectively. The weight $w(x,y)$ for $x\in X$ and $y\in Y$ is defined as the distance from $x$ to $y$ (given the drawing of shape $B$ relative to shape $A$). To compute the minimum total distance pairing of the nodes of $A$ and $B$ for this particular placement of $A$ and $B$, the algorithm computes a minimum cost perfect matching of $(X,Y)$, e.g., by the Kuhn-Munkres algorithm (a.k.a. the Hungarian algorithm) \cite{Ku55}, and the sum of the weights of its edges $k$, and sets $dist=\min\{d,k\}$. Then the algorithm repeats for the next selection of $u\in B$, cell $c$ occupied by a node $v\in A$, and orientation $o$. In the end, the algorithm gives $dist$ as output. To see that $dist=d(A,B)$, observe that the algorithm just implements the procedure for computing the distance, of Definition \ref{def:potential-distance}, with the only differences being that it does not check all pairings of the nodes, instead directly computes the minimum-cost pairing, and that it does not try all relative placements of $A$ and $B$ but only those in which $A$ and $B$ share at least one cell of the grid. To see that this selection is w.l.o.g., assume that a placement of $A$ and $B$ in which no cell is shared achieves the minimum distance and observe that, in this case, $A$ could be shifted one step ``closer'' to $B$, strictly decreasing their distance and, thus, contradicting the optimality of such a placement. As the relative placements of $A$ and $B$ are $4n^2$ and the Kuhn-Munkres algorithm is a polynomial-time algorithm (in the size of the bipartite graph), we conclude that the algorithm computes the distance in polynomial time.
\end{proof}

To give a faster transformation either pipelining must be used (allowing for more than one movement in parallel) or more complex mechanisms that move sub-shapes consisting of many nodes, in a single step.

\subsection{Parallelizing the Transformations}

We now maintain the connectivity preservation requirement but allow an unbounded number of rotation and/or sliding movements to occur simultaneously in a single step.

\begin{proposition} \label{pro:ladder-pipelining}
There is a pipelining strategy that transforms a ladder into a line in $O(n)$ parallel time.
\end{proposition}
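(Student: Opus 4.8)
The plan is to realise a \emph{conveyor-belt} version of the generic construction of Theorem~\ref{the:universality-rot-sl}. Recall from the proof of Theorem~\ref{the:ladder-time} that the sequential strategy empties the ladder one node at a time, each node being routed along a fixed path of length $\Theta(n)$ to the right endpoint of the line being grown, for a total of $\Theta(n^2)$ moves. Instead I would run $\Theta(n)$ of these node-journeys \emph{simultaneously}: a new node is injected into the ``pipe'' every constant number $g$ of rounds, and in each parallel round every node currently in transit advances by exactly one rotation or one sliding along the common route, consecutive travelers being kept $\ge g$ cells apart. The pipe has length $O(n)$ and the $n-O(1)$ transfers are all injected within the first $g\cdot(n-O(1))=O(n)$ rounds, so the last one finishes after $O(n)+O(n)=O(n)$ parallel rounds. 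A constant number of ad hoc $O(1)$-length sequences are used to set up the ``outlet'' of the pipe at the start and to drain it at the end, and are negligible.

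Concretely, fix the ladder as in Figure~\ref{fig:ladder} (w.l.o.g.\ a $2\times(n/2)$ block) and let the target be the horizontal line extending rightwards from its bottom row. The two columns are emptied in two consecutive sub-phases; in each, the route of a traveling node is: detach the current topmost node of that column by one rotation onto the adjacent free vertical face; slide down that face one cell at a time; then walk along the (cell-)boundary of the already-assembled line to its right end; finally rotate into place as the new right endpoint. This is precisely a walk along a portion of the cell-perimeter in the sense of Lemma~\ref{lem:walk-cell-perimeter-1}, so each individual move of an isolated traveler is enabled; and since peeling proceeds top-down, the still-unremoved lower part of the column always furnishes the two support nodes needed by a slide passing it --- a traveler reaches each row strictly before that row's column node is removed, provided $g\ge 2$ --- while the already-placed line furnishes the support for the moves along its boundary.

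The two facts to verify at every round are (i) \emph{validity} of the set of moves performed in that round, i.e.\ pairwise disjoint movement paths, and (ii) preservation of \emph{global connectivity}. Part (i) is routine: the route is thin (no cell hosts two travelers at once) and the gap $g$ separates the one-cell footprints of successive rotations/slidings, so their swept regions are disjoint. Part (ii) is the real obstacle and is what dictates the choice of $g$ and of the peeling order: one must guarantee that no in-transit node ever becomes isolated. I would prove a round-indexed invariant pinning down exactly which cells of the shrinking block, of the vertical face, of the ``overhead'' corridor along the line, and of the growing line are occupied, and from it deduce that every traveler is at all times horizontally or vertically adjacent to the still-connected bulk (the remaining block early in its journey, then the assembled line), and that one synchronous pipeline step preserves the invariant; taking $g$ a small constant (two or three cells) makes the bookkeeping go through. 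Tallying as above then gives $O(n)$ parallel time, matching the $\Omega(n)$ lower bound established separately.
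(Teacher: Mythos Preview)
Your high-level scheme---inject nodes into a single walking route at constant delay, so that $O(n)$ travellers of individual length $O(n)$ overlap and finish in $O(n)$ rounds---is exactly the paper's idea, and your accounting ($g\cdot n + O(n) = O(n)$) matches the paper's ``any node $i$ starts after at most $3$ moves of node $i-1$, and arrives after at most $n$ further steps''.

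There is, however, a concrete error: the ladder of Figure~\ref{fig:ladder} is \emph{not} a $2\times(n/2)$ block. It is a diagonal staircase (this is why in Theorem~\ref{the:ladder-time} only the top-left node can ever be removed without breaking connectivity, and why the sequential cost is $\Theta(n^2)$; a $2\times(n/2)$ rectangle would already unfold into a line in $\Theta(n)$ \emph{sequential} moves, so pipelining would be beside the point). Consequently your routing description---``empty the two columns in two sub-phases, detach the topmost node of the column, slide down that vertical face''---does not apply: there are no long columns, only $n/2$ two-cell steps, and the route of each node goes diagonally down the staircase and then along the growing line. The paper handles this by numbering the nodes $1,\ldots,n$ along the staircase (odd indices on one diagonal, even on the other) and releasing node $i$ as soon as node $i-1$ has moved two or three steps past it; the connectivity check then reduces to the local observation that a node in transit is always adjacent either to the remaining staircase below it or to the node just ahead of it in the pipeline. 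Your invariant-based plan would go through once you redo it for the actual staircase geometry, but as written the middle paragraph argues about the wrong shape.
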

\begin{proof}
Number the nodes of the ladder 1 through $n$ starting from the top and following the ladder's connectivity until the bottom-right node is reached. These gives an even-numbered upper diagonal and an odd-numbered lower diagonal. Node 1 moves as in Theorem \ref{the:ladder-time}. Any even node $2\leq w <n-1$ starts moving as long as its upper odd neighbor has reached the same level as $w$ (e.g., node 2 first moves after node 1 has arrived to the right of node 3). Any odd node $1<z<n$ starts moving as long as its even left neighbor has moved one level down (e.g., node 3 first moves after node 2 has arrived to the right of 5). After a node starts moving, it moves in every step as in Theorem \ref{the:ladder-time} (but now many nodes can move in parallel, implementing a pipelining strategy). It can be immediately observed that any node $i$ starts after at most 3 movements of node $i-1$ (actually, only 2 movements for even $i$), so after roughly at most $3n$ steps, node $n-2$ starts. Moreover, a node that starts, arrives at the right endpoint of the line after at most $n$ steps, which means that after at most $4n=O(n)$ steps all nodes have taken their final position in the line.
\end{proof}

Proposition \ref{pro:ladder-pipelining} gives a hint that pipelining could be a general strategy to speed-up transformations. We next show how to generalize this technique to any possible pair of shapes.

\begin{theorem} \label{the:general-pipelining}
Let $A$ and $B$ be any connected shapes, such that $|A|=|B|=n$. Then there is a pipelining strategy that can transform $A$ to $B$ (and inversely) by rotations and slidings, without breaking the connectivity during the transformation, in $O(n)$ parallel time.
\end{theorem}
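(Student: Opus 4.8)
The plan is to combine two ingredients already developed in the paper: the sequential universality result (Theorem~\ref{the:universality-rot-sl}), which reduces any transformation $A\rightsquigarrow B$ to transforming each of $A$ and $B$ into the canonical spanning line $L$ of the same order, and the pipelining idea from Proposition~\ref{pro:ladder-pipelining}, which shows how to overlap the motions of many nodes so that the total parallel time drops from $\Theta(n^2)$ to $O(n)$. So it suffices to prove that an arbitrary connected shape $A$ can be transformed to its spanning line $L$ in $O(n)$ parallel time, preserving connectivity throughout; reversibility and composition through $L$ then give the general statement, and the $\Omega(n)$ matching bound is discussed separately (the diameter argument).

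First I would fix, as in the proof of Theorem~\ref{the:universality-rot-sl}, the rightmost column of $A$ containing a node, the lowest node $u$ in that column, and declare the target line to occupy cells $(i,j),(i,j+1),\ldots,(i,j+n-1)$. The sequential strategy processes $n-1$ phases, each of which walks one boundary node clockwise along the cell-perimeter of the current residual shape $S(k)$ and then slides/rotates it onto the right end of the growing line. The key observation for parallelization is that these per-phase walks are essentially along the perimeter of a shrinking shape, and consecutive walkers do not need to wait for the previous walker to \emph{finish}: a walker can safely start moving along the perimeter as soon as the previous walker has advanced far enough that their movement paths (which are localized to disjoint arcs of the perimeter, plus the sliding corridor over the line) are vertex-disjoint in the sense required by the ``valid'' parallel-move convention. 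I would make this precise by showing that a constant delay between the launch of the $k$-th walker and the $(k+1)$-th walker suffices: once walker $k$ has moved a constant number of steps (enough to clear the local neighborhood of the next walker's starting region and to not collide on the line corridor), walker $k+1$ may begin, and thereafter both move one step per round, exactly as in the ladder pipelining argument. Since each individual walker traverses a path of length $O(n)$ (the perimeter of a shape on $n$ nodes has length $O(n)$, and the sliding corridor over the line has length $O(n)$), and launches are staggered by $O(1)$ rounds, after $O(n)$ rounds all $n-1$ walkers have been launched, and after a further $O(n)$ rounds the last one reaches its target; total $O(n)$.

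The main obstacle, and where I would spend most of the care, is establishing connectivity preservation and collision-freeness \emph{simultaneously} for many walkers in flight. When several nodes have already detached from the residual shape and are mid-walk along the perimeter, one must argue (i) that the residual shape minus all currently-in-transit nodes is still connected — this needs the in-transit nodes to always be chosen from the set $R$ of Lemma~\ref{lem:r-external-surface} of the \emph{current} residual shape, and one must check that removing a constant-bounded number of boundary nodes at a time does not destroy connectivity, which may require launching walkers not from arbitrary positions but from a carefully maintained ``staging'' region near column $j$; (ii) that two walkers never occupy the same cell and their one-round movement paths stay disjoint — handled by the constant stagger plus the fact that perimeter-walking is strictly clockwise so walkers keep a fixed cyclic order and never overtake; and (iii) that the line corridor, over which every walker must eventually slide, does not become a serialization bottleneck — but here the same pipelining as in Proposition~\ref{pro:ladder-pipelining} applies, since a node sliding rightward over the line only needs the two cells ahead of it free, so consecutive sliders separated by one cell move in lockstep. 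A secondary subtlety is the handling of holes and non-convexities of $A$: the perimeter of the compact shape $A'$ (Definition~\ref{def:perimeter1}) may be much longer than $n$ only by a constant factor, so the $O(n)$ walk-length bound still holds, but one must verify the staging region argument survives when the shape has a complicated boundary. Once these are in place, the reduction through $L$ and reversibility finish the proof, and I would close by noting the $\Omega(n)$ lower bound follows because some pairs of shapes (e.g.\ two lines in perpendicular orientations, or a line and a ``far'' configuration) have transformation-distance diameter $\Omega(n)$ in the sense that some node must travel $\Omega(n)$ grid-steps and each parallel round moves any node by at most one cell.
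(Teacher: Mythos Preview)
Your overall plan matches the paper's approach: pipeline the perimeter-walking of Theorem~\ref{the:universality-rot-sl}, launch each successive walker after a constant delay, and bound the total time by $O(n)$ since each walk has length $O(n)$ and launches are staggered by $O(1)$. Two points, however, are handled differently in the paper and are not adequately covered in your proposal.

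First, the paper does not pick an arbitrary node from $R$ to launch next; it always picks the \emph{rightmost clockwise} node $v_k$ of the current residual shape $S(k)$, i.e., the one with shortest remaining clockwise distance to the line. This choice is what guarantees that $v_{k+1}$ is always strictly behind $v_k$ in the clockwise direction and that $v_k$'s departure is what frees $v_{k+1}$ to move. Your proposal leaves the selection rule vague (``chosen from the set $R$ of Lemma~\ref{lem:r-external-surface}''), and without this specific ordering the claim that walkers keep a fixed cyclic order and never overtake is not justified. Relatedly, your connectivity worry about ``the residual shape minus all currently-in-transit nodes'' is misdirected: in-transit nodes are always on the cell-perimeter and hence always adjacent to the residual shape, so there is no moment at which a walker is disconnected. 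There is no need to bound the number of simultaneously in-flight nodes by a constant (indeed, $\Theta(n)$ may be in flight at once).

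Second, and more seriously, you miss the one case in which pipelining can genuinely be blocked: a dead-end \emph{tunnel} of height~1 on the perimeter (e.g., the red tunnel in Figure~\ref{fig:perimeter-definition}). A walker following the cell-perimeter enters such a tunnel, reaches the end, and must come back out the same way, which would collide with the next walker entering behind it. The constant-stagger argument fails here because the direction of travel reverses. The paper's fix is explicit: walkers \emph{shortcut} tunnels by visiting only the first cell and skipping the interior traversal (which would return them to that cell anyway). Without this shortcutting rule, your argument that ``movement paths \ldots\ are vertex-disjoint'' and that walkers ``never overtake'' breaks down, and the $O(n)$ bound does not follow.
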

\begin{proof}
The transformation is a pipelined version of the sequential transformation of Theorem \ref{the:universality-rot-sl}. Now, instead of picking an arbitrary next candidate node of $S(k)$ to walk the cell-perimeter of $S(k)$ clockwise, we always pick the rightmost clockwise node $v_k\in S(k)$, that is, the node that has to walk the shortest clockwise distance to arrive at the line under formation. This implies that the subsequent candidate node $v_{k+1}$ to walk, is always ``behind'' $v_k$ in the clockwise direction and is either already free to move or is enabled after $v_k$'s departure. Observe that after at most 3 clockwise movements, $v_k$ cannot block any more the way of $v_{k+1}$ on the (possibly updated) cell-perimeter. Moreover, the clockwise move of $v_{k+1}$, only introduces a gap in its original position, therefore it only affects the structure of the cell-perimeter ``behind'' it. The strategy is to start the walk of node $v_{k+1}$ as soon as $v_k$ is no longer blocking its way. As in Proposition \ref{pro:ladder-pipelining}, once a node starts, it moves in every step, and again any node arrives after at most $n$ movements. It follows, that if the pipelined movement of nodes cannot be blocked in any way, after $4n=O(n)$ steps all nodes must have arrived at their final positions. Observe now that the only case in which pipelining could be blocked is when a node is sliding through a (necessarily dead-end) ``tunnel'' of height 1 (such an example is the red tunnel on the third row from the bottom, in Figure \ref{fig:perimeter-definition}). To avoid this, the nodes shortcut the tunnel by visiting only its first position $(i,j)$ and then simply skipping the whole walk inside it (that walk would just return them to position $(i,j)$ after a number of steps).
\end{proof}

We next show that even if $A$ and $B$ are labeled shapes, that is, their nodes are assigned the indices $1,\ldots,n$ (uniquely, i.e., without repetitions), we can still transform the labeled $A$ to the labeled $B$ with only a linear increase in parallel time. We only consider transformations in which the nodes never change indices in any way (e.g., cannot transfer them, or swap them), so that each particular node of $A$ must eventually occupy (physically) a particular position of $B$ (the one corresponding to its index).

\begin{corollary}
The labeled version of the transformation of Theorem \ref{the:general-pipelining} can be performed in $O(n)$ parallel time.
\end{corollary}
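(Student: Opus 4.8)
The plan is to route the labeled transformation through a \emph{canonical labeled target} and then exploit reversibility, exactly in the spirit of Theorems~\ref{the:universality-rot-sl} and~\ref{the:general-pipelining}. Fix once and for all a canonical shape $D^\ast$ --- say a $2$-wide block of $n$ nodes (a $2\times\lceil n/2\rceil$ rectangle, with one corner cell omitted when $n$ is odd) --- together with a fixed assignment of the indices $1,\ldots,n$ to its cells. It suffices to show that the labeled $A$ can be transformed to $D^\ast$ in $O(n)$ parallel time: doing the same for $B$ and then concatenating the $A\rsa D^\ast$ transformation with the reverse of the $B\rsa D^\ast$ transformation (rotations and slidings are reversible by Proposition~\ref{pro:par-equiv}, and reversing a parallel transformation keeps the movement paths disjoint, hence is again a legal parallel transformation) yields a labeled $A\rsa B$ transformation in $O(n)$ parallel time. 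Since indices ride passively on the nodes, I would first apply the \emph{unlabeled} pipelined transformation of Theorem~\ref{the:general-pipelining} to bring $A$ to the $2$-wide block; this costs $O(n)$ parallel time, preserves connectivity, and leaves the indices on the block in some (a priori arbitrary) permutation $\pi$ of the canonical assignment.

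It then remains to permute the indices on the block from $\pi$ to the canonical arrangement, in $O(n)$ parallel time and without breaking connectivity. I would do this by simulating \emph{odd--even transposition sort} on the $\lceil n/2\rceil$ \emph{columns} of the block, each column (a pair of nodes) acting as one token: in odd (resp.\ even) rounds, every disjoint adjacent pair of columns whose contents are out of order w.r.t.\ the target arrangement is swapped, and it is classical that $n$ such rounds sort any permutation of $n$ tokens. A single compare--exchange of two neighbouring columns is realized by a fixed local gadget of constantly many rotations and slidings that moves the four involved nodes through the (empty) rows immediately above and below the block, while the untouched part of the block keeps everything connected. Disjoint compare--exchanges in the same round act on pairwise disjoint rectangles of the grid, so they execute simultaneously; hence each round costs $O(1)$ parallel steps and the whole sort costs $O(n)$ parallel steps, after which the block equals $D^\ast$. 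Chaining, as described above, with the reversed $B\rsa D^\ast$ transformation completes the construction.

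The main obstacle is the constant-time, connectivity-preserving swap gadget for two adjacent columns. On a \emph{bare} width-$1$ line any interior move immediately disconnects the shape --- this is exactly why line folding is infeasible by rotation alone and needs a seed (cf.\ Theorem~\ref{the:rotc-rot} and the $3$-line-seed proposition of Section~\ref{sec:rotation-connectivity}) --- so the gadget genuinely relies on the extra width: one must schedule the four moves so that at no intermediate configuration is a whole column of the block emptied (which would split the block in two) and no node is left dangling in the scratch rows, and one must check that the footprints of gadgets applied in the same round do not overlap. I expect this case analysis (best accompanied by a figure, as elsewhere in the paper) to be the only non-routine ingredient; the rest --- the reduction through $D^\ast$, the use of Theorem~\ref{the:general-pipelining} as a black box, the $n$-round bound for odd--even transposition sort, and the reversibility argument --- is immediate.
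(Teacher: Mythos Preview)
Your reduction through a canonical labeled shape and the appeal to reversibility are fine, and the idea of sorting on a $2$-wide block with odd--even transposition sort is the right instinct. But the sort you describe does not do what you need. You treat each \emph{column} of the block as a single token and swap adjacent columns; this can realise only the $\lceil n/2\rceil!$ permutations that respect the partition of the $n$ labels into column-pairs. After applying Theorem~\ref{the:general-pipelining} as a black box, the induced pairing of labels into columns is arbitrary and in general will not match the column-pairing of your canonical $D^\ast$ --- for instance, labels $1$ and $2$ may land in the same column while $D^\ast$ places them in different columns --- and then no sequence of column swaps can reach $D^\ast$. So the step ``after which the block equals $D^\ast$'' fails.

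The paper avoids exactly this obstruction by not using Theorem~\ref{the:general-pipelining} as a black box. It modifies the pipelined construction so that each arriving node is routed to the \emph{upper} or \emph{lower} row of a double line according to whether its label belongs to the left or right half of the target permutation; thus the row-partition is correct by construction. It then runs odd--even sort on each \emph{row} separately (a genuine $1$-dimensional sort on $n/2$ items), using the other row to maintain connectivity, with each phase split into two subphases so that swapping pairs are separated by two stationary nodes. A fix in your framework would be to sort the $n$ labels along a snake (row-major) order of the block rather than sorting columns, supplying constant-size gadgets for both the horizontal and the vertical adjacencies; but as stated, your column-token sort is the missing idea.
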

\begin{proof}
Recall from Theorem \ref{the:universality-rot-sl} that the line were constructed to the right of some node $u$. That node was the lowest node in that column, therefore, there is no node below $u$ in that column. The procedure of Theorem \ref{the:general-pipelining}, if applied on the labeled versions of $A$ and $B$ will result in two (possibly differently) labeled lines, corresponding to two permutations of $1,2,\ldots,n$, call them $\pi_A$ and $\pi_B$. It suffices to show a way to transform $\pi_A$ to $\pi_B$ in linear parallel time, as then labelled $A$ is transformed to $\pi_A$, then $\pi_A$ to $\pi_B$, and then $\pi_B$ to $B$ (by reversing the transformation from $B$ to $\pi_B$), all in linear parallel time.

To do this, we actually slightly modify the procedure of Theorem \ref{the:general-pipelining}, so that it does not construct $\pi_A$ in the form of a line, but in a different form that will allow us to transform it fast to $\pi_B$ without breaking connectivity. What we will construct is a double line, with the upper part growing to the right of node $u$ as before and the lower part starting from the position just below $u$ and also growing to the right. The upper line is an unordered version of the left half of $\pi_B$ and the lower line is an unordered version of the right half of $\pi_B$. To implement the modification, when a node arrives above $u$, as before, if it belongs to the upper line, it goes to the right endpoint of the line as before, while if it belongs to the lower line, it continues its walk in order to teach the right endpoint of the lower line.

When the transformation of labeled $A$ to the folded line is over, the procedure has to order the nodes of the folded line and then unfold in order to produce $\pi_B$. We first order the upper line in ascending order. While we do this, the lower line stays still in order to preserve the connectivity. When we are done, we order the lower line in descending order, now keeping the upper line still. Finally, we perform a parallel right sliding of the lower line (requiring linear parallel time), so that its inverse permutation ends up to the right of the upper line, thus forming $\pi$.

It remains to show how the ordering of the upper line can be done in linear parallel time without breaking connectivity. To do this, we simulate a version of the odd-even sort algorithm (a.k.a. parallel bubble sort) which sorts a list of $n$ numbers with $O(n)$ processors in $O(n)$ parallel time. The algorithm progresses in odd and even phases. In the odd phases, the odd positions are compared to their right neighbor and in the even phases to their left neighbor and if two neighbors are ever found not to respect the ordering a swap of their values is performed. In our simulation, we break each phase into two subphases as follows. Instead of performing all comparisons at once, as we cannot do this and preserve connectivity, in the first subphase we do every second of them and in the second subphase the rest so that between any pair of nodes being compared there are 2 nodes that are not being compared at the same time. Now if the comparison between the $ith$ and the $i+1$ node indicates a swap, then $i+1$ rotates over $i+2$, $i$ slides right to occupy the previous position of $i+1$, and finally $i+1$ slides left over $i$ and then rotates left around $i$ to occupy $i$'s previous position. This swapping need 4 steps and does not break connectivity. The upper part has $n/2$ nodes, each subphase takes 4 steps to swap everyone (in parallel), each phase has 2 sub-phases, and $O(n)$ phases are required for the ordering to complete, therefore, the total parallel time is $O(n)$ for the upper part and similarly $O(n)$ for the lower part. This completes the proof.
\end{proof}

An immediate observation is that a linear-time transformation does not seem satisfactory for all pairs of shapes. To this end, take a square $S$ and rotate its top-left corner $u$, one position clockwise, to obtain an almost-square $S^\prime$. Even though, a single counter-clockwise rotation of $u$ suffices to transform $S^\prime$ to $S$, the transformation of Theorem \ref{the:general-pipelining} may go all the way around and first transform $S^\prime$ to a line and then transform the line to $S$. In this particular example, the distance between $S$ and $S^\prime$, according to Definition \ref{def:potential-distance}, is $2$, while the generic transformation requires $\Theta(n)$ parallel time. So, it is plausible to ask if any transformation between two shapes $A$ and $B$ can be performed in time that grows as a function of their distance $d(A,B)$. We show that this cannot always be the case, by presenting two shapes $A$ and $B$ with $d(A,B)=2$, such that $A$ and $B$ require $\Omega(n)$ parallel time to be transformed to each other.

\begin{proposition}
There are two shapes $A$ and $B$ with $d(A,B)=2$, such that $A$ and $B$ require $\Omega(n)$ parallel time to be transformed to each other.
\end{proposition}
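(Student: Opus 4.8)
The plan is to exhibit a pair of shapes $A,B$ that are \emph{locally} almost identical — differing only in the placement of a single node, and by Manhattan distance exactly $2$, so that $d(A,B)=2$ — yet are \emph{globally} far apart for connectivity-preserving transformations. Concretely, I would take a long, thin shape (width $O(1)$, length $\Theta(n)$), for instance an elongated ``$C$''-shape: a rectangular loop, thickened to width two, with one short side removed. I designate one distinguished node $p$ sitting in a small pocket near the open end of the $C$, boxed in so that every move of $p$ is only an ``oscillation'' (a rotation/slide loop that returns it to where it started) — in particular $p$ cannot approach its target. In $A$, node $p$ occupies one cell of the pocket; in $B$, it occupies the cell two steps away, on the far side of a single ``wall'' node. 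The rest of the shape is arranged so that the only non-oscillating moves available are at the far end of the $C$, at distance $\Theta(n)$ from $p$ (measured along the shape). I expect that any valid transformation is then forced to dismantle the shape from that far end and route material all the way around, which costs $\Theta(n)$ even when performed fully in parallel.

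First I would fix the construction precisely and verify two easy facts: (i) $A$ and $B$ are connected, and (ii) $d(A,B)=2$. For (ii), the matching that moves only $p$ (to its target, two cells away) has cost $2$; and since $A$ and $B$ coincide on every other cell, any matching that does \emph{not} send $p$ to $p$'s target must re-route through a cycle involving at least two occupied cells and hence also costs at least $2$. Thus $d(A,B)=2$ exactly, and in particular $B\neq A$.

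Next I would check that $A\rsa B$ is feasible, so the $\Omega(n)$ claim is not vacuous: this is immediate from Theorem~\ref{the:universality-rot-sl}, since $A$ and $B$ are connected of order $n$, hence both transform into the common spanning line $L$, and by reversibility $A\rsa B$. It is worth also giving a direct transformation witnessing parallel time $O(n)$, so that the bound is tight: open the $C$ near its open end, slide $p$ across the freed wall cell, route the displaced wall node around the (thin, now nearly empty) interior to refill the gap, and re-close — all clearly $O(n)$ even sequentially, and parallelizable to $O(n)$.

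The heart of the argument, and the step I expect to be the main obstacle, is the lower bound: \emph{every} connectivity-preserving transformation from $A$ to $B$ needs $\Omega(n)$ parallel rounds. The potential-function bound used for the earlier sequential lower bounds is useless here, because the potentials of $A$ and $B$ differ by at most $d(A,B)=2$ while a single parallel round may decrease the potential by a lot. Instead I would prove a ``bounded disturbance'' (finite speed of propagation) lemma: starting from $A$, after $t$ parallel rounds the current configuration agrees with $A$ outside a region that is connected within the thin shape and whose extent along the length of the shape has grown by only $O(1)$ per round. The intuition is that a move can only vacate a cell adjacent to an already-vacated cell or to the exterior; that the ``frontier'' of the disturbed region is $O(1)$-wide because the shape is $O(1)$-wide; and that simultaneous moves in one round use pairwise disjoint grid paths, so only $O(1)$ useful moves touch the frontier per round. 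One must be careful to rule out the transformation first ``thickening'' the shape near $p$ — but any extra node delivered to $p$'s neighbourhood is itself sourced from the disturbed region, hence from far away. Since in $A$ every cell adjacent to an available non-oscillating move lies at distance $\Theta(n)$ along the shape from $p$'s pocket, and since reaching configuration $B$ forces the disturbed region to include that pocket, we conclude $t=\Omega(n)$. Making the ``bounded disturbance'' lemma rigorous — identifying the right invariant and handling the parallelism and connectivity bookkeeping cleanly — is where the real work lies.
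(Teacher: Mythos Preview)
Your high-level strategy is the right one and matches the paper's: exhibit two shapes that differ in a single node by Manhattan distance $2$, arranged so that the only nodes that can legally move (without breaking connectivity) sit $\Theta(n)$ away from where the change must happen, and then argue that ``influence'' can spread only $O(1)$ per round. Where you diverge from the paper is in the construction and, consequently, in how much work the lower-bound step costs you.

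The paper does \emph{not} use a thickened $C$; it uses a width-$1$ square frame with a single two-cell gap (``door'') near the middle of the bottom side. Shape $A$ has the door centered, shape $B$ has it shifted one cell to the left; the symmetric difference is two cells at distance $2$, so $d(A,B)=2$. The payoff of width $1$ is that the propagation lemma you flag as ``the main obstacle'' becomes almost a triviality: in a width-$1$ frame, \emph{every} node except the two adjacent to the gap is a cut vertex, so initially only those two nodes $u,w$ can move at all. Moreover, each of $u,w$ can only walk along its own arc of the frame; no third node becomes movable until one of $u,w$ physically reaches it. Hence for the first $\Theta(n)$ rounds the transformation is forced to be sequential (at most two nodes moving, and those two cannot interact), and the paper finishes with a short balance/quadrant argument showing that $\Theta(n)$ such steps are needed before any node can cross the relevant axis. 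No general ``bounded disturbance'' lemma is required.

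By contrast, your width-$2$ shape has no cut vertices along its length, so many nodes \emph{can} move in the first round, and you genuinely would need the propagation lemma you sketch. That lemma is plausible but, as you note, nontrivial to state and prove cleanly (you must control parallel moves, rule out local thickening, and track the frontier). The paper's lesson is that you can choose the construction to make this step disappear: take width $1$ so that connectivity alone pins down which nodes can move. If you want to salvage your approach with minimal change, replace the thickened $C$ by a width-$1$ frame (or width-$1$ $C$) and redo the ``only the endpoints can move'' observation; the rest of your outline then goes through without the hard lemma.
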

\begin{proof}
The two shapes, a black and a red one, are depicted in Figure \ref{fig:distance-counterexample}. Both shapes form a square which is empty inside and also open close to the middle of its bottom side. The difference between the two shapes is the positioning of the bottom ``door'' of length 2. The red shape has it exactly in the middle of the side, while the black shape has it shifted one position to the left. Equivalently, the bottom side of the red shape is ``balanced'', meaning that it has an equal number of nodes in each side of the vertical dashed axis that passes through the middle of the bottom, while the black shape is ``unbalanced'' having one more node to the right of the vertical axis than to its left.

To transform the black shape into the red one, a node must necessary cross either the vertical or the horizontal axis. Because, if nothing of the two happens, then, no matter the transformation, we won't be able to place the axes so that the running shape has two pairs of balanced quadrants, while, on the other hand, the red shape satisfies this, by pairing together the two bottom quadrants and the two upper quadrants. Clearly, no move can be performed in the upper quadrants initially, as this would break the shape's connectivity. The only nodes that can move initially are $u$ and $w$ and no other node can ever move unless first approached by some other node that could already move. Observe also that $u$ and $w$ cannot cross the vertical boundary of their quadrants, unless with help of other nodes. But the only way for a second node to move in any of these quadrants (without breaking connectivity) is for either $u$ or $w$ to reach the corner of their quadrant which takes at least $n/8-2$ steps and then another $n/8$ steps for any (or both) of these nodes to reach the boundary, that is, at least $n/4-2$ steps, which already proves the required $\Omega(n)$ parallel-time lower bound (even a parallel algorithm has to pay the initial sequential movement of either $u$ or $w$).
\end{proof}

\begin{figure}[!hbtp]
\centering{
\includegraphics[width=0.35\textwidth]{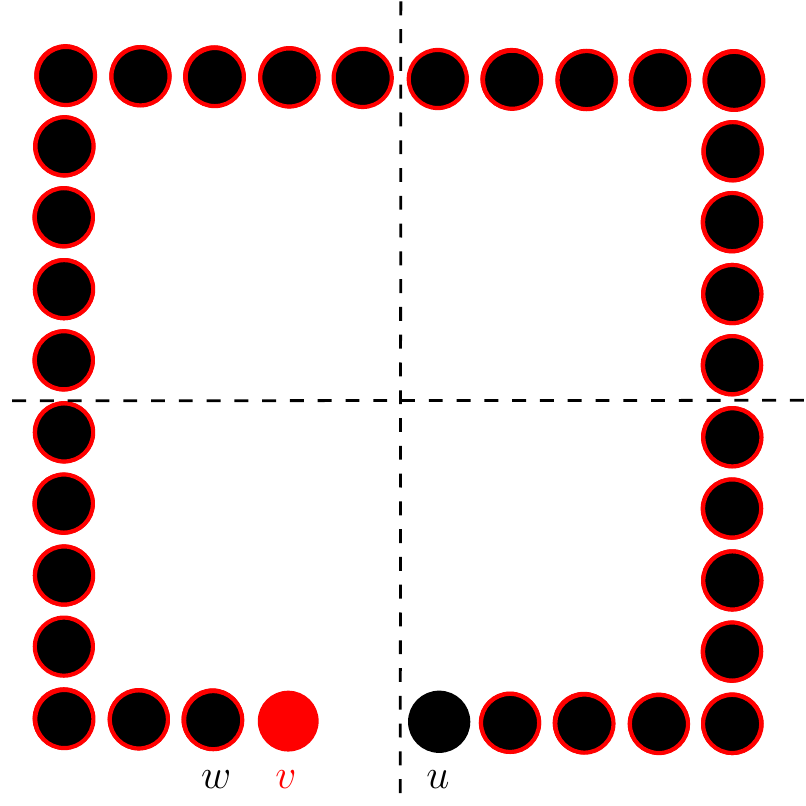}
}
\caption{Counterexample for distance} \label{fig:distance-counterexample}
\end{figure}

\section{Distributed Transformations with Rotation and Sliding}
\label{sec:distributed}

$\rem{Line Transformation Problem}$: Transform any connected shape into a shape where all nodes are either in the $x$ axis or the $y$ axis.

\begin{definition}
A node on a black cell of the grid is called a potential hole node if removing him would create a non-compact shape.
\end{definition}

\begin{definition}
We define a \emph{local-info based} movement (lib movement), a movement that a leader decides to perform without consulting the whole network.
\end{definition}

\begin{proposition}
No algorithm based on lib movements can solve the line transformation problem without breaking the connectivity.
\end{proposition}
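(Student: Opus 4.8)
The plan is to argue by contradiction, via an indistinguishability (adversary) argument that exploits the defining property of a lib movement: the leader triggers it using information coming from a strict sub-part of the network, and hence it cannot tell apart two global shapes that agree on the part it actually consults. So assume some algorithm $\mathcal{A}$, all of whose movements are lib movements, both solves the line transformation problem and never breaks connectivity, on every connected input shape.

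The key input is a large \emph{ring} $R$: a grid cycle of $n$ nodes enclosing a big hole, with the unique leader placed on a locally straight stretch of $R$. Since every node of $R$ has degree $2$, inside any region strictly smaller than $R$ every node looks exactly like an interior node of a straight (or once-bent) line; in particular, around the leader, $R$ is indistinguishable from a long straight line-shape. Now run $\mathcal{A}$ on $R$ and distinguish two cases. If $\mathcal{A}$ never performs a movement on $R$, then $R$ is never transformed into a line, contradicting solvability. Otherwise let $m$ be the \emph{first} movement $\mathcal{A}$ performs; it is a lib movement of some node $v$, decided from data the leader consults, which lies inside some proper sub-part $W\subsetneq V$ of the shape (and the whole $2\times 3$ rotation/sliding window of $v$ lies in $W$, otherwise the leader could not have certified $m$ to be executable). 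At the moment $m$ is chosen the configuration is still exactly $R$, so the consulted data is just $R$ restricted to $W$, and inside $W$ the node $v$ has degree $2$ and is an interior node of the thin arc $R\cap W$.

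Next I would build a \emph{deceiving twin} $R'$: a connected grid shape that (i) coincides with $R$ on $W$, so the leader consults the very same data and $\mathcal{A}$'s first movement on $R'$ is again $m$ on $v$; but (ii) outside $W$ is not the remainder of the ring — instead the two ends of the arc $R\cap W$ are extended into two \emph{disjoint} blocks of nodes whose only mutual connection runs through the arc, hence through $v$. Thus in $R'$ node $v$ is a cut node sitting on a thin path, and a short geometric check (rotations around either neighbour, in either orientation; slidings are impossible on a thin arc) shows that $m$ vacates $v$'s cell without producing any new adjacency bridging the two blocks, so executing $m$ on $R'$ disconnects $R'$. Running $\mathcal{A}$ on $R'$, it sees the same local data as on $R$, performs $m$, and breaks connectivity --- contradiction. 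Hence no lib-movement-only algorithm can solve the line transformation problem while preserving connectivity.

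The main obstacle is the twin construction, and in particular making it work for \emph{every} first movement $m$ the leader might choose. This needs: (a) a precise reading, from the informal definition, of how much of the network a lib movement is permitted to consult --- the more the leader may inspect, the less room remains to hide the difference between $R$ and $R'$, so the cleanest route is to take $W$ to be a ball of bounded (or at least sublinear-in-$n$) radius around the leader, which makes $R$ and a straight line genuinely equivalent at the instant of the first movement; and (b) the routine but necessary case analysis confirming that $v$'s displacement in $R'$ never accidentally relinks the two blocks. Point (a) is really the crux; once $W$ is a genuine proper part of the shape, part (b) goes through mechanically because $R$, and therefore $R'$ near $v$, is only one node thick.
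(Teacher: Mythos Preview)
Your proposal is correct and rests on the same core idea as the paper: an indistinguishability argument where two shapes agree on the portion the leader actually consults, so the algorithm commits to the same first movement on both, but that movement disconnects one of them. The paper's proof is far more compressed --- it simply exhibits two concrete shapes (via a figure) that are locally indistinguishable, observes that the lib movement safe on the first necessarily breaks connectivity on the second, and stops there. Your ring-plus-twin construction is a more generic adversary version of the same template: it is longer and needs the small case analysis you outline in point~(b), but it is self-contained (no reliance on a picture) and it makes explicit the dependence on how ``local'' a lib movement really is --- a point the paper's definition leaves vague and its proof glosses over, but which you rightly flag as the crux.
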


\begin{proof}
Observe the following shapes. See figure \ref{fig:lib-movement-counterexample} If an algorithm performed a lib movement at the first shape, the same algorithm would have to perform the same movement at the second shape because it cannot distinguish the two shapes. That movement would break the connectivity on the second shape therefore no algorithm based on lib movements could solve the line formation problem on both shapes.
\end{proof}

\begin{figure}[!hbtp]
\centering{
\includegraphics[width=0.35\textwidth]{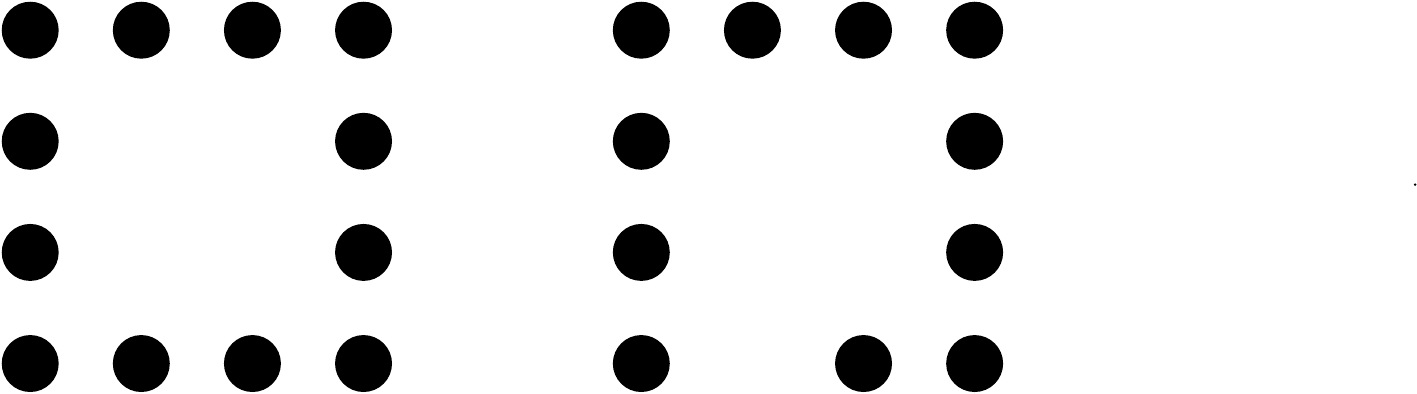}
}
\caption{Lib movement counterexample} \label{fig:lib-movement-counterexample}
\end{figure}

\begin{definition}
A shape is called compact when it has no holes.
\end{definition}

\begin{lemma}
If all nodes in a connected shape have two or more neighbours, then there it at least one cycle present.
\end{lemma}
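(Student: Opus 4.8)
The plan is to forget the grid geometry entirely and work with the underlying graph $G=(V,E)$ of the shape, where $V$ is the set of $n$ nodes and $\{u,v\}\in E$ iff $u,v$ are grid-neighbors. The statement is then the classical fact that every finite graph with minimum degree at least $2$ contains a (simple) cycle; connectivity is not actually needed and is present only because shapes are assumed connected throughout.

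First I would take a path $P=v_0v_1\cdots v_k$ in $G$ of maximum length (largest number of edges), which exists since $G$ is finite and nonempty. Since every node has at least two neighbors, in particular $v_0$ has a neighbor, so $k\geq 1$ and $P$ contains at least one edge. Now focus on the endpoint $v_k$: by hypothesis it has a neighbor $w\neq v_{k-1}$. The key step is the maximality argument: $w$ must already lie on $P$, because otherwise $v_0v_1\cdots v_k w$ would be a strictly longer path, contradicting the choice of $P$. Hence $w=v_i$ for some $i$ with $0\le i\le k-2$, and then $v_iv_{i+1}\cdots v_kv_i$ is a cycle, of length $k-i+1\ge 3$, which is what we wanted. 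I would also remark that one can alternatively argue by counting: if a connected shape on $n$ nodes had no cycle it would be a tree with exactly $n-1$ edges, so $\sum_{v}\deg(v)=2(n-1)<2n$, forcing some node to have degree at most $1$ — contradicting that every node has at least two neighbors.

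There is essentially no hard part here; the only points needing a line of care are (i) verifying the base case — the shape is nonempty, so a longest path exists and, because every node has a neighbor, it has at least one edge — and (ii) the maximality step, i.e. spelling out that every neighbor of an endpoint of a longest path already lies on that path, so that the extra neighbor $w$ closes a cycle. I would also make explicit that "cycle" is meant in the graph-theoretic sense of a simple cycle in the neighboring graph, which the length-$\ge 3$ observation confirms.
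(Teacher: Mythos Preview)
Your proof is correct. Your primary argument via a longest path is a genuinely different route from the paper's, which instead uses the edge-counting argument you mention as an alternative: under the degree hypothesis the graph has at least $n$ edges, while a connected acyclic graph on $n$ vertices (a tree) has exactly $n-1$, so a cycle must exist. The counting proof is shorter but leans on connectivity and the tree edge-count; your longest-path argument is slightly more self-contained, does not need connectivity, and directly exhibits the cycle. Either is perfectly adequate here.
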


\begin{proof}
Assume a connected graph where every node has at least $2$ neighbours and there is no cycle present. Every node has $2$ edges and we have $n$ nodes. The sum of all edges is $2n/2=n$ because we have double counted every edge. Now a connected graph without a cycle is called a tree and a tree has $n-1$ edges. If and edge is added (n edges) the tree creates a cycle.
\end{proof}

\begin{lemma}
In compact shapes, there is always a lib movement that does not break the connectivity.
\end{lemma}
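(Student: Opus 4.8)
The plan is to exhibit, for an arbitrary compact connected shape $A$ with $|A|\ge 2$ (the case $|A|\le 1$ being vacuous), one concrete node together with a rotation or sliding of it whose connectivity‑preservation can be certified by inspecting only a constant‑radius neighbourhood of that node; such a certificate is exactly what a leader needs to fire a local‑info based move, so producing it proves the lemma. I would split into two cases. \emph{Leaf case:} $A$ has a node $v$ of degree $1$, with unique neighbour $w$. Then $A\setminus\{v\}$ is connected (removing a degree‑$1$ vertex from a connected graph leaves it connected), so \emph{any} move of $v$ is automatically connectivity‑preserving, and after the move $v$ reattaches to a cell edge‑adjacent to $A\setminus\{v\}$. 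A short case analysis on the (at most two) cells diagonally adjacent to $v$ on $w$'s side shows $v$ can always move: if one of them is empty, $v$ rotates $90^{\circ}$ around $w$ through it; if both are occupied, $v$ slides one step parallel to $w$ over $w$ and that occupied diagonal cell. All of this is visible within distance $2$ of $v$, so it is a lib movement.

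\textbf{The no‑leaf case.} Suppose every node of $A$ has degree $\ge 2$. Let $u$ be the topmost node of $A$, breaking ties by taking the leftmost. Then the cells above, above‑left, above‑right and left of $u$ are all empty, so $u$'s only possible neighbours are the cell to its right and the cell below it; degree $\ge 2$ forces both to be occupied. The key structural claim is: \emph{for a compact no‑leaf shape, the cell diagonally below‑right of $u$ is also occupied.} Granting this, the occupied triple (right of $u$, below‑right of $u$, below $u$) forms a path in $A\setminus\{u\}$ joining $u$'s only two neighbours, hence $A\setminus\{u\}$ is connected, i.e.\ $u$ is not a cut vertex. Then $u$ rotates $90^{\circ}$ around its right‑neighbour into the cell just above that neighbour: the two cells swept by this rotation both lie in the row immediately above the topmost row of $A$ and are therefore empty, so the move is valid, and afterwards $u$ is still edge‑adjacent to its former right‑neighbour, so connectivity is preserved. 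Recognising this ``filled convex corner'' pattern around $u$ and checking the rotation's two target cells uses only a constant‑size window, so the move is lib.

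\textbf{Main obstacle and a back‑up route.} The delicate point is the structural claim. I would argue it by contradiction: if the below‑right cell $q$ of $u$ were empty, then the upward ray from $q$ is blocked by $u$'s right‑neighbour and the leftward ray by $u$'s below‑neighbour; moreover the degree‑$\ge 2$ condition forces the top row of $A$ to extend at least two cells to the right of $u$ (otherwise $u$'s right‑neighbour is a leaf) and forces an occupied cell in the row just below the top row and to the right of $q$ (otherwise the last top‑row cell is a leaf), which blocks the rightward ray from $q$; chasing this further along the boundary to also block the downward ray then makes $q$ part of a hole, contradicting compactness. Carrying out the last boundary‑chase cleanly is the real work; it is in essence the digital‑topology fact that in a simply connected region ``removable without disconnecting'' is a purely local property, which one could instead cite directly. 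As a sanity check on \emph{existence} of a connectivity‑preserving move (setting the locality requirement aside), this also follows from the machinery already in the paper: Lemma~\ref{lem:r-external-surface} gives a boundary node $v$ with $A\setminus\{v\}$ connected, Lemma~\ref{lem:walk-cell-perimeter-2} lets $v$ take a first rotation/sliding step along the cell‑perimeter of $A\setminus\{v\}$, and that destination cell — being on the cell‑perimeter — is edge‑adjacent to $A\setminus\{v\}$, so the resulting shape is connected; compactness is precisely what additionally makes the choice of $v$ and of its first move locally decidable.
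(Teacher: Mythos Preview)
Your approach coincides with the paper's: handle the degree-$1$ case directly, and in the no-leaf case exhibit a node sitting at the corner of a $2\times 2$ block, so that its non-cut-vertex status is certifiable by the purely local ``do my two neighbours share a third common neighbour'' test. You are in fact more explicit than the paper --- it only says that once every node has degree $\ge 2$ there is a cycle and that ``a node can consult his neighbours to see if it is a bridge,'' without ever arguing that some node's local test must succeed; you instead name the topmost-leftmost node as the witness and correctly isolate the one genuine obligation (that compactness together with minimum degree $\ge 2$ forces its down-right diagonal cell to be occupied). That structural claim is exactly the content the paper's proof skips, so your self-flagged gap is no larger than the paper's own, and your boundary-chase sketch and your backup route through Lemmas~\ref{lem:r-external-surface}--\ref{lem:walk-cell-perimeter-2} are both reasonable ways to close it.
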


\begin{proof}
Observe that there is no shape where all nodes have $3$ or $4$ neighbours and when a node has only one neighbour it can always perform a lib movement. Suppose that a shape exists where no lib movements are available. Each node in this shape has at least $2$ neighbours so there is at least one cycle. If this cycle is compact, then a node can consult his neighbours to see if it is a bridge in order to move. (HeIt can do this by asking whether the two neighbours it has, have a different node as a common neighbor). This is a lib movement because the information required is local based. Therefore a lib movement is always available.
\end{proof}

\begin{algorithm}
\begin{algorithmic}[1]
\Procedure{Compact-Line}{Leader} \Comment{The difference between a move and a travel action is, that the leader moving means that it swaps his settings to a neighbouring node, while a leader traveling means that it moves his current node.}
\State $label\gets 0$, $phase\gets 0$, $check\gets 0$, $state\gets 0$, $buck\gets 0$, $num\gets 0$, $line\gets 0$
\While{$r=1$}
    \State $p_0\gets north$, $p_1\gets east$, $p_2\gets south$, $p_3\gets west$, $orien(node)=1;$
    \State $Send$ $north$, $east$, $south$, $west$ to $p_0$, $p_1$, $p_2$, $p_3$ respectively$;$
\EndWhile
\While{$r=2,3$ AND $buck=0$}
    \State $Send$ $tick=up, right, down, left$; $Send$ $num=0,1,0,-1$ to $p_0$, $p_1$, $p_2$, $p_3$ respectively$;$
    \State $Receive$ $(tick',num',ack');$
    \State $Do$ for every $tick',num'$ received$;$
        \If{$num'>num$}
            \State $line=tick';$ $num=num';$
        \EndIf
    \If{$ack'=null$}
        \State $state++;$
    \Else
        \State $state--;$
    \EndIf
    \If{$state\ge2$}
        \State $follow$ the path described on the line$;$
    \EndIf
    \If{line has been reached}
        \State $buck=1;$
    \EndIf
\EndWhile
\While{$r=2,3,..$ AND $buck=1$}
    \State $Receive(flag',qu');$
    \If {$label=0$}
        \State $move$ $west;$ $flag(node)=1;$
        \If{no node available west}
            \State $label++;$
        \EndIf
    \EndIf
    \If {$label=1$}
        \State $move$ $east;$  $flag(node)=1;$
        \If{no node available east}
            \State $label++;$
        \EndIf
    \EndIf
    \If{$flag'=0$ AND $flag=0$}
        \If{$label=2$}
            \If{$neighbours=1$}
                \State $label++;$
            \ElsIf{$neighbours=2$ AND $neighbours$ not opposite}
                \If{$neighbours$ are $p_0,p_1$}
                    \State $send$ $qu=1$  to $p_0;$ $label++;$
                \EndIf
                \If{$neighbours$ are $p_1,p_2$}
                    \State $send$ $qu=1$  to $p_1$; $label++;$
                \EndIf
        \algstore{myalg1}
\end{algorithmic}
\end{algorithm}

\begin{algorithm}
\begin{algorithmic}
\algrestore{myalg1}
                \If{$neighbours$ are $p_2,p_3$}
                    \State $send$ $qu=1$  to $p_2;$ $label++;$
                \EndIf
                \If{$neighbours$ are $p_3,p_0$}
                    \State $send$ $qu=1$  to $p_3;$  $label++;$
                \EndIf
            \Else
                \State $move;$
            \EndIf
        \EndIf
\If{$label=3$ AND $phase=0$}
            \If{$qu'=1$}
                \State $phase=1;$
            \Else
                \State $label--;$ $move;$
            \EndIf
        \EndIf
        \If{$label=3$ AND $phase=1$}
            \State $travel;$
        \EndIf
    \EndIf
    \If{$flag'=1$ OR $flag=1$}
        \State $flag=1;$ $phase=0;$ $label=2;$ $move;$
    \EndIf
\EndWhile
\EndProcedure
\end{algorithmic}
\end{algorithm}

\begin{algorithm}
\begin{algorithmic}[1]
\Procedure{Compact-Line}{Non-Leader}
\State $flag\gets 0$, $orien\gets 0$, $mark\gets 0$
\While{$orien=0$}
    \State $Receive(north', south', east', west', tick', num', qu')$
    \If{$p_0$ receives $south$ OR $p_1$ receives $west$ OR $p_2$ receives $north$ OR $p_3$ receives $east$}
    \State $orien=1;$
    \EndIf
    \If{$p_0$ receives $east$ OR $p_1$ receives $south$ OR $p_2$ receives $west$ OR $p_3$ receives $north$}
    \State $p_n=p_{(n-1)mod3};$ $orien=1;$
    \EndIf
    \If{$p_0$ receives $north$ OR $p_1$ receives $east$ OR $p_2$ receives $south$ OR $p_3$ receives $swest$}
    \State $p_n=p_{(n-2)mod3};$ $orien=1;$
    \EndIf
    \If{$p_0$ receives $west$ OR $p_1$ receives $north$ OR $p_2$ receives $east$ OR $p_3$ receives $south$}
    \State $p_n=p_{(n-3)mod3};$ $orien=1;$
    \EndIf
    \State $Send north, east, south, west$ to $p_0, p_1, p_2, p_3$ respectively$;$
\EndWhile
\While{$orien=1$}
    \If{$mark=0$}
        \State $Receive(tick',num');$
        \State $Send$ $tick=$conc$(tick',up)$, $tick=$conc$(tick',right)$, $tick=$conc$(tick',down)$, $tick=$conc$(tick',left)$ to $p_0, p_1, p_2 p_3$ respectively$;$
        \State $Send$ $num=num'+0$, $num=num'+1$, $num=num'+0$, $num=num'-1$ to $p_0, p_1, p_2, p_3$ respectively$;$
        \State $Send$ $ack=1$ to the node who sent you $tick$;
        \State $mark=1;$ $path=m$ where m is the number of the port that received the $tick'-num';$ message$;$
    \EndIf
    \algstore{myalg2}
\end{algorithmic}
\end{algorithm}

\begin{algorithm}
\begin{algorithmic}
\algrestore{myalg2}
    \If{mark=1}
        \State $Receive(ack',tick',qu');$
        \If{$tick'$ was not null}
            \State $Send (ack=1)$ to $p_{path};$
        \ElsIf{$ack'$ was not null}
            \State $Send (ack=1)$ to $p_{path};$
        \EndIf
        \If{$qu=0$}
            \If{$p_0$ has neighbour}
                \State $Send qu=approve;$
            \Else
                \State $Send qu=reject;$
            \EndIf
        \EndIf
        \If{$qu=1$}
            \If{$p_1$ has neighbour}
                \State $Send qu=approve;$
            \Else
                \State $Send qu=reject;$
            \EndIf
        \EndIf
        \If{$qu=2$}
            \If{$p_2$ has neighbour}
                \State $Send qu=approve;$
            \Else
                \State $Send qu=reject;$
            \EndIf
        \EndIf
        \If{$qu=3$}
            \If{$p_3$ has neighbour}
                \State $Send qu=approve;$
            \Else
                \State $Send qu=reject;$
            \EndIf
        \EndIf
        \State $Send flag$ to $p_1;$
    \EndIf
\EndWhile
\EndProcedure
\end{algorithmic}
\end{algorithm}

\begin{theorem}
The Compact Line algorithm solves the Line Formation problem for any compact starting shape without breaking the connectivity.
\end{theorem}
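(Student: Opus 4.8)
The plan is to prove correctness by a loop-invariant argument over the three macro-phases visible in the pseudocode, combined with a strictly decreasing potential. First I would isolate the \emph{orientation phase} (the block executed while $r=1$), during which the leader floods a common coordinate frame to all nodes and nobody moves; connectivity and compactness are trivially preserved, so it only remains to argue that after $O(1)$ rounds every non-leader has fixed its port relabelling consistently with the leader. Next comes the \emph{endpoint phase} (the block with $\mathit{buck}=0$): here the leader uses the propagated $\mathit{tick}$ and $\mathit{num}$ values to identify an extreme cell of the shape — the cell at which the spanning line will be anchored, analogous to the node $u$ in Theorem~\ref{the:universality-rot-sl} — and walks there without displacing any other node, so the invariants are again trivially maintained. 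The substance of the proof is the \emph{main loop} ($\mathit{buck}=1$), and for it I would state the invariant: at the start of each iteration the configuration decomposes as $S\cup L$, where $S$ is a connected \emph{compact} ``core'', $L$ is a straight partial line growing from the fixed anchor into the otherwise-empty halfplane (so $S\cup L$ is itself compact), $S\cup L$ is connected, $|S|+|L|=n$, and $|L|$ has grown by one since the previous iteration.

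The core of the argument is to show that one iteration re-establishes this invariant, which requires two things. First, using only local queries — exactly the $\mathit{qu}$ handshakes that test whether ports $p_0,\dots,p_3$ of a node are occupied, together with the degree count — the leader can always locate a node $v$ of $S$ that is simultaneously not a cut node and not a ``potential hole node''. Existence of such a $v$ would follow by strengthening the lemma ``in compact shapes there is always a lib movement that does not break connectivity'': if $S$ has a degree-$1$ node it is safe; otherwise every node of $S$ has at least two neighbours, so by the cycle lemma $S$ contains a cycle, and one picks a boundary node on that cycle whose deletion neither disconnects nor punctures $S$ — which is precisely the condition the leader's tests certify. Second, the leader routes $v$ along the boundary of $S$ to the tip of $L$ by lib rotations and slidings; that this walk is always enabled and keeps the shape connected is the reasoning of Lemma~\ref{lem:walk-cell-perimeter-1}, and compactness is preserved throughout because $v$ leaves a non-hole position and is re-attached at the convex tip of $L$, the only delicate case being a one-cell-wide boundary ``tunnel'', which is short-cut exactly as in the proof of Theorem~\ref{the:general-pipelining}.

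Finally I would close with progress and termination. Taking the quantity $\Phi=|S|$, each main-loop iteration decreases $\Phi$ by exactly one while preserving the invariant, so after at most $n-1$ iterations $|S|=1$; by the invariant and the choice of anchor that last core node is already incident to $L$, hence the terminal configuration has all $n$ nodes on a single grid line — equivalently, every node lies on the $x$-axis or the $y$-axis — which is what the Line Transformation Problem demands. Since every displacement used is a lib rotation or sliding and, by the previous paragraph, no intermediate configuration was ever disconnected, the algorithm solves the problem for any compact starting shape without breaking connectivity.

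The hard part, I expect, will be the middle paragraph: the lemma already proved only guarantees a \emph{connectivity}-preserving lib move, so I would actually have to establish the stronger statement that a connected compact shape always admits a lib move that also preserves \emph{compactness}, and then show that the node singled out by the leader's finite local tests meets both conditions — and, crucially, that the \emph{entire} boundary walk of that node (not merely its first step) never transiently opens a hole. A secondary, bookkeeping-heavy obstacle is certifying that the distributed control flow — the interplay of $\mathit{flag}$, $\mathit{qu}$, $\mathit{phase}$, and $\mathit{label}$ between the leader and the non-leaders — faithfully realises the informal loop ``find a safe node, carry it to the tip of the line, repeat'' and never triggers two conflicting moves in the same round.
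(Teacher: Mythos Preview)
Your overall architecture --- three phases, a loop invariant $S\cup L$ with $|S|$ strictly decreasing, and a final count --- is sound and in fact considerably more careful than the paper's own argument, which simply checks (i) that the first two phases move no node, (ii) that the leader's local tests (degree~$1$, or two adjacent neighbours sharing a third common neighbour) certify a non-bridge, (iii) that such a node always exists in a compact shape, and (iv) that the travelling node eventually reaches the line and its placement at the right end cannot create a hole. The paper never states an explicit $S\cup L$ invariant or a potential.

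There is, however, a real mismatch between your proof and the algorithm it is meant to certify. You route the freed node $v$ to the tip of $L$ by a \emph{deterministic} clockwise walk on the cell-perimeter, invoking Lemma~\ref{lem:walk-cell-perimeter-1} and the tunnel shortcut from Theorem~\ref{the:general-pipelining}. But the \textsc{Compact-Line} pseudocode does not do this: once $label=3$ and $phase=1$ the leader simply issues \texttt{travel}, and the accompanying description says the node ``moves in a random fashion'' until it receives $flag'=1$. The paper's own correctness argument is explicitly probabilistic at this point (``since the moving node is moving in random it will not get stuck in a loop; the rightmost node of the line is always accessible''). Consequently your claim that $|L|$ grows by exactly one \emph{per iteration} does not hold as stated --- there is no deterministic iteration boundary --- and the perimeter-walk lemmas you cite, while true, are not what the algorithm executes. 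If you want to keep your cleaner invariant framework, you must either (a) reinterpret an ``iteration'' as the random interval between successive attachments and replace the per-step decrease of $\Phi$ by an almost-sure eventual decrease, or (b) argue that the random \texttt{travel} is confined to the cell-perimeter and therefore \emph{simulates} the walk of Lemma~\ref{lem:walk-cell-perimeter-1}, which requires an extra lemma you have not stated.

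Your self-identified ``hard part'' is well aimed. The paper's lemma only gives a connectivity-preserving lib move, and its proof does not explicitly verify that the two local tests also rule out creating a hole; you are right that a stronger statement is needed. For the algorithm as written this is actually easy --- a degree-$1$ node cannot border a hole, and a node in a $2\times 2$ block leaves an $L$-shape, so removal preserves compactness --- but the transient positions during random \texttt{travel} are another matter: a wandering node can momentarily close off a bay and create a hole in $S\cup\{v\}$, so your worry that ``the entire boundary walk \dots never transiently opens a hole'' is exactly the gap in both your sketch and the paper's proof. If you address it, do so for the randomized motion the algorithm actually performs, not for the deterministic perimeter walk.
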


\emph{Algorithm Description}: The operation of the algorithm is split into $3$ stages. The orientation stage$(1)$, the line marking stage$(2)$ and the movement stage$(3)$.
The first stage consists of the leader starting from a random node. It sets the orientation for the current node by marking ports $0$,$1$,$2$,$3$ as ``north'', ``east'', ``south'', ``west'' respectively. It then sends the orientation to all neighbours. All nodes receiving the orientation change their ports to coincide with the one the leader defined, and then propagate the message to their neighbours. Once a node sets its orientation once, it ignores all messages that concern it.

In the second stage the leader searches for the rightmost node. It begins by broadcasting two messages to all neighbours: \emph{tick} and \emph{num}. The tick message consists of the direction the message was sent to. The \emph{num} message is a number which starts as $0$ and each time it is propagated through nodes, we add the following number: $0$ for \emph{north} , $+1$ for \emph{east}, $0$ for \emph{south} and $-1$ for \emph{west}. When a non leader node receives these messages, it propagates them to its neighbours after appending \emph{up}, \emph{right}, \emph{down}, \emph{left}, for neighbours $0$,$1$,$2$,$3$ respectively, to the \emph{tick} message and after adding the number to the \emph{num} message following the method mentioned above. The node also sends a message called \emph{ack} to the node who sent the \emph{tick} and \emph{num}. It then stores the node (\emph{path node}) who sent the \emph{tick} and \emph{marks} himself. When a marked node receives a \emph{tick-num} message it sends them to the \emph{path node} along with an \emph{ack} message. When the leader receives a \emph{num-tick} message, it compares the \emph{num} it received with the \emph{num$^\prime$} it has in store. If the one it received is bigger, it replaces the \emph{num$^\prime$} with \emph{num} and keeps the \emph{tick$^\prime$} message it received. Now, if the leader does not receive an \emph{ack} for two consecutive rounds it starts following the path it has stored in the variable named \emph{line}. Once it reaches the destination it marks the current node and starts moving \emph{west}, marking all nodes in its path. It then returns to the node it marked first. The leader has now marked a designated line where it will move all other nodes to. This ends phase $2$.

The third stage consists of a loop being performed until all nodes form a line. Loop description: The leader moves randomly to nodes checking if they are on the correct line ($flag=1$). If it finds one and receives a message (flag$^\prime$=1), the leader marks it. If it finds one does not receive a message (flag=1), it checks two things. First it checks if the node has only one neighbour. Secondly it checks if the node has two neighbours not opposite to each other. If it does complete the second requirement, it sends a \emph{qu} message to one of them asking it if the $2$ nodes who are neighbours to it (the leader), have another common neighbour. The node then answers \emph{approve} or \emph{reject}. If any of those two checks are true (one neighbour, approve) the leader moves in a random fashion. Once it receives a message \emph{flag$^\prime$=1}, it marks the node. That finishes the loop description.

\begin{proof}
The goal is to show that any compact shape will always end up in a line. We need to show that the connectivity will be preserved throughout the transformation and the shape will not get stuck in a shape which is not a line.

The first two phases cannot break the connectivity because no movement takes place. The leader is moving between nodes. The third phase consists of the loop. The loop has three phases. The lib movement, the search for the line, and the placement on the line. Observe that if a node performs a lib movement and iy does not break the connectivity, all its subsequent moves will not break it as long as no other node has moved. This guarantees that if the lib movement preserves the connectivity, the second and third phase will preserve it as well. Now in both checks performed by the algorithm, it is ensured that the node has no bridges so any movement it performs will not break the connectivity.

The first two phases will not turn the shape into an undesired one, because no movement takes place. All we have to show is that the loop won't do it as well.
The first case is whether a lib movement will always be available. Observe that there is no shape where all nodes have $3$ or $4$ neighbours. In this protocol we have a lib movement available when a node has only $1$ neighbour or when a cluster of $4$ nodes creates a square. Suppose that there is a shape that has no lib movements available. Every node has at least $2$ neighbours. Such a shape can only be a cycle consisting of all nodes. This cycle is either a non compact shape or it consists of only $4$ nodes. But both of those circumstances are prohibited because we have established that we are talking about compact shapes where there are no lib movements. Thus a lib movement is always available.

The second case is whether the moving node will always find the right point of rightmost node of the line. Since the moving node is moving in random it will not get stuck in a loop. The rightmost node of the line is always accessible from any node on the shape.

The third problem is whether the move on the rightmost part of the line creates a non compact shape. Since the line begins from the rightmost part of the shape, each time a node moves east of it, it has only one neighbour which is the rest of the line. So it cannot create a non-compact shape.
\end{proof}

\begin{theorem}
The Line Transform algorithm solves the Line Formation problem for any starting shape without breaking the connectivity.
\end{theorem}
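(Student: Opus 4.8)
The plan is to reduce this statement to the preceding theorem on the Compact-Line algorithm. The natural strategy, which the Line Transform algorithm follows, is to proceed in two stages: a \emph{compactification} stage that turns the arbitrary connected input shape into a compact shape (one with no holes), followed by an invocation of Compact-Line, which we have already shown transforms any compact shape into a line, all without ever breaking connectivity. Since a line has no holes and lies entirely on one axis, the only parts that need a new argument are (i) the compactification stage and (ii) the fact that composing it with Compact-Line still preserves connectivity throughout the transformation.

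For compactification, the key point is that a pure lib-movement strategy provably cannot succeed here (this is exactly the impossibility proposition proved above), so the leader must first gather non-local information. First I would have the leader perform a traversal of the shape, reusing the orientation/port-relabelling mechanism and the \emph{tick}/\emph{num} message scheme already used in Compact-Line, to locate the holes and, for each hole, the nodes of the shape on its boundary. Fix a hole $H$. Its boundary is a closed curve, so by the argument of Lemma~\ref{lem:r-external-surface} applied to this inner boundary (its extended version either contains a cycle of length at least $4$ or is a tree with at least two leaves) there is a boundary node whose removal does not disconnect the shape, and in particular a place where the wall separating $H$ from the exterior is only one node thick. The leader navigates there and ``opens a door'': it moves that one-thick wall node by a single rotation or sliding that keeps it adjacent to the rest of the wall, so that the region of $H$ becomes connected to the exterior through the resulting gap and hence becomes visible on the (new) cell-perimeter. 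Then the leader repeatedly picks a removable node of the external surface (which exists by Lemma~\ref{lem:r-external-surface}) and routes it through the gap into $H$ by walking it along the cell-perimeter of the current shape using rotations and slidings, exactly as in the universality construction of Theorem~\ref{the:universality-rot-sl}; once $H$ is filled the door is closed. Each node placed inside $H$ strictly decreases the number of empty cells enclosed by the perimeter, so after finitely many moves no holes remain and the shape is compact (and if filling $H$ momentarily splits its empty region into two sub-holes, the potential still strictly drops, so termination is unaffected and the leader just continues with whatever holes are left).

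For connectivity preservation, every individual move above is either a move of a node that was chosen, via Lemma~\ref{lem:r-external-surface} or its inner-boundary analogue, to be non-cut and that remains adjacent to the shape after the move, or a move of a node walking the cell-perimeter, which never detaches it from the rest of the shape (as in Lemmas~\ref{lem:walk-cell-perimeter-1} and~\ref{lem:walk-cell-perimeter-2}). Hence the compactification stage preserves connectivity at every step, and since Compact-Line does as well (preceding theorem), so does their composition. Finally Compact-Line leaves all nodes on a single horizontal line, which is a valid output for the Line Transformation Problem, completing the argument.

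The main obstacle is the compactification stage, and inside it the distributed ``open the door / route nodes in / close the door'' subroutine: one has to show it can be realized with only synchronous broadcasts to the (at most four) neighbours and a constant (or at least sublinear) amount of per-node memory, that the leader can always find and navigate to a one-thick segment of a hole's wall, that routing nodes through the gap never re-seals it prematurely nor creates a new enclosed region that traps a routed node, and that the chosen potential really does decrease monotonically (in particular that vacating the door cell and the source cells of routed nodes does not manufacture new holes faster than old ones are filled). Spelling out the degenerate cases, such as a hole that is a single cell, nested holes, or a hole whose wall is one-thick everywhere, is where the bookkeeping is heaviest, and is the part of the proof that must be carried out in full detail.
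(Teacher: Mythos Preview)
Your proposal has a basic mismatch with the object it is supposed to be proving something about. The theorem is about the specific \textsc{Line-Transform} algorithm whose pseudocode is given in the paper, and that algorithm does \emph{not} proceed by first compactifying the shape and then running \textsc{Compact-Line}. If you compare the two pseudocodes, \textsc{Line-Transform} keeps exactly the same three stages (orientation, line marking, movement loop) as \textsc{Compact-Line}; there is no hole-filling phase anywhere. The difference is entirely in stage~3: the local $qu$ check of \textsc{Compact-Line} (ask one neighbour whether two adjacent nodes share another common neighbour) is replaced by a \emph{global} check. In \textsc{Line-Transform} the $qu$ message is flooded through the network (non-leaders rebroadcast it to all ports and maintain a $path$ pointer and a $clear$ mechanism), and the leader waits to see whether the flood returns to it through the other neighbour, i.e., whether the current node lies on a cycle of the shape and hence is not a cut vertex. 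The new ``two opposite neighbours'' branch is there precisely because in a non-compact shape such a node can be a bridge, which the local test of \textsc{Compact-Line} cannot detect; the global flood can. So a correctness argument for \textsc{Line-Transform} should be about this global bridge/cycle test, not about a compactification subroutine that the algorithm does not contain.

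As for comparing with the paper's own proof: the paper in fact gives no proof of this theorem at all---after the theorem statement there is only the pseudocode and then the Conclusions section. So there is nothing to compare against, but that does not rescue your plan: what you have written is a (partial, and by your own admission gappy) correctness argument for a different algorithm. If you want to salvage the write-up, redo it around the actual mechanism: show that the flooded $qu$ query correctly identifies whether the leader's current node is a cut vertex of the present shape, that therefore the leader only ever moves non-cut nodes, and that progress is still guaranteed because in any connected shape that is not yet a line some non-cut node off the marked line exists. The ``open a door / route nodes into holes'' machinery can be dropped entirely.
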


\begin{algorithm}
\begin{algorithmic}[1]
\Procedure{Line-Transform}{Leader}
\State $label\gets 0$, $phase\gets 0$, $check\gets 0$, $state\gets 0$, $buck\gets 0$, $num\gets 0$, $line\gets 0$
\While{$r=1$}
    \State $p_0\gets north$, $p_1\gets east$, $p_2\gets south$, $p_3\gets west$, $orien(node)=1;$
    \State $Send$ $north$, $east$, $south$, $west$ to $p_0$, $p_1$, $p_2$, $p_3$ respectively$;$
\EndWhile
\While{$r=2,3$ AND $buck=0$}
    \State $Send$ $tick=up, right, down, left$ $and$ $num=0,1,0,-1$ to $p_0$, $p_1$, $p_2$, $p_3$ respectively$;$
    \State $Receive$ $(tick',num',ack');$
    \State $Do$ for every $tick',num'$ received$;$
        \If{$num'>num$}
            \State $line=tick';$ $num=num';$
        \EndIf
    \If{$ack'=null$}
        \State $state++;$
    \Else
        \State $state--;$
    \EndIf
    \If{$state\ge2$}
        \State $follow$ the path described on the line$;$
    \EndIf
    \If{line has been reached}
        \State $buck=1;$
    \EndIf
\EndWhile
\While{$r=2,3,..$ AND $buck=1$}
    \State $Receive(flag',qu');$
    \If {$label=0$}
        \State $move$ $west;$ $flag(node)=1;$
        \If{no node available west}
            \State $label++;$
        \EndIf
    \EndIf
    \If {$label=1$}
        \State $move$ $east;$  $flag(node)=1$
        \If{no node available east}
            \State $label++;$
        \EndIf
    \EndIf
    \If($flag'=0$ AND $flag=0$)
        \If{$label=2$}
            \If{$neighbours=1$}
                \State $label++;$
            \ElsIf{$neighbours=2$ AND $neighbours$ not opposite}
                \If{$neighbours$ are $p_0,p_1$}
                    \State $send$ $qu=1$  to $p_0;$ $label++;$
                \EndIf
\algstore{myalg3}
\end{algorithmic}
\end{algorithm}

\begin{algorithm}
\begin{algorithmic}
\algrestore{myalg3}
                \If{$neighbours$ are $p_1,p_2$}
                    \State $send$ $qu=1$  to $p_1$; $label++;$
                \EndIf
                \If{$neighbours$ are $p_2,p_3$}
                    \State $send$ $qu=1$  to $p_2;$ $label++;$
                \EndIf
                \If{$neighbours$ are $p_3,p_0$}
                    \State $send$ $qu=1$  to $p_3;$  $label++;$
                \EndIf
            \ElsIf{$neighbour=2$ AND $neighbours opposite$}
                \If{neighbours are $p_0, p_2$}
                    \State $Send qu=2$ to $p_0$; $label++$;
                \EndIf
                \If{neighbours are $p_1, p_3$}
                    \State $Send qu=3$ to $p_1$; $label++$;
                \EndIf
            \Else
                \State $move;$
            \EndIf
        \EndIf
        \If{$label=3$ AND $phase=0$}
            \If{$ack'$ is null}
                \State $wait++;$
            \EndIf
            \If{$ack'$ is not null}
                \State $wait--;$
            \EndIf
            \If{$wait=2$}
                \State $label--;$ $move;$
            \EndIf
            \If{$qu'=k$ where k is the number of the port from where the qu' was received}
                \State $phase=1;$ $Send clear=1$ to all ports$;$
            \EndIf
        \EndIf
        \If{$label=3$ AND $phase=1$}
            \State $travel;$
        \EndIf
    \EndIf
    \If{$flag'=1$ OR $flag=1$}
        \State $flag=1;$ $phase=0;$ $label=2;$ $travel;$
    \EndIf
\EndWhile
\EndProcedure
\end{algorithmic}
\end{algorithm}

\begin{algorithm}
\begin{algorithmic}[1]
\Procedure{Line-Transform}{Non-Leader}
\State $flag\gets 0$, $orien\gets 0$, $mark\gets 0$
\While{$orien=0$}
    \State $Receive(north', south', east', west', tick', num', qu')$
    \If{$p_0$ receives $south$ OR $p_1$ receives $west$ OR $p_2$ receives $north$ OR $p_3$ receives $east$}
    \State $orien=1;$
    \EndIf
    \If{$p_0$ receives $east$ OR $p_1$ receives $south$ OR $p_2$ receives $west$ OR $p_3$ receives $north$}
    \State $p_n=p_{(n-1)mod3};$ $orien=1;$
    \EndIf
    \If{$p_0$ receives $north$ OR $p_1$ receives $east$ OR $p_2$ receives $south$ OR $p_3$ receives $swest$}
    \State $p_n=p_{(n-2)mod3};$ $orien=1;$
    \EndIf
    \If{$p_0$ receives $west$ OR $p_1$ receives $north$ OR $p_2$ receives $east$ OR $p_3$ receives $south$}
    \State $p_n=p_{(n-3)mod3};$ $orien=1;$
    \EndIf
    \State $Send north, east, south, west$ to $p_0, p_1, p_2, p_3$ respectively$;$
\EndWhile
\While{$orien=1$}
    \If{$mark=0$}
        \State $Receive(tick',num'); Send$ $tick=$conc$(tick',up)$, $tick=$conc$(tick',right)$, $tick=$conc$(tick',down)$, $tick=$conc$(tick',left)$ to $p_0, p_1, p_2 p_3$ respectively$;$
        \State $Send$ $num=num'+0$, $num=num'+1$, $num=num'+0$, $num=num'-1$ to $p_0, p_1, p_2, p_3$ respectively$;$
        \State $Send$ $ack=1$ to the node who sent you $tick$
        \State $mark=1;$ $path=m$ where m is the number of the port that received the $tick'-num';$ message$;$
    \EndIf
    \If{mark=1}
        \State $Receive(ack',tick',qu');$
        \If{$tick'$ was not null}
            \State $Send (ack=1)$ to $p_{path};$
        \ElsIf{$ack'$ was not null}
            \State $Send (ack=1)$ to $p_{path};$
        \EndIf
        \If{$qu'=0,1,2,3$}
            \State $path=m$ where m is the number of the port that received the $qu';$ message$;$
            \State $Send qu=qu'$ to all ports$;$ $Send ack=1$ to $port_{path};$
            $mark=2;$
        \EndIf
        \State $Send$ $flag$ to $p_1;$
    \EndIf
    \algstore{myalg4}

\end{algorithmic}
\end{algorithm}

\begin{algorithm}
\begin{algorithmic}
\algrestore{myalg4}
\If{$mark=2$}
        \State $Send flag east; Receive(ack',tick',clear');$
        \If{$ack'$ was not blank}
            \State $Send$ $ack=1$ to $p_{path}$
        \EndIf
        \If{$clear'=1$ from $p_{path}$}
            \State$Send clear=1$ to all ports$;$ $path=null;$ $mark=1;$
        \EndIf
    \EndIf
\EndWhile
\EndProcedure
\end{algorithmic}
\end{algorithm}

\section{Conclusions and Further Research}
\label{sec:conclusions}

There are many open problems related to the findings of the present work. We here restricted attention to the two extremes, in which the transformation either preserves connectivity or is free to break it arbitrarily. A compromise could be to allow some restricted degree of connectivity breaking, like necessarily restoring it in at most $k\geq 0$ steps (a special case of this had been already proposed as an open question in \cite{DP04}). There are other meaningful ``good'' properties that we would like to maintain throughout a transformation. An interesting example, is the \emph{strength} of the shape. One of the various plausible definitions is as the minimum strength sub-shape of the shape (i.e., its weakest part; could possibly be captured by some sort of minimum geometric cuts). Then, a strength-preserving transformation would be one that reaches the target shape while trying to maximize this minimum.

In the transformations considered in this paper, there was no \emph{a priori} constraint on the maximum area that a transformation is allowed to cover or on the maximum dimensions that its intermediate shapes are allowed to have. It seems in general harder to achieve a particular transformation if any of these restrictions is imposed. For example, the generic transformation of Theorem \ref{the:rotation-generic} requires some additional space below the shape and the transformations of Theorems \ref{the:universality-rot-sl} and \ref{the:general-pipelining} convert any shape first to a spanning line, whose maximum dimension is $n$, even though the original shape could have a maximum dimension as small as $\sqrt{n}$. Another interesting fact about restricting the boundaries is that in this way we get models equivalent to several interesting puzzles. For example, if the nodes are labeled, the initial shape is a square with a single empty cell, and the boundaries are restricted to the dimensions of the square, we get a generalization of the famous 15-puzzle (see, e.g., \cite{De01} for a very nice exposition of this and many more puzzles and 2-player games). Techniques developed in the context of puzzles could prove valuable for analyzing and characterizing discrete programmable matter systems.

We intentionally restricted attention to very minimal actuation mechanisms, namely rotation and sliding. More sophisticated mechanical operations would enable a larger set of transformations and possibly also reduce the time complexity. Such examples, could be the ability of a node to become inserted between two neghboring nodes (while pushing them towards opposite directions). This could enable parallel mergings of two lines of length $n/2$ into a line of length $n$ in a single step (an, thus, for example, transforming a square to a line in polylogarithmic time). Another, is the capability of rotating whole lines of nodes (like rotating arms, see, e.g., \cite{WCG13}).

There are also some promising specific technical questions. We do not yet know what is the complexity of {\sc RotC-Transformability}. The fact that a 6-seed is capable of transfering pairs of nodes to desired positions, suggests that shapes having such a seed in their exterior or being capable of self-extracting such a seed, will possibly be able to transform to each other. Even if this turns out to be true, it is totally unclear whether transformations involving at least one of the rest of the shapes are feasible. 

Moreover, we didn't study the problem of computing or approximating the optimum transformation. It seems that the problem is computationally hard. A possible approach to prove $\rem{NP}$-hardness would be by proving $\rem{NP}$-hardness of {\sc Rectilinear Graphic TSP} (could be via a reduction from {\sc Rectilinear Steiner Tree} or {\sc Rectilinear TSP}, which are both known to be $\rem{NP}$-complete \cite{GGJ76}) and then giving a reduction from that problem to the problem of a 2-seed exploring a set of locations on the grid.
  
Finally, regarding the distributed transformations, there are various interesting variations of the model considered here, that would make sense. One of them is to assume nodes that are oblivious w.r.t. their orientation.

\newpage

%%
%% Bibliography
%%

%% Either use bibtex (recommended), but commented out in this sample

%\bibliography{dummybib}

\begin{thebibliography}{10}

\bibitem{ABD13}
Greg Aloupis, Nadia Benbernou, Mirela Damian, Erik~D Demaine, Robin Flatland,
  John Iacono, and Stefanie Wuhrer.
\newblock Efficient reconfiguration of lattice-based modular robots.
\newblock {\em Computational geometry}, 46(8):917--928, 2013.

\bibitem{AADFP06}
Dana Angluin, James Aspnes, Zo{\"e} Diamadi, Michael~J. Fischer, and Ren{\'e}
  Peralta.
\newblock Computation in networks of passively mobile finite-state sensors.
\newblock {\em Distributed Computing}, 18(4):235--253, March 2006.

\bibitem{AAER07}
Dana Angluin, James Aspnes, David Eisenstat, and Eric Ruppert.
\newblock The computational power of population protocols.
\newblock {\em Distributed Computing}, 20(4):279--304, November 2007.

\bibitem{BKR04}
Zack Butler, Keith Kotay, Daniela Rus, and Kohji Tomita.
\newblock Generic decentralized control for lattice-based self-reconfigurable
  robots.
\newblock {\em The International Journal of Robotics Research}, 23(9):919--937,
  2004.

\bibitem{CLS11}
Xuli Chen, Li~Li, Xuemei Sun, Yanping Liu, Bin Luo, Changchun Wang, Yuping Bao,
  Hong Xu, and Huisheng Peng.
\newblock Magnetochromatic polydiacetylene by incorporation of fe3o4
  nanoparticles.
\newblock {\em Angewandte Chemie International Edition}, 50(24):5486--5489,
  2011.

\bibitem{CFPS03}
Mark Cieliebak, Paola Flocchini, Giuseppe Prencipe, and Nicola Santoro.
\newblock Solving the robots gathering problem.
\newblock In {\em International Colloquium on Automata, Languages, and
  Programming}, pages 1181--1196. Springer, 2003.

\bibitem{CKLL09}
Alejandro Cornejo, Fabian Kuhn, Ruy Ley-Wild, and Nancy Lynch.
\newblock Keeping mobile robot swarms connected.
\newblock In {\em Proceedings of the 23rd international conference on
  Distributed computing}, DISC'09, pages 496--511, Berlin, Heidelberg, 2009.
  Springer-Verlag.

\bibitem{DFSY15}
Shantanu Das, Paola Flocchini, Nicola Santoro, and Masafumi Yamashita.
\newblock Forming sequences of geometric patterns with oblivious mobile robots.
\newblock {\em Distributed Computing}, 28(2):131--145, April 2015.

\bibitem{De01}
Erik~D Demaine.
\newblock Playing games with algorithms: Algorithmic combinatorial game theory.
\newblock In {\em International Symposium on Mathematical Foundations of
  Computer Science}, pages 18--33. Springer, 2001.

\bibitem{DDGR14}
Zahra Derakhshandeh, Shlomi Dolev, Robert Gmyr, Andr{\'e}a~W Richa, Christian
  Scheideler, and Thim Strothmann.
\newblock Brief announcement: amoebot--a new model for programmable matter.
\newblock In {\em Proceedings of the 26th ACM symposium on Parallelism in
  algorithms and architectures (SPAA)}, pages 220--222, 2014.

\bibitem{DGPR16}
Zahra Derakhshandeh, Robert Gmyr, Alexandra Porter, Andr{\'e}a~W Richa,
  Christian Scheideler, and Thim Strothmann.
\newblock On the runtime of universal coating for programmable matter.
\newblock In {\em International Conference on DNA-Based Computers}, pages
  148--164. Springer, 2016.

\bibitem{DGR15}
Zahra Derakhshandeh, Robert Gmyr, Andr{\'e}a~W Richa, Christian Scheideler, and
  Thim Strothmann.
\newblock An algorithmic framework for shape formation problems in
  self-organizing particle systems.
\newblock In {\em Proceedings of the Second Annual International Conference on
  Nanoscale Computing and Communication}, page~21. ACM, 2015.

\bibitem{Do12}
David Doty.
\newblock Theory of algorithmic self-assembly.
\newblock {\em Communications of the ACM}, 55:78--88, 2012.

\bibitem{Do14}
David Doty.
\newblock Timing in chemical reaction networks.
\newblock In {\em Proc. of the 25th Annual ACM-SIAM Symp. on Discrete
  Algorithms (SODA)}, pages 772--784, 2014.

\bibitem{DP04}
Adrian Dumitrescu and J{\'a}nos Pach.
\newblock Pushing squares around.
\newblock In {\em Proceedings of the twentieth annual symposium on
  Computational geometry}, pages 116--123. ACM, 2004.

\bibitem{DSY04a}
Adrian Dumitrescu, Ichiro Suzuki, and Masafumi Yamashita.
\newblock Formations for fast locomotion of metamorphic robotic systems.
\newblock {\em The International Journal of Robotics Research}, 23(6):583--593,
  2004.

\bibitem{DSY04b}
Adrian Dumitrescu, Ichiro Suzuki, and Masafumi Yamashita.
\newblock Motion planning for metamorphic systems: Feasibility, decidability,
  and distributed reconfiguration.
\newblock {\em IEEE Transactions on Robotics and Automation}, 20(3):409--418,
  2004.

\bibitem{EU16}
Yuval Emek and Jara Uitto.
\newblock Dynamic networks of finite state machines.
\newblock In {\em International Colloquium on Structural Information and
  Communication Complexity}, pages 19--34. Springer, 2016.

\bibitem{FPS12}
Paola Flocchini, Giuseppe Prencipe, and Nicola Santoro.
\newblock Distributed computing by oblivious mobile robots.
\newblock {\em Synthesis lectures on distributed computing theory},
  3(2):1--185, 2012.

\bibitem{GGJ76}
Michael~R Garey, Ronald~L Graham, and David~S Johnson.
\newblock Some {NP}-complete geometric problems.
\newblock In {\em Proceedings of the eighth annual ACM symposium on Theory of
  computing}, pages 10--22. ACM, 1976.

\bibitem{GKR10}
Kyle Gilpin, Ara Knaian, and Daniela Rus.
\newblock Robot pebbles: One centimeter modules for programmable matter through
  self-disassembly.
\newblock In {\em Robotics and Automation (ICRA), 2010 IEEE International
  Conference on}, pages 2485--2492. IEEE, 2010.

\bibitem{HD05}
Robert~A Hearn and Erik~D Demaine.
\newblock {PSPACE}-completeness of sliding-block puzzles and other problems
  through the nondeterministic constraint logic model of computation.
\newblock {\em Theoretical Computer Science}, 343(1-2):72--96, 2005.

\bibitem{Jo93}
Camille Jordan.
\newblock {\em Cours d'analyse de l'{\'E}cole polytechnique}, volume~1.
\newblock Gauthier-Villars et fils, 1893.

\bibitem{KCL12}
Ara~N Knaian, Kenneth~C Cheung, Maxim~B Lobovsky, Asa~J Oines, Peter
  Schmidt-Neilsen, and Neil~A Gershenfeld.
\newblock The milli-motein: A self-folding chain of programmable matter with a
  one centimeter module pitch.
\newblock In {\em 2012 IEEE/RSJ International Conference on Intelligent Robots
  and Systems}, pages 1447--1453. IEEE, 2012.

\bibitem{KKM10}
Evangelos Kranakis, Danny Krizanc, and Euripides Markou.
\newblock The mobile agent rendezvous problem in the ring.
\newblock {\em Synthesis Lectures on Distributed Computing Theory},
  1(1):1--122, 2010.

\bibitem{Ku55}
Harold~W Kuhn.
\newblock The hungarian method for the assignment problem.
\newblock {\em Naval research logistics quarterly}, 2(1-2):83--97, 1955.

\bibitem{LYS01}
Yunfeng Lu, Yi~Yang, Alan Sellinger, Mengcheng Lu, Jinman Huang, Hongyou Fan,
  Raid Haddad, Gabriel Lopez, Alan~R Burns, Darryl~Y Sasaki, et~al.
\newblock Self-assembly of mesoscopically ordered chromatic
  polydiacetylene/silica nanocomposites.
\newblock {\em Nature}, 410(6831):913--917, 2001.

\bibitem{MS16a}
Othon Michail and Paul~G. Spirakis.
\newblock Simple and efficient local codes for distributed stable network
  construction.
\newblock {\em Distributed Computing}, 29(3):207--237, 2016.

\bibitem{RW00}
Paul W.~K. Rothemund and Erik Winfree.
\newblock The program-size complexity of self-assembled squares.
\newblock In {\em Proceedings of the 32nd annual ACM symposium on Theory of
  computing (STOC)}, pages 459--468, 2000.

\bibitem{Ro06}
Paul~WK Rothemund.
\newblock Folding dna to create nanoscale shapes and patterns.
\newblock {\em Nature}, 440(7082):297--302, 2006.

\bibitem{RCN14}
Michael Rubenstein, Alejandro Cornejo, and Radhika Nagpal.
\newblock Programmable self-assembly in a thousand-robot swarm.
\newblock {\em Science}, 345(6198):795--799, 2014.

\bibitem{Sa70}
Walter~J Savitch.
\newblock Relationships between nondeterministic and deterministic tape
  complexities.
\newblock {\em Journal of computer and system sciences}, 4(2):177--192, 1970.

\bibitem{SMO16}
Masahiro Shibata, Toshiya Mega, Fukuhito Ooshita, Hirotsugu Kakugawa, and
  Toshimitsu Masuzawa.
\newblock Uniform deployment of mobile agents in asynchronous rings.
\newblock In {\em Proceedings of the 2016 ACM Symposium on Principles of
  Distributed Computing}, pages 415--424. ACM, 2016.

\bibitem{SY99}
Ichiro Suzuki and Masafumi Yamashita.
\newblock Distributed anonymous mobile robots: Formation of geometric patterns.
\newblock {\em SIAM J. Comput.}, 28(4):1347--1363, March 1999.

\bibitem{Wi98}
Erik Winfree.
\newblock {\em Algorithmic Self-Assembly of DNA}.
\newblock PhD thesis, California Institute of Technology, June 1998.

\bibitem{WCG13}
Damien Woods, Ho-Lin Chen, Scott Goodfriend, Nadine Dabby, Erik Winfree, and
  Peng Yin.
\newblock Active self-assembly of algorithmic shapes and patterns in
  polylogarithmic time.
\newblock In {\em Proceedings of the 4th conference on Innovations in
  Theoretical Computer Science}, pages 353--354. ACM, 2013.

\bibitem{YUY16}
Yukiko Yamauchi, Taichi Uehara, and Masafumi Yamashita.
\newblock Brief announcement: pattern formation problem for synchronous mobile
  robots in the three dimensional euclidean space.
\newblock In {\em Proceedings of the 2016 ACM Symposium on Principles of
  Distributed Computing}, pages 447--449. ACM, 2016.

\bibitem{YSS07}
Mark Yim, Wei-Min Shen, Behnam Salemi, Daniela Rus, Mark Moll, Hod Lipson, Eric
  Klavins, and Gregory~S Chirikjian.
\newblock Modular self-reconfigurable robot systems [grand challenges of
  robotics].
\newblock {\em IEEE Robotics \& Automation Magazine}, 14(1):43--52, 2007.

\end{thebibliography}

%% .. or use bibitems explicitely

%\nocite{}

\end{document}